\documentclass[12pt, a4paper]{article}

\newcommand{\NoDataAvailability}{}
\newcommand{\ForthcomingNote}{\\[0.6em] \normalsize Forthcoming, \emph{Review of Economic Studies}}

\usepackage{natbib}
\usepackage{graphicx}
\usepackage[margin=1.00in]{geometry}
\usepackage{float}                
\usepackage{amsmath}
\usepackage{amsfonts}
\usepackage{amssymb}
\usepackage{bbm}
\usepackage{booktabs}
\usepackage{lscape}
\usepackage{enumerate}
\usepackage{caption}
\usepackage{subcaption}
\usepackage{setspace}
\usepackage{dsfont}
\usepackage{amsthm}
\usepackage{lmodern}
\usepackage{bm}

\usepackage{colortbl}

\newtheorem{theorem}{Theorem}
\newtheorem{prop}[theorem]{Proposition} 
\newtheorem{assumption}{Assumption} 
\newtheorem{lem}{Lemma}

\usepackage[hyperfootnotes=true]{hyperref}

\hypersetup{
     pdftitle={Structured Payment in Pawnshop Borrowing},    %
     pdfauthor={Francis J. DiTraglia, Craig McIntosh, Isaac Meza, Joyce Sadka, Enrique Seira}     %
}

\usepackage{bookmark}

\usepackage{bibunits}
\defaultbibliographystyle{authordate1}
\defaultbibliography{References}

\usepackage{authblk}
\makeatletter
\renewcommand\AB@affilsepx{, }
\makeatother

\begin{document}

\title{Structured Payment in Pawnshop Borrowing: Mandates vs.\ Choice\thanks{We thank Johannes Abeler, Jose Maria Barrero, Chen Huang, Camilo García-Jimeno, Andrei Gomberg, Emilio Gutierrez, Anett John, David Laibson, Aprajit Mahajan, Matt Rabin, Mauricio Romero, Charlie Sprenger, Séverine Toussaert, Chris Udry, Jonathan Zinman, and seminar participants at Dartmouth, ITAM, MSU, UC Berkeley, UC Davis, UCSD, USC, and Wisconsin, who provided valuable feedback. Ricardo Olivares, Gerardo Melendez, and Alonso de Gortari provided excellent research assistance and Erick Molina helped with formatting.  Research assistance was financed through faculty grants at ITAM. Our research partner had no say in the design or results of the experiment.}}
\author[1]{Francis J.\ DiTraglia}
\author[2]{Craig McIntosh}
\author[3]{Isaac Meza}
\author[4]{Joyce Sadka}
\author[5]{Enrique Seira\thanks{Corresponding Author: \href{mailto:enrique.seira@gmail.com}{enrique.seira@gmail.com}}}
\affil[1]{\normalsize University of Oxford}
\affil[2]{\normalsize University of California San Diego}
\affil[3]{\normalsize Harvard University}
\affil[4]{\normalsize ITAM}
\affil[5]{\normalsize University of Notre Dame}

\providecommand{\ForthcomingNote}{}
\date{August 1, 2026 \ForthcomingNote \\[2 cm]}

\maketitle
\vspace{-0.75in}

\begin{abstract}
\singlespacing
Pawn loans offer borrowers a substantial degree of repayment flexibility in exchange for a harsh penalty in case of default: forfeit of collateral worth more than the loan amount along with any payments made toward recovery. 
Using a large RCT conducted in Mexico City, we document key stylized facts about pawn lending and explore the merits of replacing flexibility with structured repayment contracts in this important but understudied form of credit. Our experimental design includes a mandatory frequent-payments arm, a (status quo) flexible payments arm, and a choice between the two.  This design point-identifies not only the average treatment effect, but also the effects of treatment on the treated and the untreated along with the average selection on gains, allowing a rigorous study of mandates versus choice. Although the average treatment effect of assigning borrowers to structured payments is a 19\% decrease in their financial cost and a 17.5\% decrease in the probability of default, only 11\% of borrowers choose structured repayment contracts voluntarily. We show that structured repayment reduces financial costs for nearly all borrowers, including those who would not freely choose it, and find no evidence of selection on gains in cost savings.

\end{abstract}

\noindent \textbf{Keywords: } Mandates versus choice, paternalism, treatment on the untreated, payment frequency, overconfidence.\\ \textbf{JEL codes:} G41, C93, O16, G21

\begin{bibunit}
\newpage

\section{Introduction}

Pawn lending is one of the oldest and most prevalent forms of credit in the world \citep{carter2012pawnshops}. Pawn loans are typically small and short-term. They involve no credit checks or proof of income, and are widely used by people without access to other forms of credit. Borrowers surrender highly liquid collateral as the pawn, typically gold, in return for a loan that is substantially smaller than the value of their pawn.  Pawn loans are highly flexible: there are no scheduled payments or prepayment penalties, and interest is charged only on the outstanding balance. But this flexibility comes at a cost.
Unlike mortgage lenders, who recover only the outstanding loan balance in foreclosure, pawn lenders keep the collateral's full value in default. Borrowers also forfeit any payments already made.
Pawn lending operates at substantial scale worldwide. In the US, more than 11,000 shops serve 30 million clients with \$14 billion in yearly revenues; in China, pawn lending is a \$43 billion industry.\footnote{Section \ref{context} provides the relevant sources for these statistics.}
In Mexico, where our study takes place, the scale of pawn lending rivals that of microfinance.\footnote{During the past three years a single large lender in Mexico made over 4 million loans to more than a million clients, compared to a total of 2.3 million microfinance clients in all of Mexico during 2009 \citep{Pedroza:2010}.}
In spite of this, pawnshops have received little attention in the literature to date. 

Pawn loans differ from conventional credit in a number of important ways. First, pawn loans have extremely high default rates: 44\% of loans made by our partner lender are not repaid, and 29\% of defaulters make payments before they default---losing this money along with their collateral. Second, lender profits are higher under default than repayment. An industry standard contract in Mexico City lends 70\% of the value of gold collateral at an interest rate of 7\% per month (compounded daily) for three months. Because gold is highly liquid, the profit from a borrower who defaults is 43\% of the loan amount compared to 22.5\% in interest if a loan is paid in full on the last day. Third, pawn borrowers are typically economically vulnerable and in our context are also naively optimistic: borrowers in our sample report a subjective probability of repayment that averages 92\%. The high degree of flexibility in these contracts would benefit borrowers if they were neoclassical agents, but the prevalence of costly default that raises financial costs for apparently naive borrowers but benefits lenders raises concerns about borrower welfare. Lending modalities with lower default rates, such as microfinance, typically rely on highly structured repayment. 

This paper asks whether structured payments can help pawnshop borrowers avoid default. To do so, we implement a unique multi-armed experiment covering around 4,500 pawnshop clients across six of our partner lender's Mexico City branches. Borrowers in the Control arm received the status quo pawn contract, described above, allowing us to document payment trajectories and default under standard pawn contracts. In contrast, borrowers in the ``Structured payments'' arm were required to make three monthly installment payments, each including the accrued interest at that time. These borrowers faced a fee of 2\% of that month's installment if the payment was delinquent. The fee could only be collected from borrowers who ultimately repaid their loans. Finally, borrowers in the ``Choice'' arm were allowed to choose between the Standard and Structured payments contracts. By juxtaposing mandatory contract assignment and choice, this design allows us to go beyond the average treatment effect (ATE) and point identify treatment effects \emph{separately} for borrowers who would voluntarily choose structured payments (choosers) and those who would not (non-choosers). This allows us to answer a fundamental question: do the ``right borrowers'' choose structured payments? More precisely: are reductions in financial costs largest for borrowers who would freely choose structured payments? 

Our experimental design, the ``Mandates versus Choice'' design, involves three comparisons. First, comparing the Structured payments and Control arms gives the ATE of structured payments: this comparison constitutes a standard randomized experiment with full compliance. A comparison of the Choice and Control arms, on the other hand, can be viewed as an experiment with one-sided non-compliance: assignment to the Choice arm exogenously shifts some borrowers (the choosers) into structured payments, since no one in the Control arm is treated. A comparison of the Choice and Structured payments arms can be viewed as a \emph{second} experiment with one-sided non-compliance: assignment to the Structured payments arm exogenously shifts some borrowers (the non-choosers) into structured payments, since everyone in the Structured payments arm is treated. Under an exclusion restriction, we can apply standard instrumental variables logic to our latter two comparisons. The Choice versus Control comparison identifies the effect of treatment on the treated (TOT), while the Structured payments versus Choice comparison identifies the effect of treatment on the untreated (TUT). This single experiment thus reveals the selection on gains (TOT$-$TUT) as well as the magnitude of impacts among non-choosers---the keys to understanding when voluntary policies suffice versus when mandates might be justified.

Our ATE results show that mandatory structured payments are strongly effective in lowering financial costs to borrowers and preventing default in pawnshop lending. The average borrower in the Structured payments arm is 7.7 percentage points less likely to default (17.5\% of the mean), and pays financing costs inclusive of fees that are 19\% lower than the control. 
In short, structured payments save borrowers money by charging them fees. The mechanism by which our monthly payment contract achieves these cost savings is intriguing: not only does it improve repayment and speed up the pace of payments on average, but it also decreases the amount of money paid by borrowers who ultimately default. While borrowers in the Structured payments arm are 6 percentage points more likely to borrow again from the same lender, a back-of-the-envelope calculation suggests that the structured payments contract nevertheless lowers lender profits by 12\%--19\% per borrower. It is thus unsurprising that structured payments contracts were unavailable in Mexico City pawnshops outside of our experiment: lenders profit from default.\footnote{Unlike most pawn lenders, our partner operates as a non-profit, donating all of their earnings to philanthropic causes.}

Despite these large financial cost savings, only 11\% of borrowers in the Choice arm choose structured payments. 
Does this low take-up reflect efficient selection into treatment---where ``choosers'' are the only borrowers who experience cost savings from structure despite the large ATE? 
Our estimated TUT effects suggest that the answer is no.
The causal effect for non-choosers is even larger than the ATE: structure lowers their financial costs by 192 pesos, a 6.8 percentage point reduction in cost as a proportion of the loan.
We cannot reject the null of no selection on gains.

Could our average TUT effect mask substantial heterogeneity, with some non-choosers incurring large financial costs when structure is imposed on them? We explore this question using two approaches. First, non-parametric bounds on the distribution of individual treatment effects, following \cite{fan2010sharp}, show that at least 28\% of borrowers see reduced costs---almost three times the 11\% who chose structure. Second, causal forest estimates using baseline survey data, following the approach of \cite{atheygrf}, reveal that only 1\% of borrowers have negative estimated conditional TUT effects. It is difficult to identify groups who would experience higher financial costs under mandatory structure, even among non-choosers, deepening the puzzle of low take-up.

So why do borrowers seem to be leaving money on the table?  
One explanation could be impatience: a sufficiently impatient borrower might prefer the status quo contract despite its higher probability of default. We find, however, that implausibly large discount rates (over 1000\%) would be needed to rationalize this preference.  An analysis that incorporates travel costs, the subjective value of pawns, and the opportunity cost of liquidity does not overturn the financial cost savings from the product. This motivates us to explore two behavioral mechanisms: present bias and overconfidence. Using measures from our baseline survey, we find that neither mechanism predicts take-up of structured payments, but subjective certainty about pawn recovery predicts the strength of the TUT effect. To better understand this finding, we provide a simple model illustrating how the state-contingent features of the structured payments contract could interact with overconfidence to generate heterogeneous TUT effects in the face of low take-up. While we cannot definitively identify the mechanism behind the mandate-choice gap that we document in our experiment, our results suggest that overconfident borrowers are unwilling to use a tool designed to prevent default because they incorrectly believe that they are not at risk. We discuss this interpretation in light of the existing literature, along with limitations of our present-bias measure and the potential relationship between overconfidence and present bias \citep{allcott2022high}.

We make several contributions to the literature. First, we identify \emph{both} the TOT and TUT effects in a single experiment, without the need for auxiliary structural modeling assumptions or incentivized elicitation of willingness to pay. Our three-armed experimental design also identifies the average selection on gains, average selection bias (ASB)--the difference in untreated potential outcomes for choosers versus non-choosers--and the average selection on levels (ASL)--the analogous comparison for treated potential outcomes. Like us, \cite{fowlie2021default} employs a one-stage, three-armed experimental design. Because they identify two TOT effects for different treated groups, however, rather than a TUT and TOT effect, their design cannot speak to the question of selection on gains. The two-stage designs of \cite{KarlanZinman2009} and \cite{Beaman2023} likewise do not identify TUT effects. 

Closest in spirit to our approach are studies that link incentivized elicitation of willingness to pay (WTP) with random assignment, formalized as ``selective trials'' by \citet{chassang2012selective}. This approach has been used to study the introduction of new technologies such as water filtration \citep{berry2020eliciting} and energy-efficient cookstoves \citep{berkouwer2022credit}, to examine WTP for exercise incentives \citep{CarreraEtAl2022}, and to study the welfare effects of social media \citep{allcott2020welfare}. \citet{allcott2022high} and \citet{bronchetti2023attention} use a related approach that elicits WTP via incentivized price lists but assigns treatment primarily via standard random assignment: only a small fraction of subjects are treated based on their WTP. Our design can be viewed as a simpler variation on these approaches. Rather than eliciting willingness to pay, we compare mandatory assignment with voluntary choice when structured payments are offered free of charge---the relevant policy counterfactual in our setting---so that take-up is measured from an actual contract choice, with no need to elicit valuations or to assume that elicited valuations would predict real-world take-up. Our approach therefore collects less information than these more involved designs, identifying treatment effects at a single point on the demand curve rather than across the distribution of valuations. Its advantage is simplicity, which can be crucial for large scale implementation in field settings such as ours. We view the two approaches as complementary.

The Mandates versus Choice design could be useful in other circumstances where researchers aim to test whether self-selection is efficient: whether the ``right people''---those with the largest treatment effects---choose treatment. This question arises when designing educational interventions (mandatory versus optional homework assignments), employer-provided retirement plans (opt-in or automatic enrollment), and when studying adherence to medical treatments.
By point-identifying both the TUT and TOT effects, our design directly addresses the question of how treatment effect heterogeneity relates to real-world compliance decisions.

Second, we contribute to the relatively small literature on paternalism and selection-on-gains. \cite{Laibson2018} studies private paternalism from a theoretical perspective and explains how principals may conceal commitment features that agents need but do not demand. Our results suggest that this picture may be inverted in the upside-down world of pawn lending: principals provide unstructured contracts that \textit{induce} default, though borrowers would face lower borrowing costs with more structure. Relatedly, \cite{heidhues2010exploiting} present a model where lenders design a contract that ``is flexible in a way that induces the borrower to unexpectedly change her mind regarding how she repays'', resulting in postponed payments, higher financial costs, and lower welfare.
On the empirical side, \cite{Sprenger} show that individuals with the most time-inconsistent preferences are the least likely to demand commitment. Our papers are complementary; while they design the experiment to identify one particular mechanism generating low take-up, our design focuses on quantifying treatment effects for choosers and non-choosers. We measure the financial cost savings that non-choosers forgo while remaining largely agnostic on the mechanism. Relatedly, whereas \cite{Walters} combines a distance-based instrument with structural modeling assumptions to show that students who select into more effective schools have smaller treatment effects ($\text{TOT} < \text{TUT}$), we identify TUT and TOT effects without the need for a structural model.

Third, our paper provides a detailed picture of pawn lending, a widespread and important financial service that has received little scholarly attention despite its prevalence and impact on economically vulnerable borrowers.\footnote{\cite{Caskey1991} complained that ``economists have produced hundreds of theoretical and empirical studies of banks while completely ignoring pawnshops'', which, according to him, are used by 10\% of Americans. \cite{Bos2012} write that ``while controversy surrounds payday loans, pawnshops have stayed out of public scrutiny''.} \cite{oeltjen1991pawnbroking} gives a history of pawn lending, while \cite{Caskey1991} and \cite{Bos2012} describe the industry in the US and Sweden. None of these papers estimates effects of contractual features on pawn lending outcomes. An older literature considers the exploitative potential of over-collateralization \citep{basu1984implicit}, but the behavioral implications of such contracts remain largely unexplored. 

Fourth, we contribute to the literature on the effects of structure and plans by showing that payment structure works even with modest fees \citep{nickerson2010voting,milkman2011using,milkman2013planning}. Consistent with \cite{beshears2016beyond}, a modest fee could circumvent limited attention, set a reference point or a mental goal, fight forgetfulness, serve as a recommendation to act, or increase the salience of deadlines.

Fifth, our paper contributes to a large literature on commitment devices: our Structured payments arm adds commitment features to an already high-interest loan. \cite{CarreraEtAl2022} lists 14 penalty-based commitment studies, and finds take-up rates ranging from 11\% to 73\%, with an average of 22\%. Our 11\% take-up rate is at the low end of this range.\footnote{\label{commitment_footnote} \cite{Ashraf}, \cite{Gine}, \cite{Ted}, \cite{Royer}, \cite{Sprenger} find modest take-up of commitment, similar to our results, while  \cite{Kremer}, \cite{Casaburi}, \cite{Alcohol}, \cite{AprajitP&P}, and \cite{Pascaline} find more robust take-up.} Unlike existing commitment studies, we estimate the effects of structure separately for borrowers who would and would not choose it. \cite{CarreraEtAl2022} documents over-optimism in people's gym attendance and low correlation between measured present bias and commitment take-up, both of which are consonant with our findings. 

An important point of comparison for our work is \cite{allcott2022high}, who study US payday borrowers' demand for a commitment-style incentive to avoid future borrowing. Both their study and ours reveal a mismatch between demand and treatment effects, but in opposite directions. In \citeauthor{allcott2022high}'s setting, only 3\% of loans end in default and the payday borrowers who self-select into the experiment are fairly sophisticated: their predictions about future borrowing are only 4 percentage points below the actual rate. This high degree of borrower sophistication is central to the key result of the authors' structural model, namely that banning payday loans or imposing tighter loan-size caps is unlikely to improve welfare. While borrowers value the incentive, it is less effective than they expect. In stark contrast, 44\% of pawn loans in our experiment end in default under the status quo but borrowers are highly overoptimistic, predicting on average a 92\% probability of recovering their pawn. And while voluntary take-up of structured repayment is low in our setting, mandated structure substantially reduces both default rates and financial costs. The contractual features of pawn lending may make overconfident borrowers especially responsive to late fees and imposed structure. \citeauthor{allcott2022high}'s borrowers want commitment but do not benefit from it; ours do not want structure even though it reduces default and financial costs.

Finally, we contribute to a growing literature on the effects of payment frequency in lending. In the context of microfinance, \citep{Pande} find no effects from frequent repayment schemes on loan default. Other studies find that increasing repayment flexibility in microfinance (by giving grace periods to pay or options to delay payment) improves business performance \citep{mcintosh2008estimating,Field,barboni2023flexible,flexibilty2024}. Our experiment, in contrast, highlights some of the dangers of flexibility in the context of overcollateralized pawn lending. See \cite{vihriala2023self,BernsteinKoudijs2024} for studies of payment structure in mortgage lending.

The remainder of the paper is organized as follows. Section \ref{context} provides background on pawn lending and defines our main outcome variables. Section \ref{Design} describes our experiment and data sources, and shows pre-treatment balance across arms. Section \ref{sec:randchoice} formally presents the Mandates versus Choice design and establishes identification of treatment effects for choosers (TOT) and non-choosers (TUT), along with average selection on gains and related selection measures. Section \ref{Experiment} presents average treatment effects on financial costs and their components, then applies the Mandates versus Choice design to identify treatment effects for choosers and non-choosers and test for selection on gains.
Section \ref{sec:harm} explores heterogeneity in TUT effects, using partial identification bounds and causal forests to assess whether causal effects may be negative for some non-choosers. Section \ref{why_paternalism} explores potential mechanisms for our results, and Section \ref{conclusion} concludes.

\section{Context} \label{context}

\subsection{Pawnshop borrowing}
    
Pawn loans involve individuals leaving valuable liquid assets, typically jewelry, as collateral in exchange for an immediate cash loan. The collateral is typically more valuable than the loan amount, allowing lenders to give the loan immediately without checking a borrower's credit history. This makes pawn loans a popular way to get cash to pay for emergencies. In fact, they are one of the most prevalent forms of borrowing. There are more than 11,000 pawnshops across the US, with 30 million clients and \$14 billion in yearly revenues.\footnote{See the website for the National Pawnbrokers Association (\href{https://tinyurl.com/ybm56dpe}{https://tinyurl.com/ybm56dpe}),  information on the items pawned by Americans (\href{https://tinyurl.com/y9zdcgws}{https://tinyurl.com/y9zdcgws}), and a Bloomberg article on regulation of the pawn industry in China (\href{https://tinyurl.com/y59ptdam}{https://tinyurl.com/y59ptdam}).}  
Our partner pawn lender in Mexico alone served more than 1 million clients in the last 3 years with more than 4 million contracts. For comparison there were 2.3 million microfinance clients across all lenders in Mexico in 2009 \citep{Pedroza:2010}. 

Pawning is also one of the oldest forms of borrowing. Pawn lending existed in antiquity at least since the Roman Empire, and there are records of it in China about 1,500 years ago \citep{PawnShops}. In spite of its prevalence and long history, pawnshop borrowing has not received much attention in the economics literature. The closest widely studied product is payday lending. In developing countries, however, payday lending is arguably less important than pawnshop lending; the latter is faster and requires less documentation, making it more accessible to informal sector workers who receive their salaries in cash.  

To situate pawn loans relative to other, more widely studied credit products, it is useful to highlight their distinctive institutional features. While pawn loans often serve different clientele than credit cards \citep{Caskey1991, Bos2012}, they share some of their flexibility. Unlike credit cards and most lines of credit, however, pawn loans do not impose minimum payments. Pawn loans also differ from payday loans. Payday contracts typically impose short repayment deadlines (the next pay date), rollover fees, and default consequences such as collections and credit reporting. Pawn contracts, in contrast, provide a true walk-away default option.
These contractual differences translate into sharply different default outcomes. \cite{allcott2022high} finds that only 3\% of payday loans end in default, whereas 44\% of pawn loans in our sample do. Microfinance loans typically require frequent installments, often weekly or monthly; pawn loans impose no such structured repayment schedule. This repayment flexibility likely contributes to pawn loans' substantially higher default rates relative to microfinance. Finally, unlike other overcollateralized loans such as mortgages, pawn loans involve strict forfeiture: if the borrower defaults, the lender retains both the pledged collateral and any partial payments already made.

Like payday lending, pawnshop lending is controversial \citep{carter2012pawnshops}. Regulators worry that borrowers using these products may lack financial sophistication, leading them to make suboptimal choices---biases that contract design may exacerbate. Evidence of such concerns exists for payday borrowers, but the pawn lending industry has received little empirical scrutiny despite serving similar clientele.\footnote{The US Congress has banned payday lenders from serving active military personnel, and some states have banned the industry altogether \citep{Payday}. \cite{Bertrand} find that simply disclosing how financing costs accumulate reduced payday loan demand by 11\%, suggesting borrowers underestimate total costs. \cite{Meltzer} finds that payday loan access increases difficulty paying mortgage, rent, and utility bills.}

\subsection{Pawning Logistics and Contracts}
\label{sec:logistics}

To fill this gap in the literature, we partnered with one of the largest pawn lenders in Mexico, an institution with more than one hundred branches spanning multiple states. This lender (whom we refer to as ``Lender P'') has a simple and typical business model. 

\paragraph{Appraising and Lending.} Lender P takes gold jewelry as collateral in exchange for a fraction of the value of the piece, in cash. No other collateral and no credit history checks are required. The transaction takes less than 10 minutes and is conducted at the branch in person between the client and the appraiser (i.e.\ a teller).
The appraiser weighs the gold piece and runs tests on its purity. Based on these tests, she assigns a gold value to the piece, stores it as collateral, and gives the client 70\% of the gold value in cash. The borrower signs a 2-page contract with the conditions of the loan, receives a pawn receipt with amounts owed and date the loan is due, and leaves with the cash. 

\paragraph{Status Quo Contract} Before our experiment, Lender P had only one type of contract, henceforth the \textit{status quo} contract. The interest rate was 7\% \textit{per month} compounded daily on the outstanding amount of the loan. The loan had a 90-day term with a 15-day grace period. The client was free to make early payments at the branch at any time with no penalty, but these prepayments were not required. Under this status quo contract, there were no payment reminders or any other kind of interim contact between the lender and the borrower. Internal lender rules allocated payments first to interest and fees owed, and to capital only after the first had been covered. If the client returned to pay the principal plus the accumulated interest within 105 days, she recovered her pawn; otherwise the pawnbroker kept the piece \textit{and} any payments already made. Before the contract expired, the client had the right to renew for another three months by going to the pawnshop, paying the accumulated interest, and signing a new contract with exactly the same terms and the same piece as the original contract (38\% of borrowers renewed at least once with a given pawn). This contract was standard in the industry. Pawnshops make money in three ways: by reselling the jewelry left as collateral on defaulted loans, by charging interest on non-defaulted loans, and by keeping the payments made on defaulted loans.

\subsection{Measuring Borrowers' Financial Costs} 
\label{costs}

Borrowers' financial costs have three main components: the value of lost collateral on defaulted loans, interest charges, and penalty fees.
For each loan in our sample, we record whether the client lost her pawn, $\mathds{1}(\text{Default}_i)$.\footnote{In our experiment, 13\% of loans are ongoing at the end of our observation window. For our main results in Section \ref{sec:baseline}, we classify these loans as not having defaulted. As explained below, this approach is conservative in our context: it biases treatment effects toward zero. Online Appendix Table \ref{bounding_censoring} presents results under alternative assumptions on outstanding loans.} 
Our administrative data records all payments made by borrowers, classified according to Lender P's payment allocation rules: payments to principal $P^C$, payments on interest $P^I$, and payments on penalty fees $P^F$. We observe the amount and date of each payment in all three categories.

We define a borrower's financial cost as the total monetary outflow---in cash or pawn value---from the borrower to the lender. This includes all payments the borrower made toward interest and fees, but also the net difference between the appraised value of the pawn and the loan amount (Value $-$ Loan) in the event of default. When there is no default, the borrower recovers her pawn and there is no loss of value for the borrower. Payments toward capital are considered a cost only when the borrower defaults, as she is not reimbursed for these payments. Note, however, that when she does not default, payments to capital net to zero against the initial loan disbursement. The formula for financial cost for person $i$ is thus as follows: 
\begin{align*}
    \text{Financial Cost}_i =&  \sum_t P^I_{it} +\sum_t P^F_{it}  
     + \mathds{1}(\text{Default}_i) \times \left(\text{Value}_i-\text{Loan}_i + \sum_t P^C_{it}\right)
\end{align*}
where $t$ indexes days, and $\mathds{1}(\text{Default}_i)$ is the indicator of default. Because the period of the loan is only 90 days, we do not apply discounting in calculating costs. In robustness checks reported below we show that our results are virtually unchanged when applying a wide range of time discounting factors.\footnote{Additional robustness checks discussed in Section \ref{sec:baseline} and Online Appendix Table \ref{table_robustness_fc} below incorporate: (i) borrowers' subjective valuations of their pawns in place of appraised gold value, (ii) travel expenses and opportunity cost of time for in-person payments, and (iii) the liquidity cost of prepayments.}

As a second measure of cost we calculate the Cost Ratio (CR): financial cost divided by loan size, representing the fraction of borrowed capital paid in costs. We report costs as levels and ratios rather than annualized rates. There is no standard approach to annualizing total costs in our setting that would not inappropriately conflate interest and fees (a flow that accumulates with loan duration) and collateral losses in default (a stock reflecting the initial loan-to-value gap that does not vary with the default date).\footnote{Two borrowers who default on loans with identical collateral incur identical collateral losses regardless of whether one defaults sooner than the other. Both have effectively ``sold their pawn'' on the day when they signed their loan agreements.}

\section{Experiment, Data and Stylized Facts} \label{Design}

\subsection{Treatment arms and randomization}
\label{sec:treatment-randomization}

\paragraph{The Structured contract.} For the experiment, we designed a structured payments contract informed by micro-lending practices \citep{morduch1999microfinance, bauer2012behavioral}. The contract is identical to the status quo in most respects: it has the same interest rate (7\% per month), the same loan size/collateral ratio (70\%), the same loan term (90 days with a 15-day grace period), and the same appraisal procedures. The key difference is that the structured payments contract requires three equal monthly payments covering both principal and interest, due on days 30, 60, and 90 after loan disbursement. 
Payment deadlines were clearly specified in the contract and on payment receipts. To make these deadlines salient, borrowers who missed a monthly payment incurred a modest late fee equal to 2\% of the payment due.\footnote{The fee was modest and intended to make the payment deadlines salient. As a benchmark, the transportation cost to visit the branch to make a payment is comparable to the fee, on average. The level of the fee resulted from discussion and focus groups. Too large a fee could generate illiquidity and hurt borrowers who only partially adhere \citep{john,Ted,BeshearsEtAl2025, Gine}.} The fee and its enforcement by the lender may serve as a commitment device. The payment schedule may also encourage borrowers to form a mental payment plan.\footnote{In other domains \cite[e.g.,][]{beshears2016beyond} it has been shown that such plans can circumvent limited attention, set a reference point, and be a cue to act.}

To measure demand for structured payments, we include a \textbf{Choice arm} in which borrowers can opt into the structured contract. The combination of mandatory and voluntary structured payment arms is essential for identifying the range of treatment effects we describe in the next section.

\paragraph{Treatment Arms.} 
Borrowers were randomized at the branch-day level to one of the three treatment arms:

\begin{enumerate}
    \item \textit{Control} arm: borrowers were offered only the status quo contract. 
    \item \textit{Mandatory structured payments} arm:  borrowers were required to use the structured payment contract.  
    \item \textit{Choice} arm: borrowers could choose between the structured payments contract and the status quo contract.
\end{enumerate}

\begin{figure}[htbp]    
 \caption{Experiment description}
\begin{center}
\includegraphics[width=0.9\textwidth]{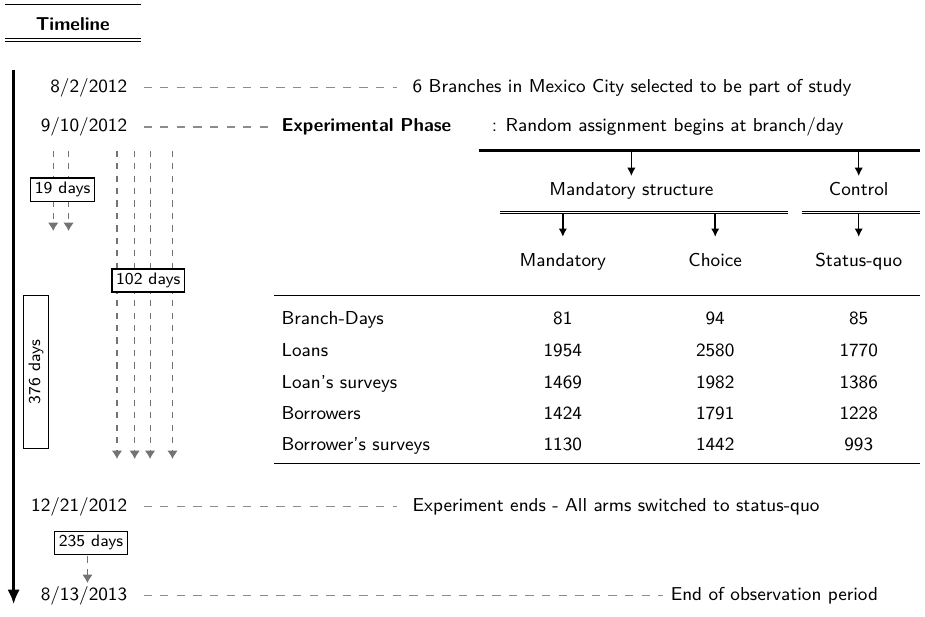}
  \end{center}
     \label{exp_description}
\end{figure}

\paragraph{Randomization.} We implemented the experiment in six branches of Lender $P$ beginning on September 6, 2012. The branches were selected by Lender $P$ to be dispersed across Mexico City and have varying sizes. In four of them the experiment ran for 102 days; in 2 of them it ran for a shorter time to economize on data collection costs. 
Branches are more than 5 km apart from each other, and there is no substitution between them; no borrower appeared at more than one of our study branches.

Each day, a computer randomly assigned which contract would be available in each study branch, and the branch IT system could only offer the assigned contract. We did not allocate an equal number of days across arms, since we were interested in having more power in some of them. We allocated 85 branch-days to control, 81 to mandated structured payments, and 94 to choice. See Figure \ref{exp_description} for a CONSORT (Consolidated Standards of Reporting Trials) diagram of the study design and recruitment. The study was designed in 2012 before pre-registration was common in Economics, and hence does not have a pre-analysis plan.

Branch personnel did not know the daily treatment assignment or the study's objective. They were told that there were 3 different ``types of contract-days,'' that the system assigned randomly each day, and that consecutive days could receive the same contract assignment. They were also told that this procedure was in effect at several of Lender P's branches (they did not know which ones) and would continue for several months.

Some clients pawned on multiple occasions during the experiment: 14\% pawned twice in study branches and 8\% pawned three or more times. To have a clean comparison, we consider only the first pawn conducted during the experimental window. Moreover, 21\% of borrowers' first visits involved multiple loans because borrowers submitted two or more pieces of gold. We treat each piece as a separate loan. In the appendix we show that our results are robust to this analysis choice.

\paragraph{Promise Arms}
The experiment included two additional independent arms that involved a personal promise to pay but no pecuniary penalties. In one arm individuals were asked if they wanted to make such a promise, and in the other they were required to do so. These were intended to investigate the extent to which personal morality and promises can drive high-stakes behavior. Both promise arms proved ineffective and so we do not highlight them here. In the results section we discuss what we can learn from the fact that the structured payments arm was effective while the promise arms were not. 

\paragraph{Timeline.} Figure \ref{exp_description} shows the experimental timeline along with the length of time for which we observe payments. For loans made in the first week of the experiment, we observe up to 338 subsequent days of loan information; for loans made in the last week we observe up to 235 days. Figure \ref{exp_description} also illustrates the number of branch-days per arm, the number of loans, and the number of surveys. 

\paragraph{Explaining the Contracts.}
We devoted substantial effort to helping clients understand the contract terms. First, full-time enumerators explained contract terms to clients. The explanation took about 3-5 minutes and continued until the client said she understood. Enumerators then asked clients to explain the contract back to them before correcting any misunderstandings. Second, the appraiser gave clients the ``Contract Terms Summary'' and read it out loud to them when their piece had been appraised but before they signed the contract. 

After the borrower pawned the piece, she received a contract receipt. Those assigned to the structured payments arm received a receipt indicating the contract terms and specifying a late payment fee of 2\% of the monthly payment amount. The receipt included a payment calendar with exact dates and amounts due each month (Figure \ref{PaperSlip}). Those in the status quo arm received analogous receipts that included only a final payment date.

\paragraph*{Appraisers and IT system lockout.} 
The IT system was hard-coded with the treatment assigned to a given branch on a given day; appraisers could not override the system. For this reason, we can be confident that borrowers in the mandatory arms received their experimentally-assigned treatments. 
To further ensure fidelity to the experimental protocol, our enumerators were present at each study branch from opening to close of business on every day of the experiment. They did not report a single instance of appraisers attempting to influence borrowers' decisions in the choice arm. This is unsurprising: since appraisers were paid a flat wage independent of the type or number of contracts that borrowers signed, they had no financial incentive to influence borrowers' choices. Consistent with this, adding or removing appraiser fixed effects leaves our results unchanged.
We cannot completely exclude the possibility that appraisers subtly influenced decisions in the choice arm. An advantage of our real-world setting, however, is that it features the same incentives that would be present in a full-scale implementation. As such, our results remain relevant for policymakers considering the wider roll-out of contract choice.

\subsection{Data}
\label{sec:data}

\paragraph{Administrative Data.}
The study exploits two types of data: administrative data from the lender and a short survey that we implemented. The administrative data contain a unique identifier for each client, an identifier for the piece she is pawning, and the transactions relating to that piece. These transactions include the value of the item as assessed by the appraiser, the amount of money loaned, the date of the pawn transaction, and the type of contract for that pawn: structured payments or status quo. During the loan period, we followed each transaction related to that piece in the administrative data: when payments were made and for what amounts, whether there was default (i.e.\ the client lost her pawn), and whether any late-payment fees were imposed. After each experimental loan, we were able to track subsequent behavior and see whether the borrower returned for another loan.
We have this information for all the pawns that occurred in the experiment's 6 branches between August 2, 2012 and August 13, 2013. This includes all the pawns that took place during our experiment along with those that originated one month before and eight months after our experiment. 
The experiment comprises 6,304 pawns while our administrative data cover a total of 26,180 pawns.

\paragraph{Survey Data.} An additional team of enumerators stationed at the entrance of each study branch asked clients to complete a 5-minute survey before going to the teller window to appraise their piece and \textit{before} they learned which contracts would be available on that day. The survey allowed us to measure loan take-up: we can identify borrowers who entered the branch and completed the survey but then decided not to pawn after learning the contract terms. Only 4\% of respondents declined to borrow, suggesting that the decision to pawn is made before entering the branch and is insensitive to contract terms.
The survey was intentionally brief to avoid interfering with branch operations. We collected survey data at the pawn level. Some borrowers completed multiple surveys across repeated pawns. The survey measured demographics, income and education, present bias, prior pawn experience, subjective probability of recovery, and subjective valuation of collateral, among other variables.\footnote{The complete survey instrument appears in the Online Appendix.} 
The survey data allow us to explore treatment effect heterogeneity in Section \ref{sec:RF} and behavioral mechanisms in Section \ref{why_paternalism}. For these sections we link the survey with administrative data, obtaining a 78\% survey response rate for our main sample and balance across arms. Importantly, our core experimental results from Section \ref{Experiment} do not depend on survey data, relying only on the administrative records described above. 

\subsection{Stylized facts}

\paragraph{Borrowers.} Borrowers are not inexperienced: 87\% of survey respondents report having pawned before. 
However, they are economically vulnerable:  20\% of them could not pay either utility bills or rent in the past 6 months. In terms of triggers and uses of the loan, 89\% said they were pawning because of an emergency, and only 12\% stated it would be used for a ``non-urgent expense''. When asked why they were pawning, 5\% cited a family member losing a job, 11\% a medical emergency, and 72\% an urgent expense. Borrowers were mostly women (73\%), with an average age of 43 years. Only 32\% reported education beyond high school.

\paragraph{Many borrowers lose their pawn.}
Default rates are high in our context: 41\% of clients lose their pawn within 230 days of pawning. The default rate for the experimental period and branches is similar to that observed for all branches in subsequent periods (see Figure \ref{external_validity}). One potential explanation for high default is that clients are knowingly selling their gold piece through a pawn contract on which they intend to default. This appears unlikely both because the reported subjective value of the pawn was larger than the appraised value for 83\% of clients, and because clients could easily sell the gold and obtain the full value of the pawn at gold-buying stores, which were located close to our partner lender's branches. 
Branches have an average of 2.8 such shops within a 15-minute walking distance.\footnote{There are 2.8 shops within a 1.25km radius of the branch on average.} 
Given that they would receive only 70\% of the pawn's value at the lender while shops next door would pay 100\%, we consider the ``intentional default'' explanation \emph{prima facie} unlikely.

\paragraph{Paying without recovering.} Among those who lost their pawn, 29\% paid a positive amount toward its recovery. On average this subset of borrowers paid 34\% of the value of their loan (see Figure \ref{proxy_naive} in Appendix), suggesting that they expected to recover. 

\paragraph{Overconfidence.} 
Our baseline survey asked borrowers for their subjective probability of recovering the pawn. The average self-assessed recovery probability was 92\%. This contrasts sharply with the actual non-default rate for these same control-group borrowers: 54\%. Overall 72\% of borrowers reported a 100\% probability of repaying their loan.\footnote{\cite{allcott2022high} find that inexperienced payday borrowers are overconfident about  getting out of debt in the future, but they do not study overconfidence in repaying the current loan.}

\subsection{No differential selection or imbalance}
\label{sec:integrity}

The validity of our experimental design depends critically on the absence of differential selection: borrowers must not selectively avoid the branch on days when certain contracts are offered. The overwhelming majority (96.1\%) of those who answered the baseline survey took a loan after learning the contract terms, and this rate is virtually identical across treatment arms: 96.7\% in the control arm versus 95.6\% and 96.2\% in the structured payments and choice arms, respectively (p=0.86). These identical take-up rates demonstrate that borrowers did not differentially select into treatment conditions. Online Appendix \ref{app:integrity} presents two further pieces of evidence---balance in the average number of borrowers, pawns per borrower, and pawns per day across arms, and balance in borrowers' baseline characteristics---that together constitute strong evidence of no differential borrower selection.

\section{The Mandates versus Choice Design}
\label{sec:randchoice}

The ``Mandates versus Choice'' design randomly assigns participants to one of three experimental conditions: mandatory control, mandatory treatment, or free choice between the two.
As we now present more formally, this design allows researchers to test a fundamental assumption of many economic models: that those who voluntarily choose a product or contract are those who benefit from it the most.

\begin{assumption}[Mandates versus Choice Design]
Each participant is randomized into one of three experimental arms: $Z_i \in \{0, 1, 2\}$. Those with $Z_i = 0$ receive the control $(D_i = 0)$, those with $Z_i = 1$ receive the treatment $(D_i = 1)$, and those with $Z_i = 2$ are free to choose either the treatment $(D_i = 1)$ or control $(D_i=0)$ conditions.
\label{assump:design}
\end{assumption}

In our empirical context $Z_i = 0$ denotes the mandatory Status quo arm, $Z_i = 1$ denotes the mandatory structured payments arm, and $Z_i = 2$ denotes the Choice arm. 
In Assumption \ref{assump:design}, $D_i$ denotes the treatment that participant $i$ actually \emph{received}, where $D_i = 0$ is the Status quo contract and $D_i = 1$ is the structured payments contract. 
Besides random assignment of $Z_i$, the key ingredient in Assumption \ref{assump:design} is perfect compliance in the $Z_i = 0$ and $Z_i = 1$ arms. 
As explained in Section \ref{sec:integrity} above, this holds in our context: only participants in the $Z_i = 2$ are free to choose between alternative contracts. 
In addition to Assumption \ref{assump:design} we assume that there are no spillovers from treatment.

\begin{assumption}[No Spillovers]
The stable unit treatment value assumption (SUTVA) holds: each participant's potential outcomes depend only on her own values of $Z_i$ and $D_i$, not those of any other participant. 
\label{assump:SUTVA}
\end{assumption}

Let $C_i \in \{0, 1 \}$ denote a participant's ``choice type.'' If $C_i = 1$ then she \emph{would choose structure}, given the option; if $C_i = 0$ she would not. 
We call participants with $C_i = 1$ ``choosers'' and those with $C_i = 0$ ``non-choosers.''
Note that a participant's choice type $C_i$ is only observed if she is allocated to the choice arm ($Z_i = 2$).
Under Assumption \ref{assump:design}, 
\begin{equation}
D_i = \mathbbm{1}(Z_i \neq 2) Z_i + \mathbbm{1}(Z_i = 2) C_i.
\label{eq:potentialTreatments}
\end{equation}
Under Assumption \ref{assump:SUTVA}, a fully general model for the potential outcomes would take the form $Y_i(d, z)$ for $d\in \{0,1\}$ and $z \in \{0, 1, 2\}$.
The following assumption restricts the potential outcomes to depend on $d$ only.

\begin{assumption}[Exclusion Restriction]
Each participant's potential outcomes depend only on the treatment she actually received $(D_i)$, not on her experimental arm $(Z_i)$, i.e.\ $Y_i(d,z) = Y_i(d) \equiv Y_{id}$ for $z \in \{0, 1, 2\}$ and $d\in \{0, 1\}$.
\label{assump:exclusion}    
\end{assumption}

Assumption \ref{assump:exclusion} has the same mathematical structure as the familiar LATE exclusion restriction but its interpretation is somewhat different: there is no explicit distinction between ``chosen'' versus ``mandatory'' treatment in the usual LATE setup.
Assumption \ref{assump:exclusion} implies that being assigned a particular treatment has the same result as choosing it for yourself.\footnote{Technically, our results only require that the exclusion restriction holds when individuals are assigned the same treatment that they would have chosen freely: $Y_i(0,2)=Y_{i0}$ and $Y_i(1,2) = Y_{i1}$.}
If the mere fact of being given a choice directly affects a person's potential outcomes, our exclusion restriction fails. 
For example, someone who would voluntarily choose drug rehabilitation might respond differently when mandated to attend the same treatment program. 
In our experiment, however, assumption \ref{assump:exclusion} is plausible.\footnote{Closely related assumptions are common, if often tacit. \cite{chamberlain2011bayesian} explicitly assumes that choosing and being assigned a treatment have the same effect. A significant literature using compulsory schooling laws to estimate the returns to education tacitly assumes that schooling \emph{in general} has the same returns, regardless of whether it is mandated or chosen freely.}
Moreover, it has testable implications that we fail to reject in our application.\footnote{Details available upon request.}

Assumptions \ref{assump:design}--\ref{assump:exclusion} imply that the observed outcome $Y_i$ is related to $(Y_{i0}, Y_{i1})$ by 
\begin{equation}
    Y_i = \mathbbm{1}(Z_i =0) Y_{i0} + \mathbbm{1}(Z_i = 1)  Y_{i1}  + \mathbbm{1}(Z_i = 2) \left[(1 - C_i) Y_{i0} + C_i Y_{i1} \right].
\label{eq:potentialOutcomes}
\end{equation}
Equation \ref{eq:potentialOutcomes} is the key to understanding the mandates versus choice design. It shows how observed outcomes decompose into potential outcomes based on treatment assignment and choice type. 
Random assignment of $Z_i=0$ and $Z_i = 1$ identifies the marginal distributions of $Y_{i0}$ and $Y_{i1}$ for the population as a whole, point identifying the average treatment effect (ATE).\footnote{The ATE is point identified regardless of whether the exclusion restriction in Assumption \ref{assump:exclusion} holds.}
Random assignment of $Z_i=2$ likewise point identifies the share of choosers ($C_i = 1$), the distribution of $Y_{i1}$ for choosers, and the distribution of $Y_{i0}$ for non-choosers ($C_i = 0$).
As a direct consequence, our design point identifies a number of economically-relevant causal quantities beyond the ATE. 
First among these are TUT and TOT effects: $\text{TUT} \equiv \mathbbm{E}(Y_{i1} - Y_{i0} | C_i = 0)$ and $\text{TOT} \equiv \mathbbm{E}(Y_{i1} - Y_{i0} | C_i = 1)$.
In our experiment the TUT is the effect of structure on borrowers who would not voluntarily choose it; the TOT is the effect on borrowers who would choose structure.
By separately identifying these effects, we can assess whether borrowers with the largest cost reductions self-select into structured payments.
We return to this question in Section \ref{TuT_ToT} below.

To see why these effects are identified, notice that the Mandates versus Choice design can be seen as a \emph{pair} of randomized encouragement designs: RCTs subject to one-sided non-compliance. 
The first compares $Z_i = 1$ to $Z_i=2$ and identifies the TUT.
In particular, by a standard argument (see Online Appendix \ref{sec:proofs})
\begin{equation}
\frac{\mathbbm{E}(Y_i|Z_i=1) - \mathbbm{E}(Y_i|Z_i =2)}{\mathbbm{E}(D_i|Z_i=1)-\mathbbm{E}(D_i|Z_i=2)} = 
\frac{\mathbbm{E}(Y_i|Z_i=1) - \mathbbm{E}(Y_i|Z_i =2)}{1 - \mathbbm{E}(D_i|Z_i=2)} = \text{TUT}
\label{eq:TUT}
\end{equation}
since everyone with $Z_i=1$ is treated.
Similarly, since no one with $Z_i = 0$ is treated, a comparison of $Z_i=0$ to $Z_i = 2$ identifies the TOT:
\begin{equation}
\frac{\mathbbm{E}(Y_i|Z_i=2) - \mathbbm{E}(Y_i|Z_i =0)}{\mathbbm{E}(D_i|Z_i=2)-\mathbbm{E}(D_i|Z_i=0)} = 
\frac{\mathbbm{E}(Y_i|Z_i=2) - \mathbbm{E}(Y_i|Z_i =0)}{\mathbbm{E}(D_i|Z_i=2)} = \text{TOT}.
\label{eq:TOT}
\end{equation}
The expression for the TOT in \eqref{eq:TOT} divides by the share of choosers, $\mathbbm{E}(D_i|Z_i=2)$. Without this adjustment, $\mathbbm{E}(Y_i|Z_i=2) - \mathbbm{E}(Y_i|Z_i=0)$ gives the intent to treat (ITT) effect of the structured contract, i.e.\ the causal effect of \emph{offering} this contract.\footnote{Note that identification of the ITT does not rely on assumption \ref{assump:exclusion}.} Figure \ref{tot_tut_graph} provides graphical intuition for equations \eqref{eq:TUT} and \eqref{eq:TOT}.

Because our design identifies both the TUT and TOT, it allows us to calculate their difference: the average selection on gains $\text{ASG} \equiv \text{TOT} - \text{TUT}$.
In a canonical Roy model $\text{TOT} > \text{TUT}$ so the ASG would be positive.
The Mandates versus Choice design allows us to test this implication directly.
Identifying both TOT and TUT effects in the same setting typically requires strong assumptions. One approach combines an excluded instrument with additional structural modeling assumptions \citep{cornelissen2018benefits,Walters}.\footnote{While the marginal treatment effects (MTE) approach \citep{heckman2007econometric} can in principle be used to identify the TOT and TUT with an exclusion restriction alone, this requires instrumental variable $Z$ with sufficiently rich support that the probability of treatment take-up given $Z$ varies continuously between zero and one. In practice, instrumental variables are usually discrete and, even when continuous, typically have a more modest effect on take-up.} 
An alternative calculates conditional local average treatment effects (LATE) given observed covariates $X$ and re-weights them according to the distribution of $X$ in some population of interest \citep{aronow2013beyond,angrist2013extrapolate}. 
While this ``LATE and re-weight'' strategy avoids parametric modeling assumptions, it relies on a strong assumption. Specifically, it assumes that there is no selection-on-gains ($\text{ASG} = 0$) conditional on $X$. 
Unlike both of these approaches, the Mandates versus Choice design identifies the TOT and TUT from assumption \ref{assump:exclusion}. It requires neither additional parametric restrictions nor the assumption of no unobserved selection-on-gains.

\begin{figure}
\caption{Graphical intuition for the Mandates vs.\ Choice design} 
    \begin{center}
        \centering
        \includegraphics[width=1.0\textwidth]{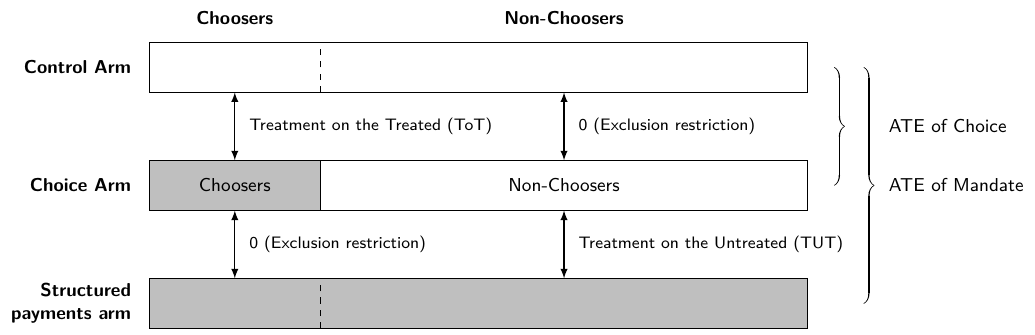}
    \end{center}
 \scriptsize{
    Notes: Gray rectangles denote borrowers with a structured contract; white rectangles denote borrowers with a status quo contract. 
    A comparison of means across control and structured payments arms identifies the ATE of Structure. 
    The difference of mean outcomes across the choice and control ``nets out'' the non-choosers, and hence equals the TOT multiplied by the share of choosers. 
    Similarly, the difference of means across the structured and choice arms ``nets out'' the choosers and equals the TUT multiplied by the share of non-choosers.
        The share of choosers, illustrated using dashed vertical lines, is equal on average across arms under random assignment.}
    \label{tot_tut_graph}
\end{figure}

In addition to the TUT, TOT and ASG, the Mandates vs.\ Choice design also identifies the average selection bias (ASB) and average selection on levels (ASL), namely 
\begin{align}
\label{eq:ASB}
    \text{ASB} &\equiv \mathbbm{E}(Y_{i0}|C_i=1) - \mathbbm{E}(Y_{i0}|C_i = 0) = \frac{\mathbbm{E}(Y|Z=0) - \mathbbm{E}(Y|Z=2,D=0)}{\mathbbm{E}(D|Z=2)}\\
    \label{eq:ASL}
    \text{ASL} &\equiv \mathbbm{E}(Y_{i1}|C_i = 1) - \mathbbm{E}(Y_{i1}|C_i=0) = \frac{\mathbbm{E}(Y|Z=2,D=1) - \mathbbm{E}(Y|Z=1)}{1 - \mathbbm{E}(D|Z=2)}
\end{align}
as shown in Online Appendix \ref{sec:proofs}.
These measures provide additional detail about the patterns of selection in our experiment.
Specifically, the ASB tells us whether borrowers who voluntarily choose structure are those with higher financial costs, on average, under the \emph{status quo}.
Similarly, the ASL tells us whether borrowers who voluntarily choose structure are those with lower financial costs, on average, under structured payments. 

The Mandates versus Choice design also point identifies $\mathbbm{E}(Y_{i0}|C_i,X_i)$ and $\mathbb{E}(Y_{i1}|C_i,X_i)$ for $C_i=0$ and $C_i=1$ where $X_i$ denotes any collection of pre-treatment covariates: see Lemma \ref{lem_randchoice} and \eqref{eq:Y0C1}--\eqref{eq:Y1C0} from the proof of Proposition \ref{pro:randchoice} in Online Appendix \ref{sec:proofs}.
This allows us to compute \emph{conditional} versions of the effects described above and to decompose these conditional treatment effects into their constituent parts.
We exploit this unique feature of the Mandates versus Choice design in our exploration of heterogeneity and behavioral mechanisms in Section \ref{why_paternalism} below.

\bigskip

\section{Results} \label{Experiment}

\subsection{Basic Treatment Effects and Take-up}
\label{sec:baseline}

We begin our presentation of results with a standard estimation of experimental assignment to the Mandatory structured payments and the Choice arms, relative to the Control arm. Table \ref{main_impact_table} presents estimates and standard errors from the regression 
\begin{equation} \label{basic_reg}
    y_{ij} = \alpha + \beta^{SP} T_{i}^{SP} + \beta^C T_{i}^C +  \varepsilon_{ij}
\end{equation}
where $i$ indexes the borrower, $j$ indexes the branch, $T_{i}^{SP}$ is an indicator for assignment to the structured payments arm, $T_{i}^C$ is an indicator for assignment to the Choice arm. 
Standard errors are clustered at the branch-day level, the unit of treatment assignment.
In this expression $\beta^{SP}$ is the ATE of the structured payments contract while $\beta^C$ is the ITT of structured repayment, i.e.\ the ATE of \emph{offering} this contract (see Section \ref{sec:randchoice}). 
Our two primary outcome variables are financial cost in pesos and the Cost Ratio (CR), as defined in Section \ref{costs}. 
Results for these outcomes appear in columns (1) and (7) of Table \ref{main_impact_table}.
The remaining columns decompose the financial cost into its components. Columns 2-4 capture payment flows: interest payments, payments of fees incurred, and principal payments. Column 5 shows the value of lost pawns conditional on default, column 6 uses an indicator of default as the outcome, and column 7 scales financial costs by loan size.\footnote{Loans can be extended in 3-month increments by paying accrued interest, maintaining the same treatment conditions. We track costs across the entire loan duration, including any extensions.}

Mandatory structure causes economically large and statistically significant decreases in borrowing costs, as measured by both financial cost and CR. Despite increasing fees, mandatory structure decreases borrowing costs by 183 pesos---a 19\% reduction from the control mean of 942 pesos---with a 95\% confidence interval of 83 to 283 pesos. These savings arise from two sources. First, the probability of default falls from a baseline of 44\% by 7.7 percentage points (95\% CI: 2.8 to 12.6 pp), generating savings of 91 pesos, i.e., about half of the total financial cost savings. Second, faster repayment reduces interest charges, as the structured payments contract causes borrowers to pay down the principal more quickly. 
These effects translate into a 6 percentage point reduction in CR from mandatory structure (95\% CI: 3.8 to 8.6 pp).

\begin{table}
\caption{Effects on financial cost}
\label{main_impact_table}
\begin{center}
\resizebox{1.0\textwidth}{!}{
\footnotesize{
\begin{tabular}{lcccccccc}
\toprule
      &       & \multicolumn{5}{c}{Components of FC}  &       &  \\
\cmidrule{3-7}      & FC    & Interest pymnt & Fee pymnt & Def$\times$Ppl pymnt & Lost pawn value & Default &       & CR \\
\cmidrule{2-2}\cmidrule{9-9}      &       & $\sum_t P^I_{it}$ & $\sum_t P^F_{it}$ & $\mathds{1}(\text{Def}_i)\times\sum_t P^C_{it}$ & $\mathds{1}(\text{Def}_i)\times \text{Value-Loan}_i$ & $\mathds{1}(\text{Def}_i)$ &       &  \\
\midrule
      & (1)   & (2)   & (3)   & (4)   & (5)   & (6)   &       & (7) \\
\midrule
\midrule
Mandatory structured & -183.0*** & -124.2*** & 32.4*** & -1.62 & -91.3*** & -0.077*** &       & -0.062*** \\
      & (50.9) & (40.0) & (1.54) & (3.14) & (34.5) & (0.025) &       & (0.012) \\
Choice  & -11.9 & 8.16  & 1.60*** & -0.36 & -21.7 & -0.040* &       & -0.0011 \\
      & (56.6) & (49.5) & (0.29) & (3.30) & (34.0) & (0.023) &       & (0.013) \\
      &       &       &       &       &       &       &       &  \\
\midrule
Observations & 6304  & 6304  & 6304  & 6304  & 6304  & 6304  &       & 6304 \\
R-squared & 0.005 & 0.005 & 0.147 & 0.000 & 0.002 & 0.004 &       & 0.016 \\
Control Mean & 942.4 & 545.9 & 0     & 5.96  & 396.5 & 0.44  &       & 0.43 \\
\bottomrule
\bottomrule
\end{tabular}%
}
}
\end{center}
\vspace{-.1in}
\scriptsize{Notes: This table shows the treatment effects for our core pecuniary outcomes. Each column presents results from a regression of the specified outcome variable on an indicator for the mandated and choice arms, following the specification from equation \ref{basic_reg}. Column (1) analyzes our core financial cost measure, while the columns 2-6 decompose these into their constituent parts.
Some borrowers take more than one loan on their first visit to an experimental branch. We treat these loans as separate observations. Additional results, available upon request, show that our results are robust to different ways of handling multiple loans for each borrower. Standard errors are clustered at the branch-day level. }

\end{table}

In contrast, the Choice arm generates no significant cost reductions. There is an increase in fees, an outcome for which we have high statistical power because fees are zero for every borrower in the control arm, and there is also suggestive evidence of a decrease in default probability (95\% CI: -8.5 to 0.5 pp). 
Confidence intervals for the Mandatory and Choice arms have similar widths, indicating that the null result in the Choice arm reflects a smaller estimated effect rather than insufficient precision.
Despite structure's large effects when mandated, the Choice arm generates no significant cost reductions due to low take-up: only 11\% of borrowers choose the structured payments contract.
As shown in Online Appendix Figure \ref{determinants_choose}, observables from our baseline survey are only weak predictors of take-up: borrowers who say they have trouble paying bills are less likely to choose structure while more educated borrowers are more likely.

\paragraph{Promise arms.} As a further point of comparison, Appendix Table \ref{sec_impact_table} presents results analogous to Table \ref{main_impact_table} for the promise arms.  In the equivalent of the mandated structured payments arm, borrowers were told to give us their word that they would pay one third of the loan amount each month. These are the same monthly payments as required by the structured payments contract. Promise arm borrowers, however, knew that they would face no lender fees for missed monthly payments.\footnote{Clients signed a document that said \textit{``I promise to pay every month the corresponding sum of \makebox[0.2cm]{\hrulefill}, on the dates \makebox[0.2cm]{\hrulefill}, \makebox[0.2cm]{\hrulefill}, and \makebox[0.2cm]{\hrulefill}. This is \underline{not} a legal document and cannot be used in court. It is just a personal promise. If you do not pay on time you will not have kept your word.''}
Dates were pre-printed (\ref{PaperSlip}).} 
In a second promise arm clients were given the option to make such a promise; 33\% chose to do so. 
Having borrowers promise to pay monthly generates economically small and statistically insignificant effects, whether the promise is chosen or required. For example, using the financial cost outcome, the estimated effect of the Promise Mandate is -18 pesos (95\% CI: -142 to 106) compared to -183 pesos for mandatory structure. More generally, confidence intervals for the Promise Arms are wide and include substantial effects in both directions. Personal promises alone are insufficient to change behavior; the combination of a pecuniary fee and externally imposed structure is responsible for our treatment effects.

\paragraph*{Payment speed} To shed light on the mechanisms behind these treatment effects,  Table \ref{mechanisms} examines intermediate outcomes. Panel A shows that mandatory structure accelerates repayment. Under the status quo borrowers wait an average of 83 days (out of a 90-day contract) before making their first payment; those in the structured payments arm make their first payment an average of 12.2 days earlier. These earlier payments are also 9.7\% larger (column 2). Moreover, around 37\% of borrowers in the structured payments arm pay in full on their first visit, compared to 30\% under the status quo contract (column 3). Overall, mandatory structure shortens loan duration by 22.6 days (column 4) and by 14 days conditional on recovery (column 5), noting that recovery is an endogenous control.

\paragraph{Binding monthly deadlines.} Figure \ref{porc_payment_over_time} (a) plots cumulative loan repayment over time by treatment arm. In the structured payments arm, a disproportionate number of payments occur near monthly payment deadlines. We see no such bunching of payments around these dates in the status quo arm. Differences in accumulated payments across arms seem to emerge precisely right around these dates.

\paragraph{Payment Bifurcation} Panel B reveals that mandatory structure reduces wasteful partial payments. Without structure, 12\% of borrowers make payments toward their loans but ultimately default, losing both their collateral and any payments made. Mandatory structured payments reduce the probability of this costly outcome by 6.7 percentage points (column 6). Among those who do default, borrowers in the structured payments arm pay 3.5 percentage points less of their loan amount (column 7) and are 12 percentage points more likely to pay nothing at all (column 8). One interpretation of this is that the monthly payment schedule in the structured payments arm forces borrowers to engage earlier with the question of whether recovery is realistic. Some decide that it is and commit fully. Others realize that it is not and decide to cut their losses sooner by making no payments, thus saving them money that would likely be lost. This mechanism would also explain the pattern of branch visits that we observe: no overall increase in the structured payments arm (column 10) but 0.17 fewer visits on average from those who default (column 11).
Column 9 shows that mandated structured payments do not affect the fraction of borrowers who make at least one payment.

\begin{landscape}
  
\begin{table}
\caption{Effects on intermediate outcomes}
\label{mechanisms}
\begin{center}

\scriptsize{
\begin{tabular}{lccccccc}
\toprule
      & \multicolumn{7}{c}{Panel A  : Speed of payment} \\
\cmidrule{2-8}      & Days to & \% of payment & $\Pr($Recovery & Loan duration  &       & Loan duration &  \\
      &  1st payment &  in 1st visit &  in 1st visit) & (days) &       &  $|$ recovery &  \\
\cmidrule{1-5}\cmidrule{7-8}      & (1)   & (2)   & (3)   & (4)   &       & (5)   &  \\
\cmidrule{1-5}\cmidrule{7-8}Mandatory structured & -12.2*** & 9.69*** & 0.073** & -22.6*** &       & -14.0*** &  \\
      & (1.86) & (3.13) & (0.029) & (4.94) &       & (4.28) &  \\
Choice  & -3.10* & -0.23 & -0.013 & 3.03  &       & 0.41  &  \\
      & (1.72) & (2.49) & (0.024) & (5.73) &       & (4.79) &  \\
      &       &       &       &       &       &       &  \\
\cmidrule{1-5}\cmidrule{7-8}Observations & 4412  & 6304  & 6304  & 6304  &       & 3031  &  \\
R-squared & 0.026 & 0.008 & 0.007 & 0.021 &       & 0.013 &  \\
Control Mean & 82.8  & 45.8  & 0.30  & 136.6 &       & 103.9 &  \\
      &       &       &       &       &       &       &  \\
\midrule
      & \multicolumn{4}{c}{Panel B  : Variables related to default} &       & \multicolumn{2}{c}{Panel C  : Visit variables} \\
\cmidrule{2-8}      & $\Pr($+ payment & \% of pay & $\Pr($payment $=0$ &       &       &       & \# of visits \\
      &  \& default) &  $|$ def  &  $|$ def) & $\Pr($ payment $=0$) &       & \# of visits &  $|$ def \\
\midrule
\midrule
      & (6)   & (7)   & (8)   & (9)   &       & (10)  & (11) \\
\midrule
\midrule
Mandatory structured & -0.067*** & -3.50*** & 0.12*** & -0.011 &       & 0.016 & -0.17*** \\
      & (0.014) & (1.25) & (0.032) & (0.024) &       & (0.056) & (0.049) \\
Choice  & -0.022 & -0.93 & 0.027 & -0.018 &       & 0.14** & -0.046 \\
      & (0.015) & (1.20) & (0.034) & (0.024) &       & (0.062) & (0.049) \\
      &       &       &       &       &       &       &  \\
\cmidrule{1-5}\cmidrule{7-8}Observations & 6304  & 2492  & 2492  & 6304  &       & 6304  & 2492 \\
R-squared & 0.008 & 0.007 & 0.014 & 0.000 &       & 0.003 & 0.011 \\
Control Mean & 0.12  & 9.68  & 0.71  & 0.31  &       & 1.14  & 0.39 \\
\bottomrule
\bottomrule
\end{tabular}%
}

\end{center}
\footnotesize{Notes: This table explores treatment effects on intermediate outcomes. Each column in the three panels corresponds to a different outcome variable in the regression from \eqref{basic_reg}. The outcomes in Panel A concern the speed of repayment, those in Panel B relate to default, and those in Panel C relate to visits to a branch. 
Days to 1st payment (col 1) is defined only for borrowers who made some payment; probability of recovery in the first visit (col 3) equals one for those who recovered in the first visit, and zero if they did not recover; loan duration (col 4) is the number of days the borrower took to repay her loan for those who recover, the number of days until default for those who default, and the maximum number of days we observe them in the sample for those who neither default nor recover during our observation window.
Standard errors are clustered at the branch-day level.}

\end{table}
\end{landscape}

\normalsize
\normalsize

\paragraph{Other Costs} 
To show that our results are robust to alternative cost measures and to account for additional costs that structured payments contracts impose, Table \ref{table_robustness_fc} presents three adjustments. First, we use borrowers' \emph{subjective} value of the pawn rather than its appraised value, since jewelry has sentimental value. Self-reported valuations are nearly twice the appraised gold value on average, making default even more costly. This adjustment (column 2) \emph{increases} the estimated treatment effect from -183 to -293 pesos (95\% CI: -469 to -117). Second, we add transaction costs from branch visits, including self-reported transport costs and the opportunity cost of borrowers' time (conservatively charged at a full day's minimum wage per visit). Despite requiring three visits instead of one, the savings remain essentially unchanged at -183 pesos (column 3). Third, we account for liquidity costs by applying compound daily interest to all pre-payments, assuming that borrowers would need to borrow at an interest rate of 7\% per month (column 4). Even with this conservative adjustment, the effect remains large and significant at -107 pesos (CI: -183 to -32). Comparing columns (1) and (4) of Table \ref{table_robustness_fc} shows that less than half (42\%) of the total savings come from reduced interest charges. About half of the financial cost savings come from avoiding default. This means the intervention still generates substantial net cost savings even if clients have to borrow to pay earlier. 

This analysis shows that the reduction in undiscounted financial costs to the borrower is robust to the inclusion of several additional cost measures. In Section \ref{why_paternalism} we address the role of time discounting and examine potential implications of our findings, particularly in a context where so few people choose a contract that so many appear to benefit from.

\paragraph{Censoring of Loan Completion}
Our observation window was limited in each branch, preventing us from following some loans to completion. Overall 13\% of experimental loans are censored---neither defaulting nor repaying within our observation window. This censoring particularly affects pawns that were rolled over for additional 90-day periods. In the results presented above, we addressed this issue conservatively by using outcomes such as ``did not default'' which remain well-defined for censored loans. This approach biases our estimates toward zero because control loans, which repay more slowly, are more likely to be censored and thus coded as not having defaulted even though some would have eventually defaulted. In the Online Appendix we consider the issue of censoring in more detail. Table \ref{bounding_censoring} presents the results of a bounding exercise that checks the robustness of our results to alternative assumptions about repayment for censored loans. Even under the most conservative assumptions (Panel B), the causal effect of structure on financial costs remains large and significant at -167 pesos (95\% CI: -276 to -58).

\paragraph{Lender profit} 
If structured payments reduces lenders' profits, they will have little incentive to offer it. For a single loan, the interaction between borrowers and lenders is zero-sum: the borrower's financial cost equals the lender's profit. Table \ref{main_impact_table} showed that structured payments contracts reduced borrowers' financial costs, and hence lenders' profits, on a per loan basis. Table \ref{repeat_loans} from the Online Appendix, however, shows that borrowers in the structured payments arm are 6.7 percentage points more likely to become repeat customers. We now describe a back-of-the-envelope calculation that adjusts for this offsetting effect to determine whether structured contracts increase or decrease total lender profits on net. 
As shown in column 1 of Table \ref{main_impact_table}, structured payments contracts generate approximately \$759 MXN in lender profit per loan compared to \$942 MXN for status quo contracts. As shown in Table \ref{repeat_loans}, 38.7\% of borrowers in the structured payments arm return to take out another loan compared to 32.0\% under the status quo. To determine which contract type is more profitable, we calculate the expected lifetime value of a customer under each:
\[
\text{Profit(status quo)} = 942 \cdot \sum_{t=0}^T\delta^t(0.32)^t, \quad
\text{Profit(structured)} = 759 \cdot \sum_{t=0}^T\delta^t(0.387)^t
\]
where $\delta$ is the discount factor and $T$ is the time horizon. 
This calculation makes three assumptions. First, it assumes that the number of times a borrower returns is a geometric random variable with success probability equal to the overall return rate of their contract type in our experiment.\footnote{Results are virtually unchanged if we replace the geometric model with a binomial model in which customers need to borrow on $n$ future occasions and return with probability $p$ for each.} Second, it assumes that borrowers receive the same contract each time they return. Third, it assumes that financial cost is statistically independent of the decision to return. Under this model, status quo contracts generate higher profits for all $\delta \in [0,1]$ and all $T$ (see Figure \ref{profit_mandatory_control}). The higher rate of repeat business from structured payments contracts does not compensate for their lower per-loan profits. In Online Appendix \ref{append:lender-profit}, we present two alternative versions of this exercise that relax the assumption of independence between per-loan profits and the decision to borrow again. As shown in Figure \ref{bound_lender_profit}, the results are unchanged.

\subsection{Estimating Treatment Effects by Contract Choice} \label{TuT_ToT}

The results discussed so far present a puzzle: structured payments contracts substantially reduce financial costs yet only 11\% of borrowers choose them when given the opportunity. If the effect of structure were homogeneous, this would be enough to conclude that the 89\% who did not choose it would have been financially better off if they had. In a world of heterogeneous treatment effects, however, low demand for structure could still be consistent with rational choice---the borrowers who don't choose it could simply be those who don't need it. This raises a fundamental question: do the ``right people'' choose structure, or would a structured payments contract lower financial costs for borrowers who would not choose it? 

Table \ref{tot_tut} presents the causal effects identified by the Mandates versus Choice design---TOT, TUT, ASG, ASB, and ASL---along with standard errors clustered at the branch-day level.  For ease of interpretation, we redefine outcomes in this section and those that follow so that positive values indicate beneficial effects: cost savings, lower default rates, and CR reductions.
For reference, row 1 shows ATE results and rows 4--5 present the average potential outcomes $\mathbbm{E}[Y_0]$ and $\mathbbm{E}[Y_1]$. 

Row 3 of Table \ref{tot_tut} contains our central finding: non-choosers experience substantial cost savings from structure. The TUT effect is positive, statistically significant, and economically large across all outcomes.
Non-choosers save an average of 192 pesos (95\% CI: 92 to 291) and experience a 6.8 percentage point reduction in CR (95\% CI: 4.0 to 9.6)---cost savings they forgo by choosing the status quo contract. Indeed, our estimated TUT effects are larger than the corresponding ATE estimates for all outcomes except default. Put simply, the 89\% of choice arm borrowers who reject structure are leaving money on the table. 

\begin{table}[htbp]
\caption{TOT, TUT, ASG, ASB, and ASL estimates}
\label{tot_tut}
\begin{center}
\footnotesize{
\begin{tabular}{lcccc}
\toprule
      & CR \% benefit & FC benefit & \% (1-Default) & \% (1-Refinance) \\
\midrule
      & (1)   & (2)   & (3)   & (4) \\
\midrule
\midrule
ATE   & 6.16*** & 183.0*** & 7.74*** & 6.34** \\
      & (1.25) & (50.8) & (2.50) & (2.90) \\
ToT   & 1.00  & 111.9 & 37.4* & -25.9 \\
      & (12.6) & (528.3) & (21.6) & (29.1) \\
TuT   & 6.77*** & 191.5*** & 4.20* & 10.2*** \\
      & (1.44) & (50.8) & (2.41) & (2.90) \\
$\mathbb{E}[Y_1]$ & -36.4*** & -759.4*** & 64.2*** & 67.2*** \\
      & (0.85) & (27.3) & (1.69) & (1.70) \\
$\mathbb{E}[Y_0]$ & -42.5*** & -942.4*** & 56.4*** & 60.9*** \\
      & (0.91) & (42.9) & (1.84) & (2.35) \\
\midrule
ASG   & -5.77 & -79.6 & 33.2  & -36.1 \\
      & (13.4) & (556.2) & (22.6) & (30.6) \\
ASB   & 8.89  & 291.5 & -39.1* & 22.7 \\
      & (13.1) & (551.2) & (22.3) & (30.1) \\
ASL   & 3.12  & 211.9*** & -5.90 & -13.4*** \\
      & (2.15) & (59.5) & (4.29) & (4.20) \\
\midrule
Observations & 6304  & 6304  & 6304  & 6304 \\
$H_0$ : ASG=0 & 0.67  & 0.89  & 0.14  & 0.24 \\
$H_0$ : ASG$\leq$ 0 & 0.67  & 0.56  & 0.071 & 0.88 \\
\bottomrule
\bottomrule
\end{tabular}%
}
\end{center}
\scriptsize{Notes: This table presents results computed using the derivations from Section \ref{sec:randchoice}. The CR and financial cost outcomes have been multiplied by $-1$ so that causal effects represent benefits to the borrower in each of the four columns. FC benefits are measured in pesos, while default and refinance outcomes are measured in percentage points. The bottom panel presents p-values for two hypothesis tests of treatment effect homogeneity.} 
\end{table}

Standard models of rational choice predict that choosers should benefit more than non-choosers (TOT$>$TUT)---positive selection on gains. While the low take-up rate of structure in the choice arm (11\%) limits our power, we find no evidence in support of this prediction. Point estimates for the average selection on gains ($\text{ASG} = \text{TOT} - \text{TUT}$) are negative for all outcomes except default probability, but none are statistically significant. For example, the estimated ASG for our CR outcome is -6 percentage points with a 95\% confidence interval of -32 to 20 percentage points. The imprecision in our ASG estimates stems from the small number of choosers, leading to large standard errors for TOT effects. The most precisely estimated of our TOT effects (column 3), still has a wide 95\% confidence interval: -5 to 80 percentage points.

We do find some evidence of other forms of selection. First, the average selection bias (ASB) for the recovery outcome (column 3) is -39 percentage points, with a 90\% confidence interval of -76 to -2 percentage points. This means that choosers are estimated to have a higher risk of default under the status quo than non-choosers. Thus, we find some suggestive evidence of selection on baseline default risk $Y_0$ but no evidence of selection on gains $Y_1 - Y_0$. This pattern would be consistent with choosers recognizing their elevated default risk under the familiar status quo contract but being unable to predict how much the unfamiliar structured payments contract would help. Second, we find evidence of average selection on levels (ASL) for the FC outcome (column 2) and the ``did not refinance'' outcome (column 4). 

This experiment identifies treatment effects for choosers (TOT) and non-choosers (TUT) simultaneously, showing that non-choosers may experience cost savings as large as or larger than those who self-select into treatment.
The Mandates versus Choice design could be valuable in other settings with low rates of voluntary adoption. Non-adherence to treatment is widespread in medical applications \citep{non_adherence} yet existing work rarely asks whether voluntary adherers benefit more than non-adherers. Similarly, while several studies have shown low take-up of commitment devices \citep[e.g.,][]{Ashraf, Gine, Ted, Royer}, existing work rarely estimates treatment effects for non-takers.\footnote{We note that some other papers find higher rates of commitment take-up \citep{Kremer, Casaburi, Alcohol, AprajitP&P, Pascaline}, see \cite{CarreraEtAl2022} for more detail.}

\section{Heterogeneous Treatment Effects} 
\label{sec:harm}
In Section \ref{Experiment} we estimated large and statistically significant average effects of structured payments, both on average and for borrowers who chose the status quo contract. Average effects, however, could mask substantial heterogeneity. Despite the large and positive TUT effect, some non-choosers could still experience negative treatment effects, validating their decision to opt for the status quo. We now present two exercises that explore whether anyone experiences \emph{higher financial costs} under structured payments. The first computes partial identification bounds for the distribution of \emph{individual} treatment effects. Even under the most conservative assumptions, at least 28\% of borrowers experience lower financial costs under structure---more than twice the 11\% who choose it. The second uses causal forests and our borrower survey to estimate conditional average treatment effects. We find that only 1\% of estimated conditional TUT effects are negative, suggesting that cost increases are rare among non-choosers. 

\subsection{Bounding the Distribution of Individual Treatment Effects}
\label{sec:bounds}
While the distribution of \emph{individual} treatment effects cannot be point identified, it can be bounded, allowing us to go beyond the average effects from Section \ref{sec:baseline} to compute a lower bound for the \emph{fraction} of borrowers who would experience cost reductions under mandated structure.
This exercise relies only on Assumptions \ref{assump:design} and \ref{assump:SUTVA}; unlike our TUT and TOT results, it does \emph{not} rely on an exclusion restriction.

As above, let $(Y_{i0}, Y_{i1})$ be $i$'s potential outcomes under the status quo and mandated structured payments conditions and define $\Delta_i \equiv Y_{i1} - Y_{i0}$ as borrower $i$'s treatment effect.
The marginal distributions $F_0, F_1$ of $Y_{i0},Y_{i1}$ are observed, but the marginal distribution $F_\Delta$ of $\Delta_i$ is not.
To bound the latter, we consider all joint distributions of the potential outcomes that agree with $F_0$ and $F_1$.
As shown in \cite{fan2010sharp}, for any fixed $\delta$ the sharp bounds for $F_\Delta(\delta)$ are $[\underline{F}(\delta), \,\overline{F}(\delta)]$ where 
\[
\underline{F}(\delta) \equiv \max \left\{0, \sup_y F_1(y) - F_0(y - \delta)  \right\}, \, \,
\overline{F}(\delta) \equiv 1 + \min \left\{0, \inf_y F_1(y) - F_0(y-\delta) \right\}.
\]
Since $F_0$ and $F_1$ are identified under Assumptions \ref{assump:design}--\ref{assump:SUTVA}, so are $\underline{F}$ and $\overline{F}$.
Setting $\delta = 0$ bounds the fraction of borrowers whose costs \emph{increase} under mandated structure: $F_\Delta(0) = \mathbbm{P}(\Delta_i \leq 0)$.

Using this approach, we estimate a lower bound of $0.04$ and an upper bound of $0.72$ for the share of borrowers whose costs increase under mandated structure, based on the CR outcome.
Corresponding 95\% confidence intervals are $[0.02, \, 0.06]$ for the lower bound and $[0.69, \, 0.75]$ for the upper bound.\footnote{As described in Appendix
\ref{append:bounds}, we compute these bounds adjusting for day of the week and branch dummies to tighten the bounds. See Appendix for details of our inference procedure as well.}
Since at most 72\% experience negative effects, even under the most conservative assumptions, and without imposing an exclusion restriction, we find that \emph{at least} 28\% of borrowers have positive treatment effects from mandatory structured repayment, more than double the fraction who demand this contract.

With stronger assumptions, it is possible to say more.
One such assumption is \emph{rank invariance}, which posits that a person's rank in the distribution of $Y_0$ equals her rank in the distribution of $Y_1$. While strong, this assumption or variants of it have appeared in a number of settings in the literature, e.g.\ \cite{chernozhukov2005iv}.
Under rank invariance, the distribution of treatment effects is point identified and given by $F_\Delta(\delta) = \int_0^1 \mathbbm{1}\{ F_1^{-1}(u) - F_0^{-1}(u)\leq \delta\}\,\mathrm{d}u$ where $F_1^{-1}$ and $F_0^{-1}$ are the quantile functions of $Y_{i1}$ and $Y_{i0}$. 
For the CR outcome, rank invariance implies that around 90\% of borrowers have positive individual treatment effects from mandatory structured repayment.
See Figures \ref{fan_park_bounds} and \ref{te_rankinvariance} in Appendix \ref{append:bounds} for our estimate of the full distribution $F_\Delta(\delta)$ with and without rank invariance.

\subsection{Conditional Average Treatment Effects via Causal Forests} 
\label{sec:RF}
While informative, our bounds for the share of borrowers who experience higher financial costs under structure are wide. 
Rank invariance collapses these bounds to a point, but at the expense of imposing a strong additional assumption.
In this section we take a different approach, using data from our baseline survey to study \emph{conditional average} treatment effects rather than individual effects. 
Specifically, we ask whether it is possible to identify \emph{groups of borrowers} with negative average effects from mandated structure.

All treatment effect results presented thus far have relied exclusively on administrative data; we have only used the baseline survey to assess experimental balance (Online Appendix Table \ref{SS_final}) and to explore predictors of take-up in the choice arm (Online Appendix Table \ref{determinants_choose}).
In the remainder of this section and the next, we incorporate survey responses into our estimates of treatment effects.\footnote{See Table \ref{baseline_survey} in the Online Appendix for the list of the survey variables used in this section.}
The survey response rate was 78\%; as such, the results discussed in this section should be understood as applying to this subsample of respondents. 
Reassuringly, unconditional treatment effects for survey respondents are nearly identical to those for the full sample, as is the take-up rate of structured repayment.
See Online Appendix \ref{append:respondents} and Table \ref{tab:main_effects_respondents} for details.
Because the baseline survey was administered before treatment assignment, random assignment of $Z_i$ remains valid within the survey subsample.\footnote{Assumption \ref{assump:design} continues to hold conditional on pre-treatment covariates. 
Similarly, since Assumptions \ref{assump:SUTVA}--\ref{assump:exclusion} are restrictions on the potential outcomes of \emph{each individual} in the population, they also hold conditional on pre-treatment covariates. 
Thus, the results of Section \ref{sec:randchoice} continue to apply conditional on survey response and information from the baseline survey.}
Before proceeding to estimate heterogeneous treatment effects, we note that the omnibus test of homogeneous treatment effects from \cite{chernozhukov2018generic} strongly rejects the null and that the heterogeneous treatment effect estimates presented below are well-calibrated.\footnote{Details available upon request.}

We consider conditional ATE, conditional TUT, and conditional TOT effects---CATE, CTUT, and CTOT for short---defined as $\text{CATE}(x) \equiv \mathbbm{E}[Y_{i1} - Y_{i0}|X_i = x]$ and
\[
\text{CTUT}(x) \equiv \mathbbm{E}[Y_{i1} - Y_{i0}|C_i = 0, X_i = x], \quad
\text{CTOT}(x) \equiv \mathbbm{E}[Y_{i1} - Y_{i0}|C_i = 1, X_i = x]
\]
respectively.
We estimate these as nonparametric functions of survey responses, using the ``generalized random forest'' approach of \cite{atheygrf}.
See Appendix \ref{app:cate} for implementation details.
Note that our CATE results, like the bounds from Section \ref{sec:bounds}, rely only on Assumptions \ref{assump:design}--\ref{assump:SUTVA} while our CTUT and CTOT results, like their unconditional counterparts, impose the exclusion restriction from Assumption \ref{assump:exclusion}.

\begin{figure}[htbp]
     \caption{Heterogeneous treatment effects} 
     \label{heterogeneous_effects}    
    \begin{center}
     \begin{subfigure}{0.3\textwidth}
       \centering
      \includegraphics[width=\textwidth]{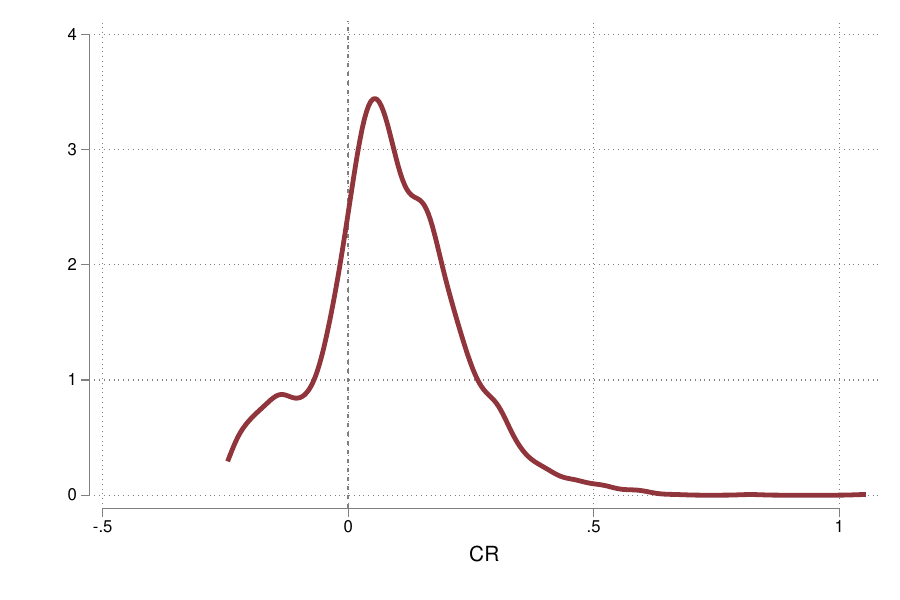}
          \caption{Conditional TOT}
    \end{subfigure}
    \begin{subfigure}{0.3\textwidth}
       \centering
      \includegraphics[width=\textwidth]{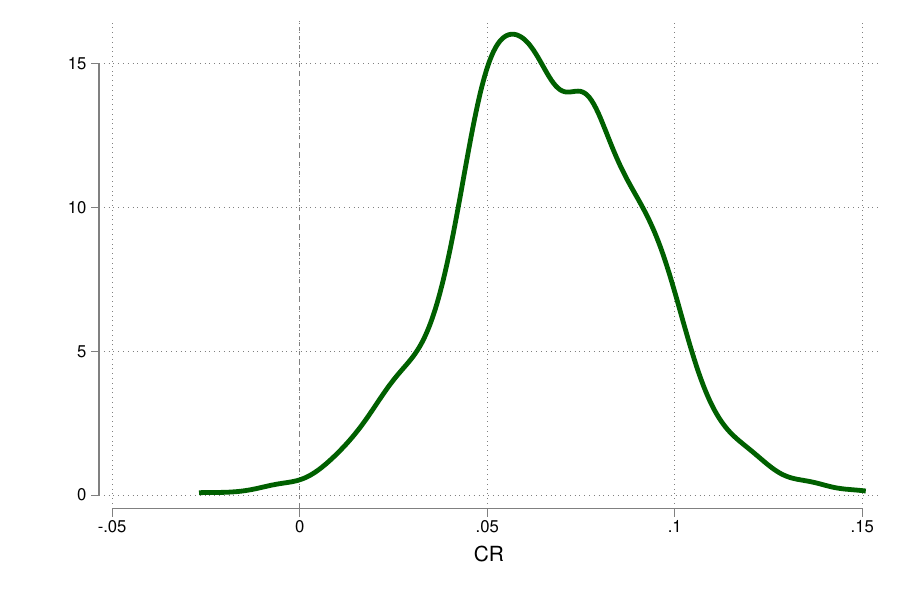}
          \caption{Conditional TUT}
    \end{subfigure} 
       \begin{subfigure}{.3\textwidth}
        \centering
        \includegraphics[width=\textwidth]{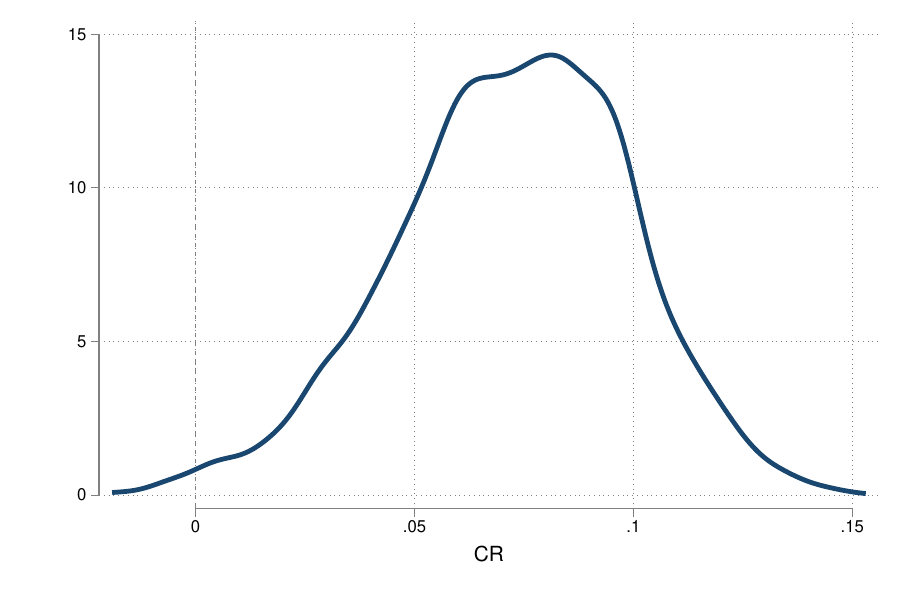}
        \caption{Conditional ATE}
    \end{subfigure} 
    \end{center}
  \scriptsize{Notes: Kernel density plots of estimated conditional average treatment effects across borrowers in our sample, constructed using survey and administrative data using the generalized random forests. See Appendix \ref{app:cate} for implementation details.} 
\end{figure}

Figure \ref{heterogeneous_effects} presents kernel density plots that summarize the estimated CATE, CTOT, and CTUT effects across the individuals in our experiment. For example, panel (c) is produced by defining $\hat{\theta}_i \equiv \widehat{\text{CATE}}(X_i)$ to be the causal forest estimate of participant $i$'s CATE, based on her survey responses, and plotting the density of $\widehat{\theta}_i$ across $i$.\footnote{These densities are merely convenient summaries of the \emph{estimated} conditional average effects across borrowers in our sample; they are not estimators of the underlying distribution of \emph{true average effects} $\theta_i$. Estimating the latter would involve solving a challenging deconvolution problem. Fortunately this is not required to address our question of interest in this section.}
In each panel, the outcome variable is CR benefit, i.e. the \emph{reduction} in CR from a structured repayment contract.

As seen from the figure, estimated conditional average effects are highly heterogeneous but overwhelmingly positive. 
The density of TUT estimates is particularly interesting because it summarizes conditional average effects for borrowers who would not voluntarily choose structure.
Strikingly, only 1\% of our estimated conditional TUTs are negative.
Using a bootstrap approach detailed in Online Appendix \ref{app:cate}, we obtain a 95\% confidence interval of $[0.1\%, \, 1.8\%]$ for the share of non-choosers in the population who have a negative average treatment effect from mandated structure.
To be clear: this is a probability statement about conditional average effects over the distribution of covariates. In particular, we estimate that  $\int \mathbbm{1}\{\mathbb{E}[Y_1 - Y_0| X = \mathbf{x}, C = 0] < 0\} \, f(\mathbf{x}|C=0) \, \mathrm{d}\mathbf{x}$ is approximately $0.008$ with the aforementioned confidence interval. 
In other words, after flexibly accounting for observed heterogeneity using survey responses, it is difficult to find groups of borrowers whose financial costs would increase on average under a policy of mandated structure, even among non-choosers. This need not imply that 99\% of \emph{individual} non-choosers would see lower financial costs under mandated structure. There could remain important sources of unobserved heterogeneity such that the share with negative individual effects, $\mathbbm{P}(Y_1 < Y_0| C = 0)$, is larger (see Section \ref{sec:bounds} for bounds on the distribution of individual effects). Nevertheless, our CTUT results strengthen the case that mandated structure is broadly helpful in reducing financial costs.

The more heterogeneity that $X_i$ explains, the closer conditional average effects become to individual effects. Thus, under the stronger assumption that our survey measures capture the main sources of treatment effect heterogeneity, our causal forest estimates of conditional average effects \emph{approximate} individual-level effects. The remainder of this section proceeds under this stronger assumption, allowing us to consider whether particular borrowers made ``mistakes'' in their choice of contract.\footnote{Here we define a ``mistake'' in terms of forgone \emph{financial cost savings}. Such savings may not imply forgone \emph{welfare improvements}. We discuss this point further below in Section \ref{sec:limitations}.}
We define a mistake for a non-chooser as a conditional TUT estimate that exceeds a specified CR threshold---for example, 5 percentage points---meaning that she would have experienced substantially lower financial costs if she had chosen structure.
Analogously, a mistake for a chooser is a conditional TOT estimate below the \emph{negative} of that same threshold, meaning she \emph{increased} her financial costs by choosing structure.

\vspace{.2in}
\begin{figure}[htbp]
 \caption{Fraction of sample foregoing financial savings}
 \label{choose_wrong}
 \begin{center}
     \begin{subfigure}{0.49\textwidth}
       \centering
      \includegraphics[width=\textwidth]{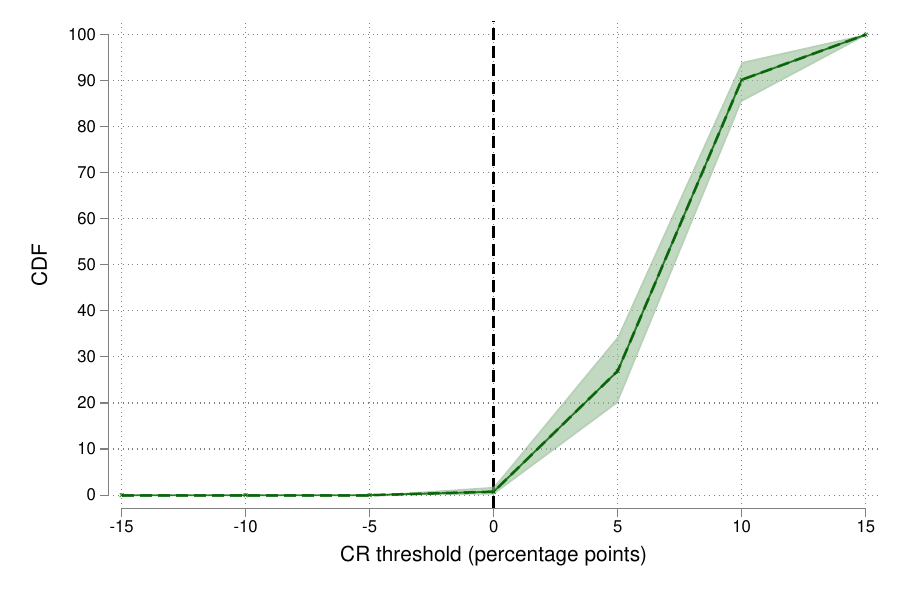}
          \caption{CDF of CTUT effects for non-choosers}
    \end{subfigure}
    \begin{subfigure}{0.49\textwidth}
       \centering
      \includegraphics[width=\textwidth]{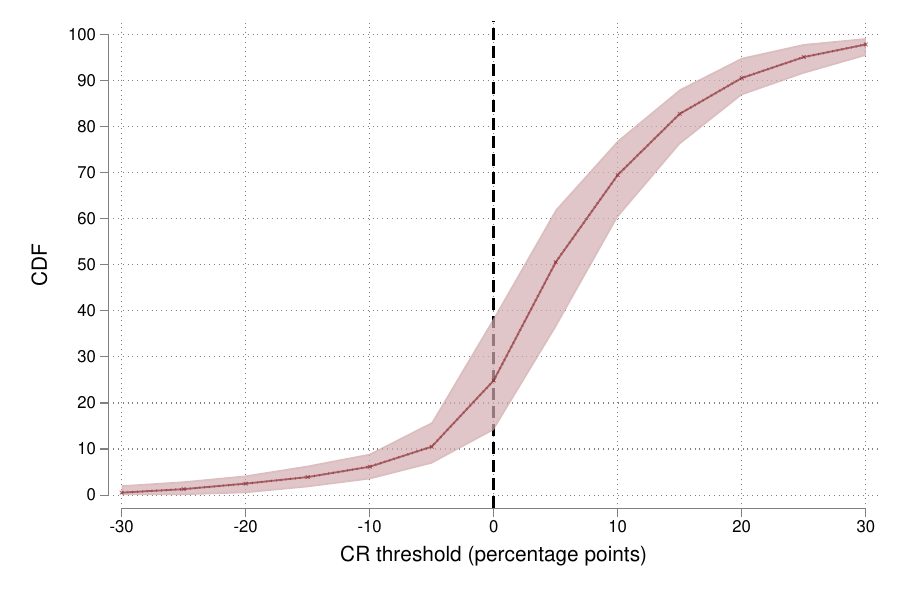}
          \caption{CDF of CTOT effects for choosers}
    \end{subfigure} 
    \end{center}
\scriptsize{Notes: Each panel shows the CDF of estimated conditional average treatment effects on financial costs (CR, in \% points of the loan value), with pointwise 95\% bootstrap confidence bands. The caption's fraction of the sample foregoing financial savings---by declining structure in panel (a), by choosing it in panel (b)---is read from these curves as follows. The fraction of non-choosers estimated to forgo financial savings of at least the threshold $t$ is 100 minus the CDF in panel (a) evaluated at $t$; the fraction of choosers estimated to have increased their financial costs by at least $t$ is the CDF in panel (b) evaluated at $-t$.}
\end{figure}

Figure \ref{choose_wrong} plots the CDFs corresponding to the densities from panels (a) and (b) of Figure \ref{heterogeneous_effects} from above, along with pointwise 95\% confidence bands constructed using the bootstrap procedure described in Online Appendix \ref{app:cate}.
Our results suggest that many non-choosers would have experienced substantially lower financial costs had they chosen structured payments. We estimate that between 98\% and 99.9\% of non-choosers have a positive CTUT effect, and between 66\% and 80\% have a CTUT that \emph{exceeds} 5 percent of the loan amount. In contrast, comparatively few choosers appear to have made mistakes by choosing structure: roughly 14-38\% of choosers have a \emph{negative} CTOT---they shouldn't have chosen structure---and only 7-16\% have a CTOT effect below -5 percentage points (a large mistake). These findings strengthen the case for mandatory structure: not only does it generate large cost savings on average, but most non-choosers belong to groups with substantial positive conditional average effects. As we discuss further in Section \ref{sec:limitations} below, financial costs savings need not imply welfare gains. Nevertheless, the cost savings we document here are large relative to measures of other costs that structure may impose on borrowers, e.g.\ transactions costs and liquidity costs. See \textbf{Other Costs} in Section \ref{sec:baseline} for details.

Given the large variation in conditional average effects, one might consider targeting mandatory structure to borrowers who were predicted to benefit, rather than applying it universally. To be feasible in the real-world, however, such a targeted approach could only rely on covariates that are either verifiable or difficult to manipulate. In our setting, however, it turns out that the narrow set of feasible covariates---age, gender, education, desired loan size, and prior pawn experience---has limited power to inform targeting. An optimal targeting rule based on these narrow covariates reduces the share of borrowers assigned to the ``wrong'' treatment only marginally.\footnote{Details available upon request.} Hence in this context, universal mandated structure may be more attractive than a feasible targeted approach.

\section{Why Does Structure Reduce Financial Costs?}
\label{why_paternalism}

Sections \ref{TuT_ToT} and \ref{sec:harm} showed that non-choosers experience substantial cost savings from structured payments. At least 28\% of borrowers have positive individual effects and 99\% of non-choosers show positive conditional average effects. Yet only 11\% choose structured payments voluntarily. 
As described in \textbf{Other Costs} from Section \ref{sec:baseline} above, the ATE of 183 pesos and TUT of 192 pesos are large relative to plausible estimates of additional liquidity and transactions costs that borrowers may incur under structured repayment.
These results, along with the novelty and unfamiliarity of the structured contract, suggest that many borrowers may have made a costly mistake by opting for the status quo. 
While we cannot definitively identify the mechanism behind this mandate-choice gap, our baseline survey allows us to explore two standard behavioral explanations: present bias and overconfidence. We find that neither measure predicts take-up, but subjective certainty about pawn recovery predicts the strength of the TUT effect. We then provide a simple model illustrating how the state-contingent features of the structured payments contract could interact with overconfidence to generate heterogeneous TUT effects along with low take-up. We close by discussing our findings in the light of the existing literature.

\paragraph{Learning and Time Discounting} Before proceeding, we consider two neoclassical explanations. The first is \emph{learning}. The structured payments contract was unfamiliar to borrowers; perhaps they required more experience to understand its benefits. As detailed in Online Appendix \ref{append:learning}, we find no evidence that borrowers assigned to the mandatory structured payments arm were more likely to choose structured payments in the future. The second is \emph{time discounting}. The monthly payments required under the structured payments contract are front-loaded, while pawn recovery is back-loaded. Perhaps high discount rates among borrowers explain their low demand for structure. In Online Appendix Figure \ref{fc_discount_rates}, we show that implausibly high annual discount rates---above 1000\%---would be required to offset TUT effects of the magnitude we estimate above. While these exercises cannot completely exclude a role for learning or discounting, they motivate us to consider behavioral explanations below.

\subsection{Present Bias}
\label{sec:PB}

Present-biased borrowers overweight immediate costs relative to future ones. In the pawn setting, this may lead to delayed repayment and, ultimately, default. 
Sophisticated present-biased borrowers realize that they stand to benefit from structure and choose it when given the option. Their na\"ive present-biased counterparts also benefit from structure but fail to realize this. When offered the choice, these borrowers opt for the status quo. If most borrowers are na\"ive present-biased, this would explain our key findings: few choose structured payments (the sophisticated) and structure benefits even the non-choosers (the na\"ive). 

\begin{figure}[htbp]
\caption{Differences in TUT effects by behavioral variables (financial cost outcome)}
    \begin{center}
        \centering
\includegraphics[width=0.65\textwidth]{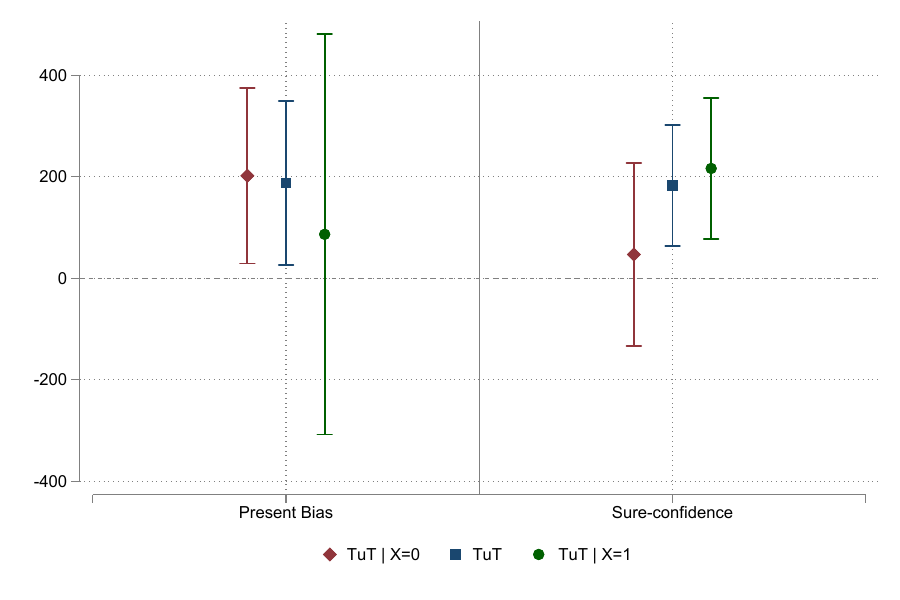} 
    \end{center}  
\footnotesize{Notes: Each panel in this figure shows how the estimated treatment on the untreated (TUT) effect on financial cost (in pesos) varies with a binary survey variable $X_i$. Error bars show 95\% confidence intervals with standard errors clustered at the branch-day level. In the left panel, $X_i = 1$ if borrower $i$ is present-biased based on her responses to the time preference questions from our survey. In the right panel $X_i = 1$ if borrower $i$ reported that she was certain to recover her pawn. The larger confidence intervals in the left panel reflect the smaller sample of borrowers who answered the PB questions. Results are similar when restricted to the subset of borrowers who answered both PB and sure-confidence questions.}
    \label{tut_beh_partition}
\end{figure}

Under this explanation, non-choosers are either time-consistent or na\"ive present-biased. Time-consistent borrowers do not benefit from structure but na\"ive present-biased borrowers do. Thus, if na\"ive hyperbolic discounting drives our results, the TUT effect for the present-biased should be larger than for other borrowers. 
To evaluate this prediction, we measure present bias using standard hypothetical choice questions from our baseline survey: comparing a decision between payment tomorrow versus payment one month in the future against a decision between payment three months versus four months in the future. Based on this measure, we classify borrowers who exhibit more impatience over immediate delays as present biased.\footnote{While this measure of present-bias is widely used, it has known limitations compared to more elaborate incentivized measures. See Section \ref{sec:limitations} for more discussion of this point.} 
Taking our survey measure at face value, we find no evidence that present-bias explains our positive estimated TUT results from above.
As shown in the left panel of Figure \ref{tut_beh_partition}, the TUT estimate for financial cost among present-biased borrowers (at right) is 87 pesos (95\% CI: -307 to 481), less than half as large as the estimate of 202 pesos (95\% CI: 29 to 374) for time-consistent borrowers (at left). This pattern is the opposite of what the na\"ive present bias explanation would predict, although the difference between TUT effects is not statistically significant.

In standard models, only sophisticated agents demand commitment. Because we cannot distinguish sophisticated from na\"ive borrowers in our survey, however, we cannot test this theoretical prediction. Empirically, our measure of present bias does not predict take-up of structured payments in our experiment (Online Appendix \ref{determinants_choose}).

\subsection{Overconfidence}
\label{sec:overconf}

Overconfidence offers another potential explanation of the low demand for structured payments despite financial cost savings from mandating it. Whereas present bias concerns preferences, overconfidence concerns \emph{beliefs}.\footnote{While conceptually distinct, present bias and overconfidence may be related in practice: see Section \ref{sec:limitations} for more discussion of this point.}
Overconfident borrowers are overly-optimistic about their probability of recovering their pawn under the status quo. They may be counting on a financial windfall that is unlikely to materialize, or they may expect that setbacks experienced in the past will not persist into the future. A highly confident borrower, one who is effectively \emph{certain} that she will recover her pawn, sees no value in structured payments and opts for the status quo. Why sacrifice repayment flexibility to avoid an outcome that she sees as impossible? But if this confidence is misplaced, she may still benefit from mandated structure.

Our baseline survey also elicited borrowers' subjective probability of recovery before they knew their treatment assignment, asking them to mark a point along an interval ``where 0 is impossible and 100 is completely certain''. In total, 72\% of survey respondents were \emph{completely certain} they would recover. We call these borrowers \emph{sure confident}.\footnote{Figure \ref{determinants_sure} in the Online Appendix shows the correlates of sure confidence in our sample.}
Not all sure confident borrowers are necessarily overconfident, but the overall default rate of 42\% for the sure confidents under the status quo suggests that many are overly optimistic about their prospect of recovery. Empirically, we find that the positive TUT effect is concentrated among the sure confident. As shown in the right panel of Figure \ref{tut_beh_partition}, the TUT estimate for sure confident borrowers (at right) is 216 pesos (95\% CI: 78 to 354 pesos). The estimate for other borrowers (at left) is 46 pesos (95\% CI: -133 to 227), although the difference between TUT effects is not quite significant ($p = 0.12$).
Figure \ref{counterfactual_TUT} in the Online Appendix shows that these differences in TUT effects are driven by differences in average untreated potential outcomes: $\mathbbm{E}[Y(0)|C=0]$.
Under structure, sure confident borrowers experience the same borrowing costs on average as other borrowers; under the status quo, however, their borrowing costs are much higher. This pattern is consistent with the idea that structured payments correct mistakes among overconfident borrowers.

For take-up, overconfidence delivers a clearer theoretical prediction than present bias: borrowers certain of recovery should show less demand for structure because it is a solution to a problem they do not believe they have. 
As seen from Online Appendix \ref{determinants_choose}, however, neither sure confidence nor present bias predicts take-up in the choice arm. This is consistent with recent results from \cite{CarreraEtAl2022} and suggests that self-targeting of structured payments may not be effective in our context.

\subsection{Modeling the Role of Overconfidence}
\label{sec:model}
These results raise a natural question: why would sure confident borrowers react to mandated structure in the first place? If they are certain that structure will not help them to avoid default, why don't they simply ignore the monthly payment deadlines? The answer lies in two unique features of pawn lending. First, late fees are state-contingent, because they can only be collected from borrowers who ultimately recover their pawn. Second, prepayments are risky because they are forfeited in default. 

Evidence for the importance of the fee comes from our Promise treatment arms, which provide the same monthly payment schedule as the structured payment contracts but without late fees. TUT effects are negative for all groups with wide confidence intervals that include zero; see Online Appendix figure \ref{tut_beh_partition_promise} for details. This suggests that the fee and its enforcement by the lender are critical. Structured payments generate sharp bunching at monthly deadlines (Figure \ref{porc_payment_over_time} (a)) and substantial cost savings (Table \ref{main_impact_table}). Promise arms with identical schedules but no enforcement show no bunching (Figure \ref{porc_payment_over_time} (b)) and null effects (Table  \ref{sec_impact_table}). The fee makes the payment structure both salient and credible. The payment schedule alone does not change behavior.

Motivated by these findings, Appendix \ref{append:model} presents a simple stylized model explaining how the state-contingent features of the structured payments contract interact with borrower beliefs to make sure confident borrowers \emph{more responsive} to mandated structure. There are three periods. In the first, borrowers are assigned the status quo or structured payments contract.\footnote{Because its goal is to explain a difference in TUT effects arising from overconfidence, our model is designed to apply only to borrowers who would not choose structure voluntarily, 89\% of our sample.} 
In the second, they decide whether or not to make a fixed prepayment toward recovery of their pawn. In the third, the default or recovery outcome is realized, with a probability that depends on prepayment. Borrowers are free to make a prepayment under either contract type, but structure changes the costs and benefits of prepayment by introducing a fee for missed prepayments.
\footnote{As discussed in Section \ref{sec:limitations}, the ``fee'' in our model can be interpreted more broadly as a reduced form for behavioral channels that are absent from the Promise arm but may be present in the Structured payments arm.}
To accommodate overconfidence, the model allows borrowers' subjective probability of recovery given prepayment to differ from the objective probability.

Sure confident borrowers are certain they will recover their pawn. For this reason they are also certain they will pay a fee if they miss prepayments under the structured payments contract. Borrowers who are less certain of recovery perceive weaker incentives from the same fee: they believe there is a chance that they will default and never owe the fee. Sure confident borrowers are also certain that they will not forfeit any prepayments made, whereas other borrowers view prepayment as risking money that they may lose in default. Under plausible conditions on liquidity costs and beliefs, described in Appendix \ref{append:sure-confident}, these two channels imply that sure confident borrowers are the \emph{most likely} to react to mandated structure by prepaying. In Figure \ref{SC_prepayment}, we show that this is indeed the case in our experiment, by comparing the causal effect of structure on prepayment for sure confident borrowers against the same effect for other borrowers (a difference-in-differences). Consistent with our model, the difference in causal effects of structure on cumulative prepayment (sure confident minus not) jumps around day 30 and remains elevated until day 90. Similarly, the difference in effects of structure on prepayment within a seven-day window jumps around days 30 and 60, the interim payment days stipulated by the loan contract.
In short: the sure confident react more strongly to structure, as predicted by the model.

In our model, prepayment improves the chance of recovery. As shown in the Appendix, unless sure confident borrowers have a \emph{much smaller} effect of prepayment on recovery, their greater responsiveness to structure implies that they will have the largest TUT effect. This is consistent with our findings from \ref{tut_beh_partition} above.

\subsection{Discussion of Behavioral Mechanisms}
\label{sec:limitations}

The results discussed above suggest that overconfidence about pawn recovery may provide a better explanation for our main empirical findings than present bias. However, this result should be interpreted cautiously for a number of reasons. First, our measure of present bias has several limitations that have been highlighted in the recent behavioral literature: it is dichotomous, unincentivized, and based on hypothetical monetary choices rather than consumption or real effort choices \citep{andreoni2012estimating,andreoni2015measuring,augenblick2015working,cohen2020measuring}. 
Given field constraints---surveying real pawnshop clients before commercial transactions---our simple measure was a necessary compromise, but the null result for present bias could reflect measurement error in our survey instrument.

Second, present bias and overconfidence are conceptually related. \cite{heidhues2010exploiting} and \cite{allcott2022high} show how na\"ive present bias can mechanically create overconfidence about future behavior.
Because our measure of present bias does not separately identify na\"ive from sophisticated types, our overconfidence measure may to some degree capture na\"ive present bias rather than pure overoptimism. Empirically, however, the correlation between our two measures is only -0.01, suggesting that they function as independent constructs in our sample, although this too could reflect error in our measurement of present bias. Following \cite{Manski2004MeasuringExpectations}, a large literature aims to elicit subjective beliefs, which are often considered easier to measure than time preference. Whether overconfidence is the correct primitive or a more tractable proxy, our results suggest that understanding beliefs about repayment has an important role to play in the design of financial services.

As emphasized throughout the paper, lower financial costs need not imply welfare gains. Our model provides one way to consider the welfare implications of mandated structure. Borrowers who prepay regardless of contract type---always-payers---experience no change in welfare from mandated structure. 
Borrowers who never prepay regardless of contract type---never-payers---are harmed in expectation by mandated structure: their recovery prospects are unchanged, but those who recover pay additional fees with no compensating benefit. For ``switchers''--- borrowers who only prepay under structure---welfare effects could be positive. On the one hand, switchers with correct beliefs are harmed: structure shifts them away from the optimal choice. 
On the other hand, switchers who overestimate their probability of recovery under the status quo, or who underestimate the efficacy of prepayment in avoiding default, can experience welfare gains. 
In our model, the aggregate welfare effect of mandated structure is more positive (or less negative) the greater the share of sure-confident borrowers, under conditions given in Online Appendix \ref{append:welfare}. This result does not, however, sign the aggregate welfare effect of mandated structure. Our results, documenting widespread overconfidence and financial cost savings from structure, are suggestive (not conclusive) of positive overall welfare effects. If the disutility from potentially violating the structure caused sufficient mental stress to outweigh the financial savings, however, welfare could fall. Similarly, if borrowers value the liquidity forgone by paying early far in excess of prevailing interest rates (the rate at which our model prices it), or if scheduled payments induce larger hassle costs than our estimates suggest, borrowers could be left worse off. Richer data than our survey provides would be needed to reach a definitive conclusion.   

An important question that our model leaves unanswered is why a small fee causes such a large change in behavior.
Our model posits overconfidence as an explanation for differences in borrowers' responses to structure, but the fact that they respond at all comes from the fee $f$.
We interpret $f$ more broadly as a reduced form for state-contingent behavioral channels that are absent from the Promise arm but may be present in the Structured payments arm, making it more effective than its relatively small monetary size would suggest.
Given the limitations of the data at our disposal, we can only speculate as to what these channels may be: borrowers may treat the familiar status-quo contract as a reference point, where kinked preferences could give the small fee disproportionately large behavioral effects.
\cite{Homonoff2018PlasticBags} proposes a similar interpretation for her finding that a \$0.05 fee reduced disposable shopping bag use by over 40 percentage points while an economically equivalent bonus had almost no effect.
Alternatively, the fee may capture the shame or regret of being formally penalized by an authority figure for violating the terms of a legal contract.
These channels are all contingent on repayment.
We cannot rule out channels that are not: the structured contract may work not only as a price schedule, but also as an instruction from the lender to pay monthly \citep{akerlof1991procrastination}, may cause borrowers to notice, remember, and track the payment schedule more closely, or may help them break a large future obligation into smaller intermediate targets, improving follow-through \citep{nickerson2010voting,milkman2011using,milkman2013planning,beshears2016beyond}.
Together, the essentially zero TUT for non-sure-confident borrowers (though imprecisely estimated) and the null result from our Promise arm suggest that the operative mechanism is state-contingent and reliant on external imposition by the lender.

\section{Conclusion} \label{conclusion}

In this paper we make three main contributions. First, we analyze pawn lending, an important but understudied industry serving hundreds of millions worldwide. We provide new stylized facts and show how contract structure drives repayment and borrower costs. In contrast with recent research documenting benefits of repayment flexibility in microfinance, we show that such flexibility can be costly in overcollateralized loan markets. Second, we introduce a ``Mandates versus Choice'' design that identifies treatment effects separately for choosers and non-choosers without requiring structural modeling assumptions. This allows us to test whether self-selection is efficient: whether those with the largest financial cost reductions choose treatment. This is a fundamental question for policy design and for evaluating paternalistic interventions. The design could prove valuable in other settings with low voluntary adoption, such as medical treatment adherence or commitment savings devices, where researchers need to distinguish whether low take-up reflects heterogeneous treatment effects or systematic mistakes. Third, we show that a simple change to contract terms generates substantial financial savings: mandatory structured payments lower financial cost by 183 pesos and reduce default by 7.7 percentage points, yet only 11\% voluntarily choose this contract. We find substantial cost savings for non-choosers and no evidence of selection on gains. To our knowledge, this is the first such result in the household finance literature.

At first it may seem puzzling that lenders do not offer structured payments contracts. Our results suggest a plausible explanation. Lender profits are higher when the borrower defaults. Paired with low demand for structured payments from borrowers---at least in the short run, when borrowers lack experience with the contract---there seems to be little competitive incentive to offer it. This creates a catch-22: borrowers will not learn the benefits of structured payments if lenders never introduce such contracts, but there may be little incentive for their introduction. Our context may thus be a case of ``veiled paternalism'' from \cite{Laibson2018} turned on its head: principals do not embed commitment into contracts in a manner that is not obvious to agents, benefiting the principal at their expense. 

While the Mandates versus Choice design enables us to estimate a range of important treatment effects, our experiment was not designed to identify preferences or pin down specific mechanisms. We make some progress nonetheless. We find that borrowers who are sure they will recover their pawn drive the positive TUT effect, despite having a 42\% default rate under the status quo. We show this pattern can be rationalized by a simple model in which the state-contingent features of the structured payments contract interact with overconfidence. Overconfident borrowers do not demand structure, but benefit most when it is mandated because their optimism is misplaced.

These results open promising directions for future research. First, it would be fruitful to test whether repeated experience with structured payments increases demand for them and reduces overconfidence.\footnote{Empirical research shows overconfidence can persist in investing \citep{BarberOdean2001}, management \citep{HuffmanRaymondShvets2022}, expert performance \citep{HeckBenjaminSimonsChabris2024}, and household forecasting \citep{StangoZinman2020}, and can be supported by motivated beliefs and selective/distorted memory \citep{BenabouTirole2002,HuffmanRaymondShvets2022}, asymmetric updating \citep{EilRao2011}, cognitive uncertainty \citep{EnkeGraeber2023}, and noisy feedback \citep{MooreHealy2008}.} Whether increased demand would be sufficient to overcome lenders' incentive to profit from default remains an open question.
Second, following a growing misperceptions literature, one could test whether providing information about true default probabilities increases demand for structure. Third, collecting in-depth measures of preferences could help separately identify the roles of present bias and overoptimism and permit more definitive welfare conclusions. Given the scale of pawn lending worldwide, our findings suggest that relatively simple changes to contract design could lower financial costs for millions of borrowers.

\ifdefined\NoDataAvailability\else
\input{data-availability}
\fi

  \begingroup
  \setstretch{1.10}
  \renewcommand{\refname}{References}
  \putbib
  \endgroup
\end{bibunit}

\clearpage
\newpage

\setcounter{table}{0}
\setcounter{figure}{0}
\setcounter{section}{0}
\pagenumbering{arabic}
\renewcommand\thefigure{OA-\arabic{figure}}
\renewcommand\thetable{OA-\arabic{table}}
\renewcommand*{\thepage}{OA - \arabic{page}}
\renewcommand\thesection{Appendix \Alph{section}.}
\renewcommand\thesubsection{\Alph{section}.\arabic{subsection}}

\begin{center}
	\Large ONLINE APPENDIX: Structured Payment in Pawnshop Borrowing: Mandates vs.\ Choice \\[0.5em]
	\large Francis J.\ DiTraglia, Craig McIntosh, Isaac Meza, Joyce Sadka and Enrique Seira
\end{center}

\begin{appendix}

\begin{bibunit}
    
\section{Additional materials}

\begin{figure}[H]
     \caption{Contract Terms Summary, and Promise Slip}
    \label{PaperSlip}
    \begin{center}
    \begin{subfigure}{0.75\textwidth}
        \centering
        \includegraphics[width=\textwidth]{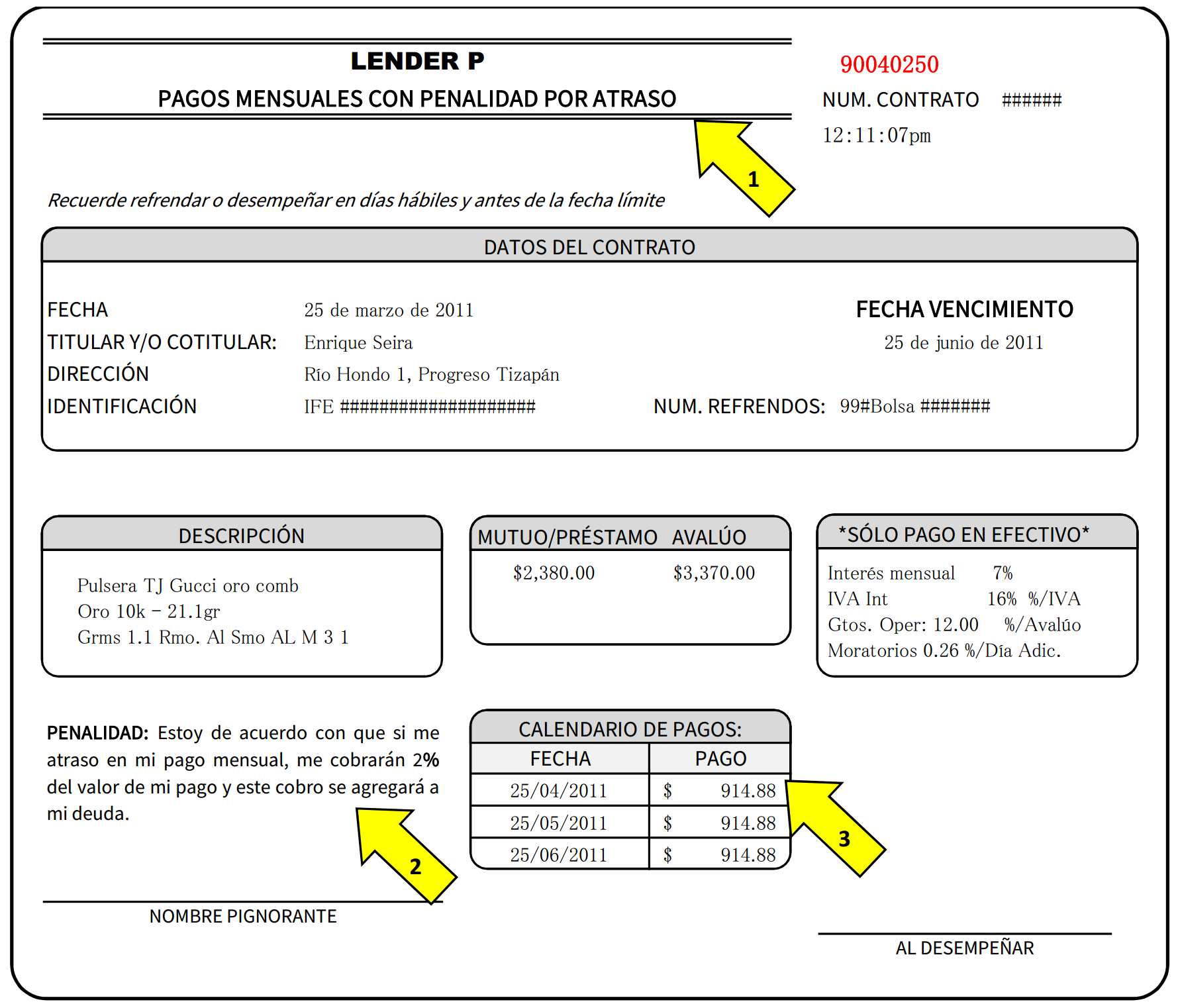}
    \end{subfigure}
    
    \vspace{3ex}
   
    \end{center}
    \scriptsize
        This figure is a sample receipt that was given to clients that got assigned to the structured-payments contract (the font and some aspects of format were changed to protect Lender's P identity). We want to highlight the salience of some items. First the title clearly indicated which contract the client has (arrow 1). Second, in the case of the structured payments contract it clearly indicates that there is a fee for paying late equivalent to 2\% of the value of the monthly payment (arrow 2). Third, there is a calendar for payments clearly specifying the dates and amounts to pay each month. 
\end{figure}

\begin{figure}[H]
    \caption{Weekly default rates experimental branches and all branches}
    \label{external_validity}
    \begin{center}
    \begin{subfigure}{0.65\textwidth}
        \centering
        \includegraphics[width=\textwidth]{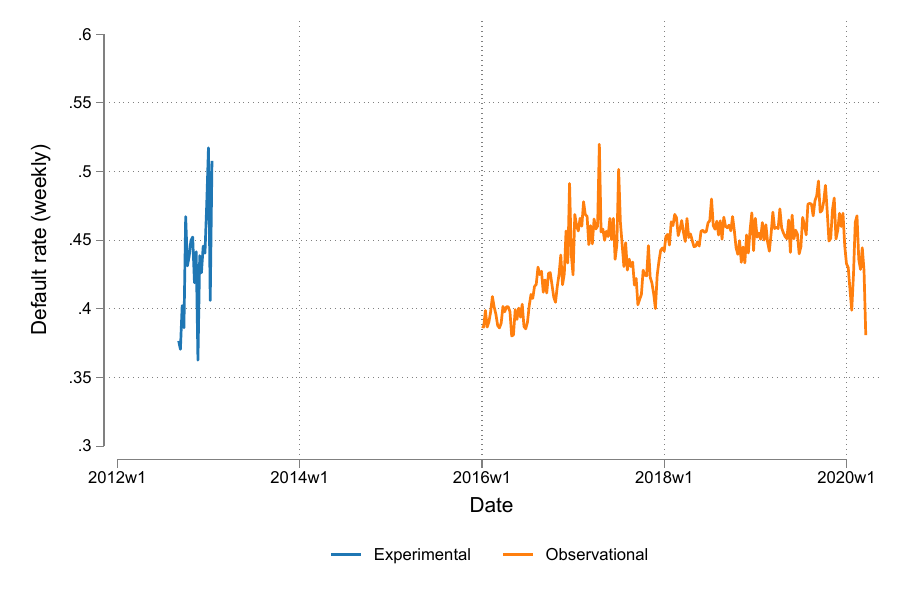}
    \end{subfigure}
    \end{center}
\footnotesize{The above figure shows the weekly default rates for our experimental sample, and for 4 years after our experiment. We show that the default rates in the experimental sample are not atypical. }
    \label{weekly_def_rates}
\end{figure}

\begin{figure}[H]
     \caption{Behavior of borrowers who lost their pawn.}
    \begin{center}
    \begin{subfigure}{0.35\textwidth}
        \caption{Elapsed days to first payment}
        \centering
        \includegraphics[width=\textwidth]{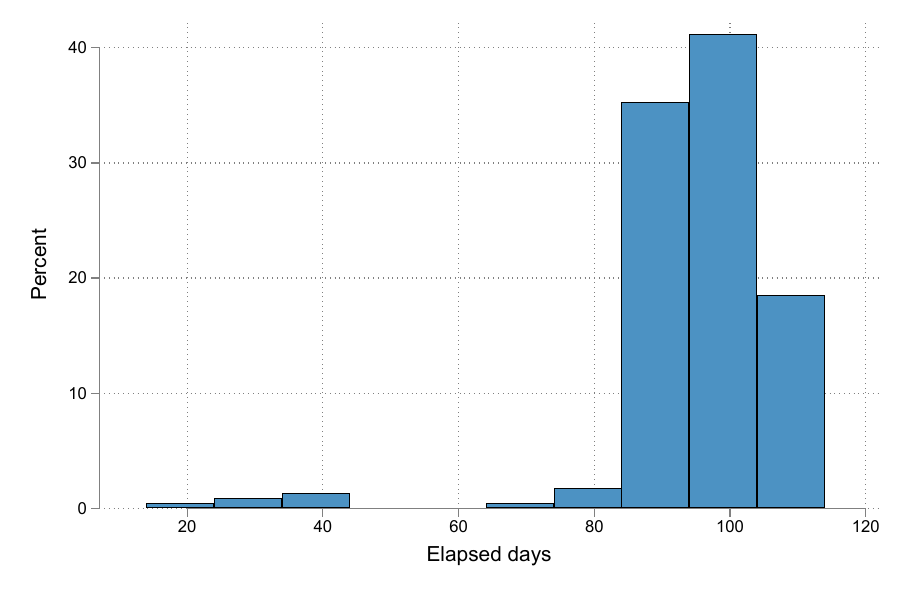}
    \end{subfigure}
    \begin{subfigure}{0.35\textwidth}
        \caption{Elapsed days to last payment}
        \centering
        \includegraphics[width=\textwidth]{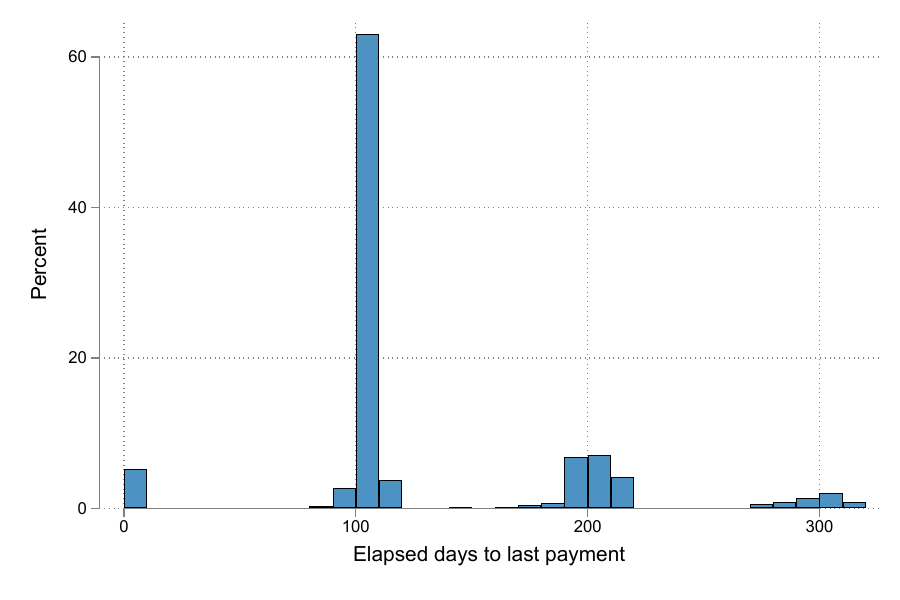}
    \end{subfigure}
        \begin{subfigure}{0.35\textwidth}
        \caption{Payments as \% of loan}
        \centering
        \includegraphics[width=\textwidth]{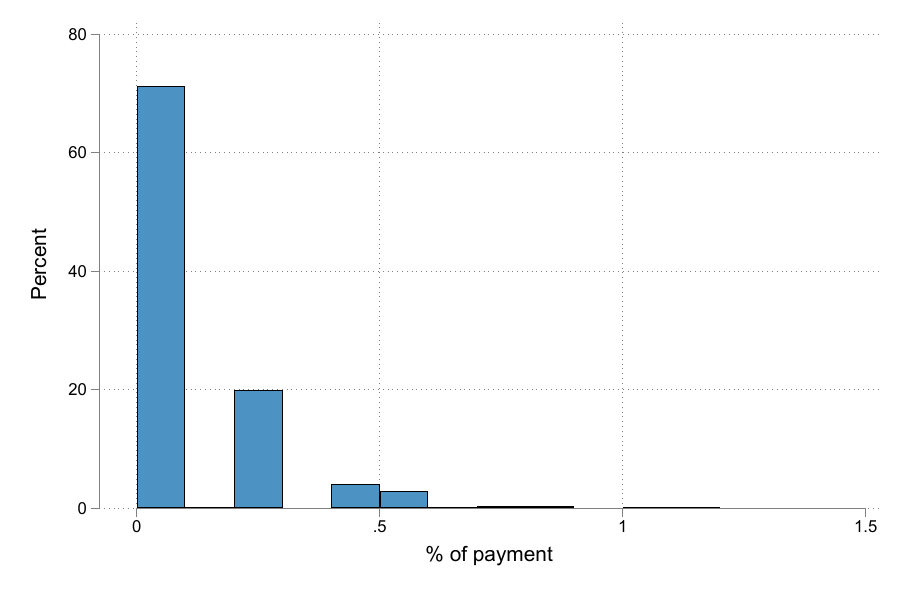}
    \end{subfigure}
    \begin{subfigure}{0.35\textwidth}
        \caption{Number of payments}
        \centering
        \includegraphics[width=\textwidth]{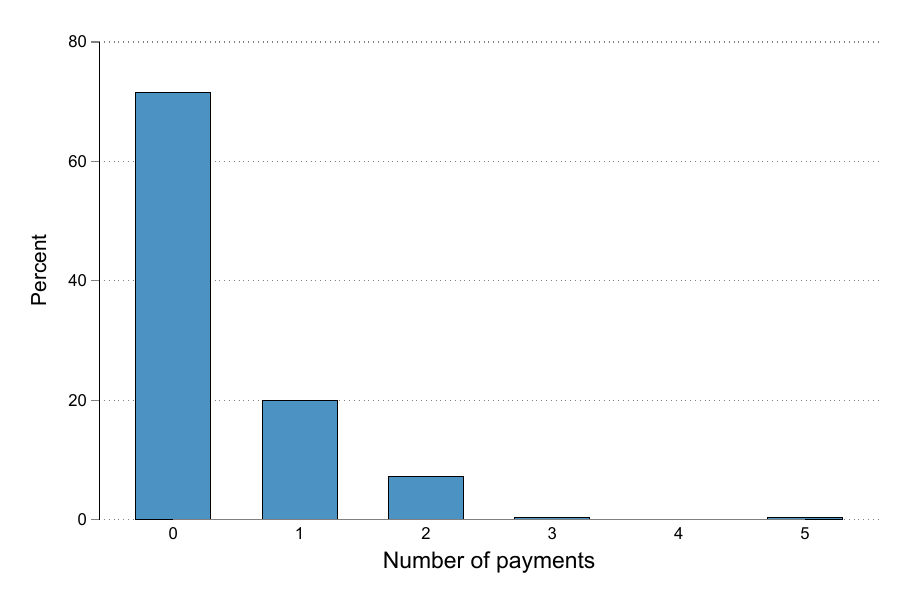}
    \end{subfigure}
    \end{center}
        \footnotesize{This figure provides more details on the behavior of clients who were assigned to the control group and did not recover their pawn. Panel (a) shows a histogram of days elapsed from the pawn to the first payment, while panel (b) displays a histogram of days elapsed until the last payment. Some borrowers make payments after day 105, the end of the grace period: if they pay all interest owed, they can ``restart'' the loan. This amounts to starting a new loan with the same conditions and same pawn. Panel (c) shows a histogram of the fraction of the loan paid, while panel (d) presents a barplot of the number of times that borrowers went to the branch to make payments.}
        \label{proxy_naive}
\end{figure}

\normalsize
\normalsize

\begin{figure}[H]
    \caption{Determinants of choice.}
    \begin{center}
    \begin{subfigure}{0.65\textwidth}
        \centering
        \includegraphics[width=\textwidth]{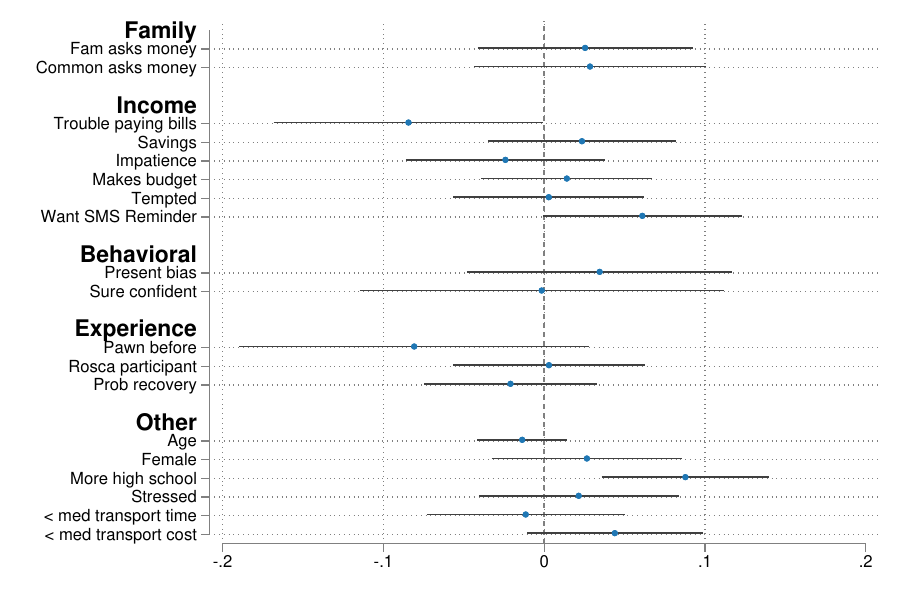}
    \end{subfigure}
    \end{center}
\footnotesize{This figure shows estimated coefficients and standard errors from a multivariate OLS regression that predicts an indicator for take-up of structured payments using the specified covariates for borrowers in the choice arm. Overall, our baseline survey variables are relatively weak predictors of contract choice.} 
    \label{determinants_choose}
\end{figure}

\subsection{No differential selection or imbalance: full details}
\label{app:integrity}

Here we provide full details of the no-differential-selection and balance evidence summarized in Section \ref{sec:integrity} of the body.

The validity of our experimental design depends critically on the absence of differential selection: borrowers must not selectively avoid the branch on days when certain contracts are offered. Three pieces of evidence establish that no such selection occurred.

First, Table \ref{attrition_table_final} uses administrative data to show there are no differences in the number of loans, borrowers, or amount borrowed across arms. Each row corresponds to a regression of the specified variable against indicators for treatment assignment (Status quo/control, Structured payment, Choice), with no constant. In the first row, we use the sample of borrowers who answered the baseline survey to create the binary indicator ``loan take-up'' that equals 1 if the surveyed borrower proceeded to take a loan. Because we surveyed potential borrowers before they learned the treatment assignment, their decision to take a loan should not differ across treatment arms. The overwhelming majority (96.1\%) of those who answered the baseline survey took a loan after learning the contract terms. This rate is not only high but also virtually identical across treatment arms: 96.7\% in the control arm versus 95.6\% and 96.2\% in the structured payments and choice arms, respectively. With standard errors close to 1\%, we cannot reject equality of means (p=0.86). 
These identical take-up rates across treatment arms demonstrate that borrowers did not differentially select into treatment conditions.

Second, rows 2-4 of Table \ref{attrition_table_final} use administrative data to test whether the average number of borrowers, pawns per borrower, and pawns per day are the same across experimental arms.  We cannot reject the null hypothesis that they are equal (p-values of 0.17, 0.43, and 0.24), indicating that borrowers did not selectively avoid pawning on days where the structured payments arm or choice arm was assigned.\footnote{Cluster-robust standard errors are on the order of 6 borrowers per day, even though our sample size is relatively large compared to similar studies. We have 80\% power to detect 7pp effects on default (a main outcome) and a difference of 20 loans per day (not an outcome studied).} 
We were particularly concerned that potential borrowers would leave the branch after hearing that the status quo contract was unavailable. However, we find little evidence of this: we observe 20.8 borrowers on status quo days (95\% CI: 14 to 27) and 22.2 on structured payment days (CI: 14 to 30).  The number of borrowers on choice days is 25.4, one standard error above the status quo arm. This appears to be due to sampling variability and a few outliers on choice days. The differences across arms are smaller when we examine medians and disappear when we winsorize at the 95th percentile. Rows 5-6 of Table \ref{attrition_table_final} show the same pattern for loan amounts: we cannot reject equality across arms.

\begin{table}[H]
\caption{No selection across arms}
\label{attrition_table_final}
\begin{center}
\resizebox{0.65\textwidth}{!}{
\footnotesize{
\begin{tabular}{lcccr}
\toprule
      & Control & Structure & Choice & \multicolumn{1}{c}{p-value} \\
\midrule
\midrule
Take-up & 0.967 & 0.956 & 0.962 & 0.86 \\
      & (0.01) & (0.01) & (0.01) &  \\
\midrule
Number of borrowers & 20.8  & 22.2  & 25.4  & 0.17 \\
      & (3.29) & (3.9) & (4.89) &  \\
\rowcolor[rgb]{ .949,  .949,  .949} \multicolumn{1}{r}{median} & 19    & 20    & 21    & 0.46 \\
\midrule
Number of pawns/borrower & 1.4   & 1.4   & 1.4   & 0.43 \\
      & (0.08) & (0.04) & (0.05) &  \\
\rowcolor[rgb]{ .949,  .949,  .949} \multicolumn{1}{r}{median} & 1.4   & 1.3   & 1.3   & 0.54 \\
\midrule
Number of pawns  & 31    & 31.3  & 37.2  & 0.24 \\
      & (5.8) & (5.6) & (7.9) &  \\
\rowcolor[rgb]{ .949,  .949,  .949} \multicolumn{1}{r}{median} & 27    & 28    & 30    & 0.43 \\
\midrule
Amt borrowed/borrower & 2266.8 & 2094  & 2115.2 & 0.18 \\
      & (101.8) & (83.7) & (99.9) &  \\
\rowcolor[rgb]{ .949,  .949,  .949} \multicolumn{1}{r}{median} & 2154.3 & 2041  & 2047.5 & 0.68 \\
\midrule
Total borrowed & 47877 & 47813 & 54780 & 0.4 \\
      & (8005) & (9436) & (12587) &  \\
\rowcolor[rgb]{ .949,  .949,  .949} \multicolumn{1}{r}{median} & 37520 & 39420 & 40850 & 0.73 \\
\midrule
Obs   & 85    & 81    & 94    &  \\
\bottomrule
\bottomrule
\end{tabular}%
}
}
\end{center}
\scriptsize{Each row in this table presents results for a regression at the branch-day level of the specified outcome variable on indicators of each experimental arm---control, mandatory structure, and choice---excluding an intercept. The table reports the coefficients on each of these indicators, standard errors clustered at the branch level, and p-values from an F-test of the null hypothesis of equality of the three coefficients, computed from a cluster-robust Wald statistic. We also report median regression results for all outcomes besides Take-up.
 }
\end{table}

Third, we cannot reject equality of \textit{borrowers'} characteristics across arms. Table \ref{SS_final} uses data from the baseline survey and estimates the same regression specification as above.  Again this is consistent with no differential selection across arms.
All p-values (except for the subjective probability of recovery) are above 0.25 and the estimated differences across arms have tight confidence intervals. For instance, the subjective probabilities of recovery are 91.9, 91.7, and 93.6 across the three arms, with standard errors of 0.7, 1, and 0.6, respectively.  

Taken together, these tables constitute strong evidence that there was no differential borrower selection across arms. While a differentially selected pool of borrowers would still provide useful information for the lender about the profitability of different contracts, it would not allow for the analysis we conduct comparing potential outcomes across arms for the same borrowers.

\begin{table}[H]
\caption{Borrowers' characteristics are balanced}
\label{SS_final}
\begin{center}
\resizebox{0.65\textwidth}{!}{
\footnotesize{
\begin{tabular}{lcccc}
\toprule
      & \multicolumn{1}{p{4.5em}}{Control} & \multicolumn{1}{p{4.93em}}{Structure} & \multicolumn{1}{p{3.43em}}{Choice} & \multicolumn{1}{p{3.43em}}{p-value} \\
\midrule
      & \multicolumn{4}{c}{Panel : Survey Data} \\
\midrule
\midrule
Subjective value & 4084  & 3877  & 4173  & 0.51 \\
      & (186) & (193) & (172) &  \\
Trouble paying bills & 0.19  & 0.21  & 0.18  & 0.67 \\
      & (0.024) & (0.023) & (0.02) &  \\
Present bias & 0.14  & 0.13  & 0.13  & 0.89 \\
      & (0.02) & (0.01) & (0.01) &  \\
Makes budget & 0.62  & 0.59  & 0.65  & 0.29 \\
      & (0.028) & (0.036) & (0.021) &  \\
Subj. pr. of recovery & 91.89 & 91.65 & 93.61 & 0.09 \\
      & (0.721) & (1.031) & (0.582) &  \\
Pawn before & 0.87  & 0.89  & 0.9   & 0.25 \\
      & (0.015) & (0.013) & (0.011) &  \\
Age   & 43.32 & 42.85 & 43.82 & 0.73 \\
      & (0.688) & (0.949) & (0.792) &  \\
Female & 0.73  & 0.72  & 0.71  & 0.88 \\
      & (0.023) & (0.019) & (0.02) &  \\
+ High-school & 0.66  & 0.67  & 0.65  & 0.84 \\
      & (0.027) & (0.022) & (0.018) &  \\
\midrule
Obs   & 1386  & 1469  & 1982  &  \\
\bottomrule
\bottomrule
\end{tabular}%
}
}
\end{center}
\scriptsize {The table uses survey data to show balance across arms. Survey data is not used for the main results of the paper. It is only used in Sections \ref{sec:RF} and \ref{why_paternalism}. Each row in this table presents regression results in which the unit of observation is a borrower who completed our baseline survey. Each regresses the specified survey covariate on indicator variables for each treatment arm---control, mandatory structure, and choice---without an intercept. The table reports estimates of these coefficients, standard errors clustered at the branch-day level, and p-values from an F-test of the null hypothesis of equality of the three coefficients, computed from a cluster-robust Wald statistic. ``Subjective value'' of the pawn refers to the client's minimum selling price for the pawn in pesos, ``Trouble paying bills'' is an indicator for the borrower reporting having problems paying utility bills in the last 6-months. Present bias is constructed as in \cite{Ashraf}; ``Makes budget'' as an indicator for whether the household makes an expense budget for the month ahead of time. The subjective probability of recovery was elicited \`a la Manski (from 0 to 100 what is the probability that you will recover your pawn); pawned before equals one if the client reports having pawned before (although not necessarily with Lender $P$); age is age of the borrower in years, +High-school is a dummy that indicates if the client has completed high school. 
}
\end{table}

\subsection{Censoring} \label{App_censoring}

Some loans in our sample are ``censored'' in that they continue beyond our observation period.  For these loans, we do not know whether the borrower ultimately defaulted or recovered her pawn.   We have also shown that one effect of the forcing arm is to accelerate repayment, meaning that it is less likely for loans in this arm to be censored.  This issue is illustrated in Figure \ref{survival_graph}, which shows the CDF of loan completion (either default or recovery in Panel (a)) and loan recovery (Panel (b)) by the number of days since first pawn.  Two features of these graphs are salient for our analysis.  The first is the extent to which loan outcomes are observed more quickly in the mandatory structure arm.  This is primarily due to the substantially higher rate of repayment of Mandatory structure loans at 120 days (15 pp higher than the other arms).  The second is the very low rate at which loans are recovered in any arm after 120 days.  In the 180-320 day window loans are largely dormant, suggesting that many of the censored loans will in fact end in default.

The confluence of censoring and a treatment effect on censoring is potentially problematic from an experimental point of view.  The approach taken in the headline results is a conservative one in that it inherently assumes that all of the loans outstanding at the end of the observation window will be repaid, making it so that the acceleration of payment observed in the Mandatory arm does not translate mechanically into the that treatment decreasing default.  Nonetheless, to be certain that this issue is not driving our results we conduct a bounding exercise to understand how large the effects of this problem can possibly be.  

One way of considering the effect that this issue could have on our results is to make extreme assumptions about the outcome of these loans in the treatment and control so as to bound the possible influence of censoring. In Table \ref{bounding_censoring} we compare the Mandatory and Control arms, bounding the censoring issue by reversing assumptions about the outcome of censored loans in the treatment versus the control.   Panel B provides the lower bound for the treatment effect (closest to zero) by assuming censored control loans are always repaid and treatment loans never are; even in this lower-bound case the treatment effect is cost-reducing and significant at the 1\% significance level.
Panel C estimates the upper bound by making the reverse assumption.  Comfortingly, even with these extreme assumptions the significance on the main treatment effects never flips and treatment effects on financial cost and interests payments remain negative and significant in all scenarios.  So there appears to be no scope for the censoring issue to overturn our main results. 

Finally, Panel E of this table conducts a logit prediction model that uses all of the available information on payment behavior for loans that were completed to predict the outcome of loans that were not.  This is a ``best guess'' of the outcome on censored loans.  Using this prediction, we replicate the main experimental results and find that the treatment effect on financial cost increases from -183 (main results) to -235 (censored loans predicted), and the CR from -6\% to -8\%.  Hence, while the censoring issue does have an effect on the magnitude of our estimated treatment effects, these  checks confirm that (a) the core results are fully robust to censoring, and (b) the headline approach that we take to the issue is conservative and likely understates the true magnitude of impacts.

\begin{table}[H]
\caption{Bounding censoring} 
\label{bounding_censoring}
\begin{center}
\resizebox{0.75\textwidth}{!}{
\footnotesize{
\begin{tabular}{lcccccc}
\toprule
      & FC    & Interest pymnt & Principal pymnt & Lost pawn value & Default & CR \\
\midrule
      & \multicolumn{6}{c}{Panel A : $\quad$ Control  = 0           $\quad\quad$                 Mandatory structured  = 0} \\
\midrule
\midrule
      & (1)   & (2)   & (3)   & (4)   & (5)   & (6) \\
\midrule
\midrule
Mandatory structured & -213.2*** & -154.3*** & -1.62 & -91.3*** & -0.077*** & -0.076*** \\
      & (53.8) & (44.8) & (3.15) & (34.5) & (0.025) & (0.014) \\
      &       &       &       &       &       &  \\
\midrule
Observations & 3724  & 3724  & 3724  & 3724  & 3724  & 3724 \\
R-sq  & 0.008 & 0.007 & 0.000 & 0.003 & 0.006 & 0.025 \\
Control Mean & 989.9 & 593.4 & 5.96  & 396.5 & 0.44  & 0.45 \\
\midrule
\midrule
      &       &       &       &       &       &  \\
\midrule
      & \multicolumn{6}{c}{Panel B : $\quad$ Control  = 0         $\quad\quad$                    Mandatory structured = 1} \\
\midrule
\midrule
      & (7)   & (8)   & (9)   & (10)  & (11)  & (12) \\
\midrule
\midrule
Mandatory structured & -166.8*** & -171.7*** & -0.055 & -27.6 & -0.0016 & -0.050*** \\
      & (55.6) & (44.2) & (3.43) & (35.6) & (0.027) & (0.016) \\
      &       &       &       &       &       &  \\
\midrule
Observations & 3724  & 3724  & 3724  & 3724  & 3724  & 3724 \\
R-sq  & 0.005 & 0.009 & 0.000 & 0.000 & 0.000 & 0.009 \\
Control Mean & 989.9 & 593.4 & 5.96  & 396.5 & 0.44  & 0.45 \\
\midrule
\midrule
      &       &       &       &       &       &  \\
\midrule
      & \multicolumn{6}{c}{Panel C : $\quad$ Control  = 1        $\quad\quad$                     Mandatory structured = 0} \\
\midrule
\midrule
      & (13)  & (14)  & (15)  & (16)  & (17)  & (18) \\
\midrule
\midrule
Mandatory structured & -292.5*** & -106.9*** & -3.35 & -218.1*** & -0.21*** & -0.11*** \\
      & (56.8) & (40.7) & (3.27) & (33.5) & (0.024) & (0.016) \\
      &       &       &       &       &       &  \\
\midrule
Observations & 3724  & 3724  & 3724  & 3724  & 3724  & 3724 \\
R-sq  & 0.014 & 0.004 & 0.000 & 0.017 & 0.044 & 0.040 \\
Control Mean & 1069.2 & 545.9 & 7.69  & 523.3 & 0.57  & 0.48 \\
\midrule
\midrule
      &       &       &       &       &       &  \\
\midrule
      & \multicolumn{6}{c}{Panel D : $\quad$ Control  = 1       $\quad\quad$                      Mandatory structured = 1} \\
\midrule
\midrule
      & (19)  & (20)  & (21)  & (22)  & (23)  & (24) \\
\midrule
\midrule
Mandatory structured & -246.1*** & -124.2*** & -1.79 & -154.4*** & -0.13*** & -0.085*** \\
      & (58.5) & (40.1) & (3.54) & (34.6) & (0.026) & (0.018) \\
      &       &       &       &       &       &  \\
\midrule
Observations & 3724  & 3724  & 3724  & 3724  & 3724  & 3724 \\
R-sq  & 0.009 & 0.006 & 0.000 & 0.008 & 0.018 & 0.020 \\
Control Mean & 1069.2 & 545.9 & 7.69  & 523.3 & 0.57  & 0.48 \\
\midrule
\midrule
      &       &       &       &       &       &  \\
\midrule
      & \multicolumn{6}{c}{Panel E : $\quad$ Prediction with lasso-logit model} \\
\midrule
\midrule
      & (25)  & (26)  & (27)  & (28)  & (29)  & (30) \\
\midrule
\midrule
Mandatory structured & -234.9*** & -135.2*** & -1.94 & -132.1*** & -0.12*** & -0.082*** \\
      & (56.7) & (42.5) & (3.46) & (34.3) & (0.025) & (0.016) \\
Choice  & -7.02 & 6.88  & -1.83 & -15.5 & -0.027 & 0.0059 \\
      & (67.4) & (52.2) & (3.42) & (35.5) & (0.023) & (0.019) \\
      &       &       &       &       &       &  \\
\midrule
Observations & 6304  & 6304  & 6304  & 6304  & 6304  & 6304 \\
R-sq  & 0.007 & 0.005 & 0.000 & 0.005 & 0.010 & 0.019 \\
Control Mean & 1034.5 & 563.4 & 7.69  & 471.2 & 0.52  & 0.47 \\
\bottomrule
\bottomrule
\end{tabular}%
}
}
\end{center}
 
\end{table}
 \footnotesize{ Given the censored loans, i.e. loans that have not finished by the end of the observation period, we estimate `a la Manski' bounds for these loans, meaning that we impute all loans to either \emph{default}$=1$ or \emph{recovery}$=0$ depending on the treatment arm. Different panels perform different imputations for the censored loans for all possible combinations for the imputation, and computes the ATE for the same outcomes of Table \ref{main_impact_table}. Panel A, for instance, assumes that all outstanding loans are fully payed. Panel B is the most conservative imputation since it assumes all outstanding loans in the control arm are repaid, while all the outstanding loans in the mandatory structured arm end in default. Panel C, on the other hand, is the most optimistic scenario opposite to that of Panel B. Panel D assumes all remaining loans default. The last panel makes the imputation to the censored loans according to the best prediction using a piecewise lasso logit model for default. In particular, we build two logit models with lasso regularization, depending on whether the loan duration is less than 220 days (two cycles) or more than 220 days. For prediction we use the former whenever the last recorded payment was done within 220 days, and the latter otherwise. Both models includes loan characteristics (loan size, branch), and payment behavior (loan duration so far, days to first payment, \% of first payment, \% of payments at 30, 60, 90, and 105 days, and \% of interest payed at 105 days), but the latter model also includes \% of payments at 150, 180, and 210 days. Together, these models achieve an apparent pooled in-sample accuracy rate of 87\%.
 Note that in all panels we maintain significant results for Financial Cost as dependent variable, and even in the most conservative scenario (Panel B).}

\begin{figure}[H]
\caption{Survival graph.}
    \begin{center}
   \begin{subfigure}{0.49\textwidth}
   \caption{Ended contract}
        \centering
        \includegraphics[width=\textwidth]{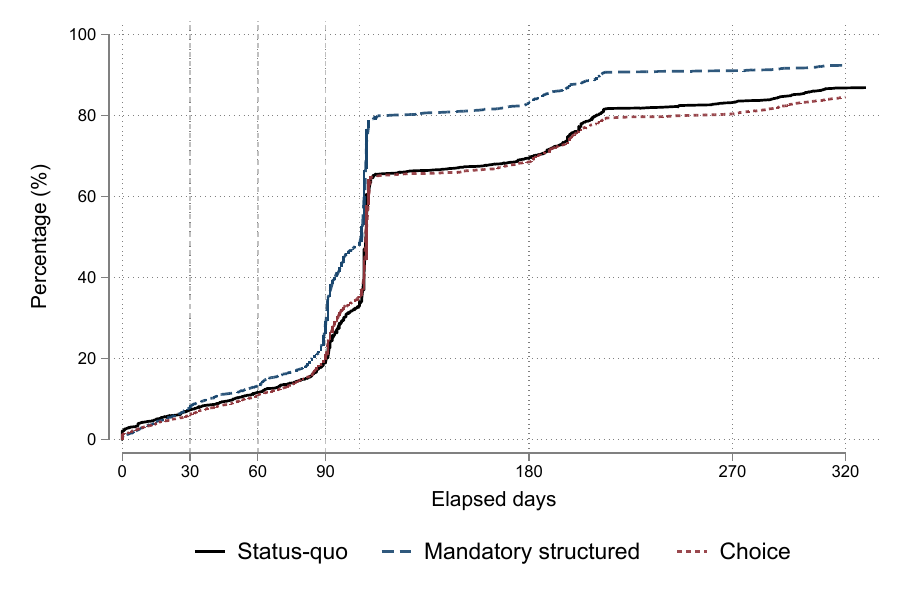}
    \end{subfigure} 
   \begin{subfigure}{0.49\textwidth}
   \caption{Recovery}
        \centering
        \includegraphics[width=\textwidth]{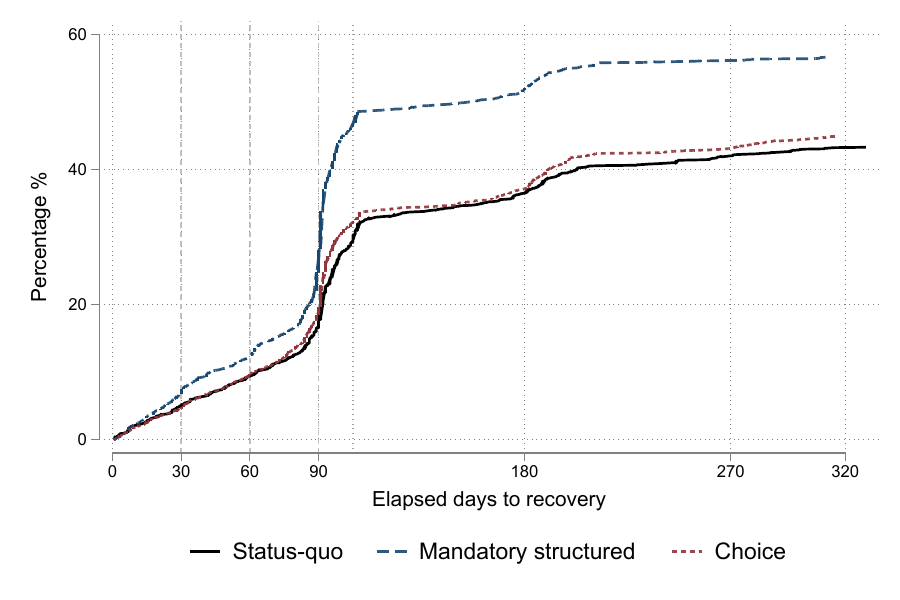}
    \end{subfigure}     
    \end{center}
      \footnotesize{This Figure shows the CDF of loan completion either default or recovery in Panel (a), or loan recovery in Panel (b), by the number of days since first pawn.}
      \label{survival_graph}
\end{figure}

\begin{figure}[H]
 \caption{Percentage of loan paid overtime (by experimental arm)}
    \begin{center}
   \begin{subfigure}{0.49\textwidth}
        \caption{Structured payment arms}
        \centering
        \includegraphics[width=\textwidth]{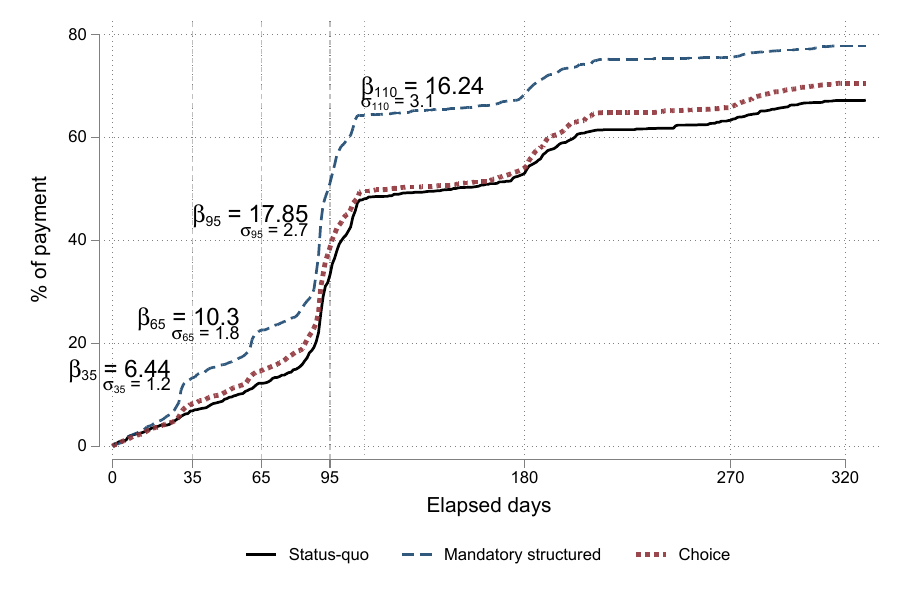}
    \end{subfigure} 
   \begin{subfigure}{0.49\textwidth}
        \caption{Promise arms}
        \centering
        \includegraphics[width=\textwidth]{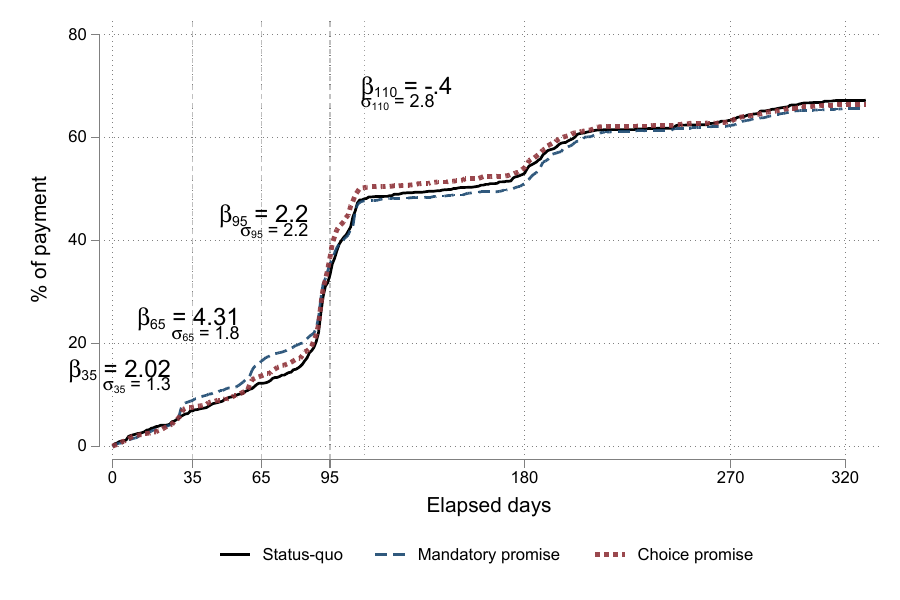}
    \end{subfigure}     
    \end{center}
   \footnotesize{This figure shows the accumulated percentage of recovery in time in the vertical axis against the days since the pawn loan was taken, separately by treatment arm. It also displays the coefficients and standard errors from four separate regressions comparing the means of treatment and control arms at 35 days, 65 days, 95 days and 110 days (using only observations from those days, regress \% of payment vs.\ a treatment dummy).}
    \label{porc_payment_over_time}
\end{figure}

\subsection{Robustness accounting for other costs}

\begin{table}[H]
\caption{Effects on more comprehensive cost measures}
\label{table_robustness_fc}
\begin{center}
\resizebox{0.95\textwidth}{!}{
\footnotesize{
\begin{tabular}{lccccc}
\toprule
      & FC    & FC (subj.value) & FC +  tc & FC - interest & FC (subj.value) + tc - int \\
\midrule
      & (1)   & (2)   & (3)   & (4)   & (5) \\
\midrule
\midrule
Mandatory structured & -183.0*** & -292.8*** & -182.8*** & -107.1*** & -168.4** \\
      & (50.9) & (89.9) & (52.6) & (38.5) & (81.1) \\
Choice  & -11.9 & -30.7 & -1.30 & -31.2 & -28.3 \\
      & (56.6) & (90.9) & (59.1) & (39.9) & (78.7) \\
      &       &       &       &       &  \\
\midrule
Observations & 6304  & 6304  & 6304  & 6304  & 6304 \\
R-squared & 0.005 & 0.004 & 0.005 & 0.002 & 0.002 \\
Control Mean & 942.4 & 1389.9 & 1026.1 & 480.7 & 927.7 \\
\midrule
\midrule
      &       &       &       &       &  \\
\midrule
      & CR    & CR (subj.value) & CR +  tc & CR - interest & CR (subj.value) + tc - int \\
\midrule
      & (6)   & (7)   & (8)   & (9)   & (10) \\
\midrule
\midrule
Mandatory structured & -0.062*** & -0.10*** & -0.058*** & -0.040*** & -0.053** \\
      & (0.012) & (0.023) & (0.015) & (0.013) & (0.023) \\
Choice  & -0.0011 & -0.020 & 0.0059 & -0.024** & -0.028 \\
      & (0.013) & (0.020) & (0.017) & (0.012) & (0.020) \\
      &       &       &       &       &  \\
\midrule
Observations & 6304  & 6304  & 6304  & 6304  & 6304 \\
R-squared & 0.016 & 0.009 & 0.011 & 0.003 & 0.002 \\
Control Mean & 0.43  & 0.65  & 0.50  & 0.23  & 0.49 \\
\bottomrule
\bottomrule
\end{tabular}%
}
}
\end{center}
 \footnotesize{ This table augments the measure of financial cost presented in Table \ref{main_impact_table} with measures of transaction costs, subjective costs, and adjustments for liquidity costs. Panel A reports financial cost in pesos, while Panel B shows CR. Columns (1) and (6) replicate our previous results for comparability. Columns (2) and (7) of Table \ref{table_robustness_fc} use the subjective value of the pawn reported by the borrower rather than its appraised value. Columns (3) and (8) adjust for self-reported transport costs per visit plus an entire day's wage, both multiplied by the number of visits that each individual made. For pawns with missing self-reported transport cost after recovering survey responses across all pawns belonging to the same borrower, we adjust using the overall mean self-reported transport cost among pawn observations with nonmissing transport cost. Columns (4) and (9) adjust to consider the liquidity cost. Finally, columns (5) and (10) include all three changes together. The main takeaway from the table is that results are quite robust to including a much expanded measure of costs.
 Standard errors are clustered at the branch-day level.}
\end{table}

\normalsize

\normalsize

\subsection{Lender Profit}
\label{append:lender-profit}

Here we present further details of the lender profit calculation from Section \ref{sec:baseline}. Figure \ref{profit_mandatory_control} presents results for the simple geometric model described in the body of the text. This exercise assumes that the per-loan profit generated by a given borrower is independent of her probability of returning to take out another loan.

\begin{figure}[H]
    \caption{Profit difference Mandatory vs.\ Control arm.}
    \begin{center}
    \begin{subfigure}{0.65\textwidth}
        \centering
        \includegraphics[width=\textwidth]{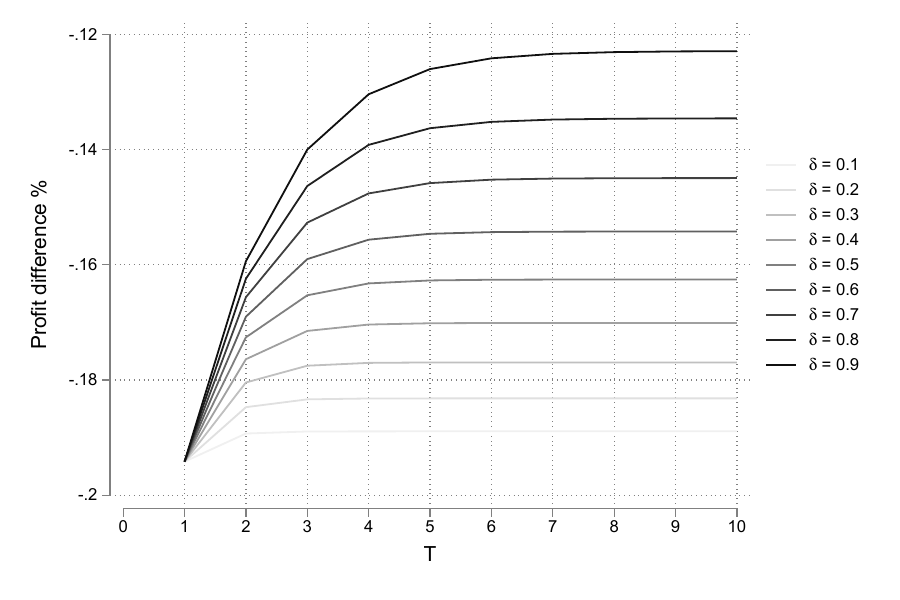}
    \end{subfigure}
    \end{center}
\footnotesize{The above figure shows the difference in profit $\text{Profit}_{j,z}  = \sum_{t=0}^T\delta^t\,\mathbb{E}\{\text{Financial Cost}_j | \text{Contract}=z\}\times\Pr(\text{Repeat} | \text{Contract}=z)^t$ for the Mandatory vs.\ Control contracts for different values of the discount rate $\delta$ and time horizon $T$.  }
    \label{profit_mandatory_control}
\end{figure}

We now present results from two exercises that relax the independence assumption. 
Let $X_i$ denote per-loan lender profit, assumed time-invariant,  and let $R_{i,t} \in\{0,1\}$ be an indicator for returning after period $t$. Because we cannot observe $R_{i,t}$ for $t\geq 2$, we assume a time-invariant return propensity $P_i$. For contract $Z_i=z$, the lender's expected profit is
\[
\Psi_z=\mathbb{E}\left[X_i \sum_{t=0}^T\left(\delta P_i\right)^t \mid Z_i=z\right].
\]
Note that this expression does not assume independence between $X_i$ and $P_i$.

Our first approach to allowing dependence between $X_i$ and $P_i$ is as follows.
Let $\eta_i = \text{logit}(P_i)$ and assume that
\begin{align*}
X_i&=\beta_0+\beta_Z Z_i+a L_i+\varepsilon_i^X \\
\eta_i&=\alpha_0+\alpha_Z Z_i+b L_i+\left(\gamma_0+\gamma_Z Z_i\right) X_i \\
L_i &\sim \mathcal{N}(0,1)\perp \varepsilon_i^X\sim \mathcal{N}(0,\sigma^2_X).
\end{align*}
In this model the common latent variable $L_i$ creates dependence between $X_i$ and $\eta_i \equiv \text{logit}(P_i)$.
We estimate this model by GSEM (linear $X_i$ equation and Bernoulli-logit for $R_{i, 1}$), fixing $\operatorname{Var}\left(L_i\right)=1$ for identification, and using adaptive likelihood integration with cluster-robust SEs.\footnote{We estimate the variance $\sigma^2_X$ from the first equation and the logit equation has a fixed error variance.} We then form $\hat{P}_i=\operatorname{logit}^{-1}\left(\hat{\eta}_i\right)$ and estimate $\Psi_z$ by the sample mean of $X_i \sum_{t=0}^T\left(\delta \hat{p}_i\right)^t$ within treatment arm $z$. 
To compute the difference across arms, we regress $X_i \sum_{t=0}^T\left(\delta \hat{P}_i\right)^t$ on arm indicators.

Our second approach bounds $\Psi_z$ by fixing the marginals of $\left(X_i, P_i\right)$ and maximizing /minimizing $\mathbb{E}\left[X_i \sum_{t=0}^T\left(\delta P_i\right)^t \mid Z_i=z\right]$ over all possible dependence structures. The comonotone coupling (sorting $X$ and $P$ in the same order) provides the sharp upper bound. The  counter-monotone coupling (sorting $X$ and $P$ in opposite orders) provides the sharp lower bound. We take the empirical distribution of $X_i$ in each arm and specify a parametric logit-normal margin for $P_i$ with dispersion $\sigma$ calibrated to our data. Specifically, we set  $\sigma=0.15$, which equals three times the implied standard deviation from our joint structural model from above.

\begin{figure}[H]
\caption{Bounding lender's profit.}
    \begin{center}
   \begin{subfigure}{0.49\textwidth}
        \centering
        \includegraphics[width=\textwidth]{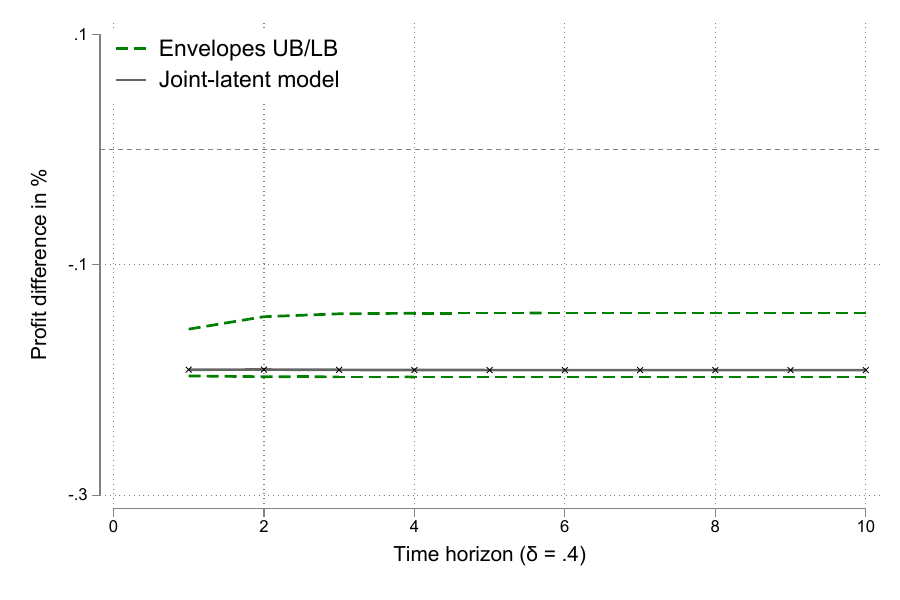}
    \end{subfigure} 
   \begin{subfigure}{0.49\textwidth}
        \centering
        \includegraphics[width=\textwidth]{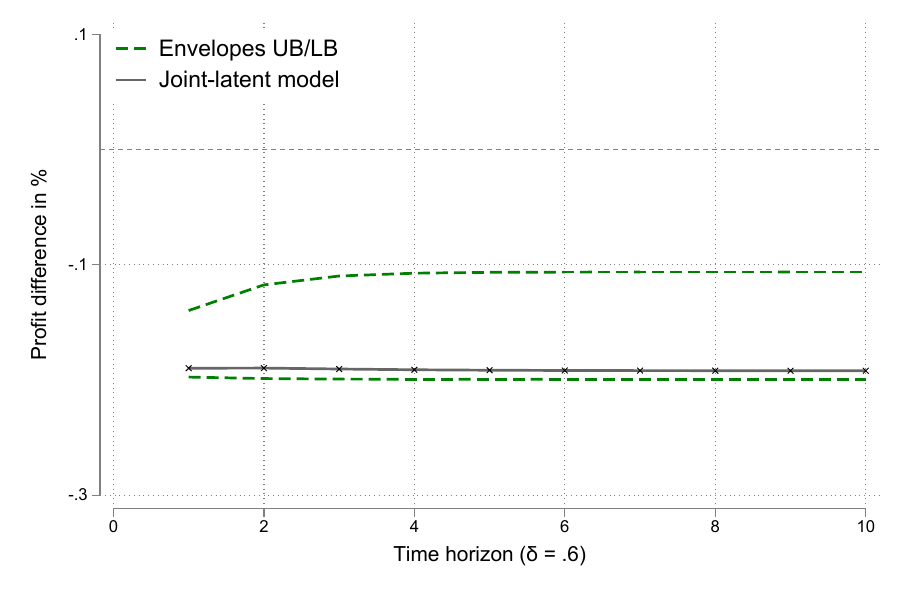}
    \end{subfigure} 
       \begin{subfigure}{0.49\textwidth}
        \centering
        \includegraphics[width=\textwidth]{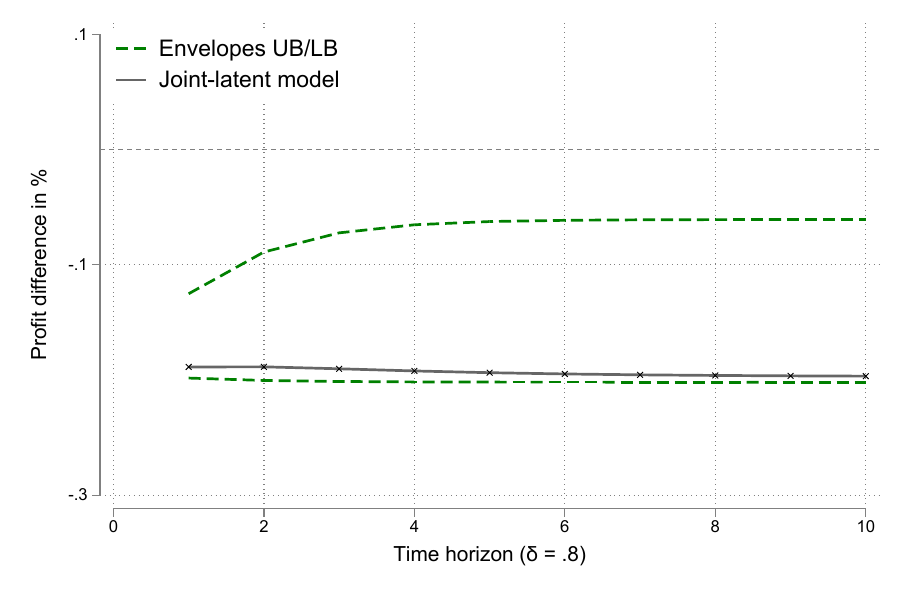}
    \end{subfigure} 
   \begin{subfigure}{0.49\textwidth}
        \centering
        \includegraphics[width=\textwidth]{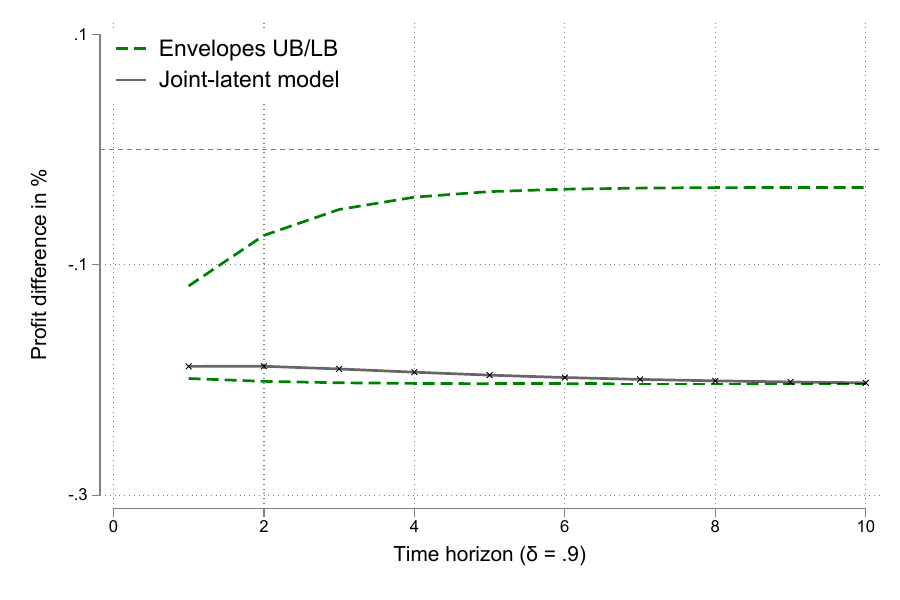}
    \end{subfigure}     
    \end{center}
      \footnotesize{}
      \label{bound_lender_profit}
\end{figure}

Figure \ref{bound_lender_profit} plots the structure versus status quo difference in $\Psi_z$ across discount factors $\delta$ and time horizons $T$ for both exercises described above: the joint latent model and the bounding approach. Regardless of which approach we use or which discount rate or time horizon we assume, lender profits remain lower under the structured contract.

\subsection{Personal promise results}

\begin{table}[H]
\caption{Effects of Promise on Financial Cost}
\label{sec_impact_table}
\begin{center}
\resizebox{1.0\textwidth}{!}{
\footnotesize{
\begin{tabular}{lcccccccc}
\toprule
      &       & \multicolumn{5}{c}{Components of FC}  &       &  \\
\cmidrule{3-7}      & FC    & Interest pymnt & Fee pymnt & Def$\times$Ppl pymnt & Lost pawn value & Default &       & CR \\
\cmidrule{2-2}\cmidrule{9-9}      &       & $\sum_t P^I_{it}$ & $\sum_t P^F_{it}$ & $\mathds{1}(\text{Def}_i)\times\sum_t P^C_{it}$ & $\mathds{1}(\text{Def}_i)\times \text{Value-Loan}_i$ & $\mathds{1}(\text{Def}_i)$ &       &  \\
\midrule
      & (1)   & (2)   & (3)   & (4)   & (5)   & (6)   &       & (7) \\
\midrule
\midrule
Promise Mandate & -18.0 & -20.3 & -     & 3.45  & 2.31  & 0.0086 &       & -0.0074 \\
      & (63.2) & (55.4) &       & (4.48) & (38.8) & (0.026) &       & (0.015) \\
Promise Choice & -87.2 & -69.9 & -     & 2.76  & -17.3 & 0.00033 &       & -0.010 \\
      & (54.1) & (45.5) &       & (3.71) & (35.2) & (0.024) &       & (0.012) \\
      &       &       &       &       &       &       &       &  \\
\midrule
Observations & 5144  & 5144  &       & 5144  & 5144  & 5144  &       & 5144 \\
R-squared & 0.001 & 0.001 &       & 0.000 & 0.000 & 0.000 &       & 0.000 \\
Control Mean & 942.4 & 545.9 &       & 5.96  & 396.5 & 0.44  &       & 0.43 \\
\bottomrule
\bottomrule
\end{tabular}%
}
}
\end{center}

\end{table}

\subsection{Repawning}

We now estimate equation \ref{basic_reg} with different dependent variables.  In column 1, the dependent variable indicates, for each borrower in the experiment, whether he/she pawned again after the first loan in the
experiment (up to the end period of our data set 338 days after the experiment began). The result is that the likelihood of repeat business increases by 6.7\%.  While this appears to be \emph{prima facie} evidence of greater satisfaction among borrowers in the mandatory arm, the interpretation is complicated by the fact that monthly payments may themselves trigger more borrowing to pay them. Note that from the lender's perspective, why repeat pawning happens may not be as important as the fact that it happens. Illiquidity does not seem to drive this result, given that the effect on re-pawning comes after 90 days (during the period of contract demanded payments) and not before  (see columns 2 and 3). Column 4 only considers new loans that use different collateral from that of the initial one. We do this to foreclose the explanation that those in the Mandatory arm, being more liquidity-constrained, return to pawn a second pawn to be able to pay the monthly payments of their first loan. We cannot reject a zero effect on pawning a different collateral. Column 5 focuses on the (endogenous) subsample of those recovering their pawn in both arms of the experiment. This means that \textit{both} arms have recovered their pawn and could re-pawn if they so wish, and also that the liquidity demands from the monthly contract are \textit{no longer} there as the contract has been closed.  We find that the difference between the mandatory structure contract and the status quo is even larger in this subsample, with the former having 11pp higher likelihood of being a repeat client during our sample period.

\begin{table}[H]
\caption{Effects on Repeat Pawning}
\label{repeat_loans}
\begin{center}
\footnotesize{
\begin{tabular}{lccccc}
\toprule
      & \multicolumn{5}{c}{Ever pawns again (ITT)} \\
\cmidrule{2-6}      &       & After 90 days & Within 90 days & Different collateral & Cond. on rec \\
\midrule
\midrule
      & (1)   & (2)   & (3)   & (4)   & (5) \\
\midrule
\midrule
Mandatory structured & 0.067* & 0.037*** & 0.032 & 0.044 & 0.11** \\
      & (0.035) & (0.013) & (0.027) & (0.030) & (0.047) \\
Choice  & 0.040 & 0.0098 & 0.030 & 0.038 & 0.057 \\
      & (0.031) & (0.0087) & (0.026) & (0.028) & (0.042) \\
      &       &       &       &       &  \\
\midrule
Observations & 4441  & 4441  & 4441  & 4441  & 2170 \\
R-squared & 0.003 & 0.006 & 0.001 & 0.002 & 0.008 \\
Control Mean & 0.32  & 0.020 & 0.30  & 0.30  & 0.35 \\
\bottomrule
\bottomrule
\end{tabular}%
}
\end{center}
 \footnotesize{This table estimates the specification of equation \ref{basic_reg} but at the level of the borrower (not the loan). Each column represents a regression with a different outcome variable. In column 1, the dependent variable indicates, for each borrower in the experiment, whether he/she pawned again after the first loan in the experiment (up to the end period of our data set 338 days after the experiment began). Column 2 is analogous, but only pawning after 90 days of the first loan is considered. Column 3 instead considers pawning before 90 days. Column 4 is analogous to column 1 but focuses on the pawning of a gold piece that is different from the one in the first experimental loan. Column 5 is analogous to column 1, but conditioning on the sample that recovered the first loan. 
Standard errors are clustered at the branch-day level.}
\end{table}

\begin{table}[H]
\caption{Baseline survey questions (translated to English)}
\label{baseline_survey}
\begin{center}
\scriptsize{
\begin{tabular}{cl}
\toprule
      & \textbf{Baseline Survey} \\
\midrule
\midrule
1     & \textbf{Your pawn was:} \\
      & (a) Inherited, (b) a gift, (c) bought by me, (d) lent to me, (e) other \_\_\_\_\_\_\_\_\_\_\_\_ \\
\rowcolor[rgb]{ .682,  .667,  .667} 2     & \textbf{Mark with an "X" on the line below how likely it is that you will recover your pawn. } \\
      & \textbf{Where 0 is impossible and 100 is completely certain} \\
\rowcolor[rgb]{ .682,  .667,  .667} 3     & \textbf{How much do you think the item you plan to pawn is worth?       \_\_\_\_\_\_\_\_\_\_\_\_\_\_ pesos} \\
\rowcolor[rgb]{ .682,  .667,  .667} 4     & \textbf{Gender      } \\
\rowcolor[rgb]{ .682,  .667,  .667} 5     & \textbf{Age} \\
6     & \textbf{Civil Status } \\
      & (a) married, (b) single, (c) divorced, (d) widowed \\
7     & \textbf{Work status} \\
      & (a) employed, (b) own business, (c) homemaker, (d) don't work, (e) retired, (f) study \\
\rowcolor[rgb]{ .682,  .667,  .667} 8     & \textbf{Education} \\
      & (a) no formal education, (b) primary, (c) middle school, (d) high school, (e) more than high school \\
\rowcolor[rgb]{ .682,  .667,  .667} 9     & \textbf{In the last month, did a friend or family member asked you for money?} \\
      & (a) yes  (b) no \\
\rowcolor[rgb]{ .682,  .667,  .667} 10    & \textbf{What would you like to have: 100 pesos tomorrow or 150 pesos in one month?} \\
\rowcolor[rgb]{ .682,  .667,  .667} 11    & \textbf{How often do you feel stressed by your economic situation?} \\
      & (a) always, (b) very often, (c) sometimes, (d) never \\
12    & \textbf{What is the main reason you want to pawn?} \\
      & (a) Need the money because somebody in my family lost his/her job \\
      & (b) Need the money to pay for a sickness in the family \\
      & (c) Need the money for an urgent expense \\
      & (d) Need the money for some non urgent expense. \\
\rowcolor[rgb]{ .682,  .667,  .667} 13    & \textbf{How stressed do you feel from the situation that led to pawn?} \\
      & (a) very stressed, (b) somewhat stressed, (c) a little stressed, (d) not stressed  \\
\rowcolor[rgb]{ .682,  .667,  .667} 14    & \textbf{In 3 months, I expect to have a  \_\_\_\_\_\_\_\_\_\_\_\_\_\_\_\_ situation} \\
      & (a) better, (b) similar,  (c) worse \\
\rowcolor[rgb]{ .682,  .667,  .667} 15    & \textbf{Have you pawned before?} \\
      & (a) yes  (b) no \\
\rowcolor[rgb]{ .682,  .667,  .667} 16    & \textbf{How many times have you pawned on a Lender P branch?} \\
      & (a) NO\_\_\_    (b)  1-2 times \_\_\_    (c) 3-5 times\_\_\_\_   (d) More than 5\_\_\_\_ \\
\rowcolor[rgb]{ .682,  .667,  .667} 17    & \multicolumn{1}{p{54.91em}}{\textbf{If you are saving money and a family member wants to use it for something }} \\
      & \multicolumn{1}{p{54.91em}}{(a) I would only give him the money for an urgent expense} \\
      & \multicolumn{1}{p{54.91em}}{(b) I would give him the money even if it was not an urgent expense} \\
      & \multicolumn{1}{p{54.91em}}{(c) I would not give him/her the money regardless} \\
      & \multicolumn{1}{p{54.91em}}{(d) No one would ask me for my money} \\
\rowcolor[rgb]{ .682,  .667,  .667} 18    & \textbf{Do you make an expenses budget for the month ahead of time?} \\
      & (a) always, (b) very often, (c) sometimes, (d) never \\
19    & \multicolumn{1}{p{54.91em}}{\textbf{Do you have other items you could pawn?}} \\
      & (a) yes  (b) no \\
\rowcolor[rgb]{ .682,  .667,  .667} 20    & \textbf{Do you have savings?} \\
      & (a) yes  (b) no \\
\rowcolor[rgb]{ .682,  .667,  .667} 21    & \textbf{Do you participate in a ROSCA?} \\
      & (a) yes  (b) no \\
\rowcolor[rgb]{ .682,  .667,  .667} 22    & \textbf{Is it common that family or friends ask for money?} \\
      & (a) yes  (b) no \\
\rowcolor[rgb]{ .682,  .667,  .667} 23    & \textbf{How much did you spend to come to the branch today?    \$\_\_\_\_\_\_\_\_\_\_\_\_\_\_ pesos} \\
\rowcolor[rgb]{ .682,  .667,  .667} 24    & \textbf{How much time does it usually take to come to this branch?    \_\_\_\_\_\_\_\_\_\_\_} \\
25    & \textbf{How much does your family spend in a normal week?   \$\_\_\_\_\_\_\_\_\_\_\_\_\_\_ pesos} \\
26    & \textbf{How much do you manage to save in a normal week?   \$\_\_\_\_\_\_\_\_\_\_\_\_\_\_ pesos} \\
\rowcolor[rgb]{ .682,  .667,  .667} 27    & \textbf{Does it happen to you that you spend more than you wanted because you fall into temptation?} \\
      & (a) never, (b) almost never, (c) sometimes, (d) very often \\
\rowcolor[rgb]{ .682,  .667,  .667} 28    & \textbf{In the last 6 months, has it happened that at some point you lacked money to pay} \\
      & (a) rent?    (b) food    (c)food   (d) medicine  (e) electricity   (f) heating   (g) telephone    (i) water \\
\rowcolor[rgb]{ .682,  .667,  .667} 29    & \textbf{What would you like to have: 100 pesos in 3 months or 150 pesos in four months?} \\
\rowcolor[rgb]{ .682,  .667,  .667} 30    & \textbf{Would you like to receive (free) reminders for upcoming payments?} \\
      & (a) yes  (b) no \\
\bottomrule
\end{tabular}%
}
\end{center}
\scriptsize{
Translation of the baseline questionnaire. Highlighted questions were used in our causal forest exercises.}
\end{table}

\section{Identification Arguments}
\label{sec:proofs}
This section gives a formal derivation of the identification results presented in Equations \eqref{eq:TOT}--\eqref{eq:ASL} of Section \ref{sec:randchoice}.
To simplify the presentation, we omit $i$ subscripts throughout and present derivations that do not explicitly condition on pre-treatment covariates. 
Because $Z_i$ is assigned independently of pre-treatment covariates, however, all of the derivations presented below also hold conditional on $X$ under Assumptions \ref{assump:design}--\ref{assump:exclusion} from the body of the paper.\footnote{Since our exclusion restriction is an individual-level assumption $Y_i(d,z) = Y_i(d)$ for all $i$, it automatically holds for any subgroup defined by $X$.}
For convenience we work with the following implication of these three assumptions. 

\begin{assumption}[Randomized Choice Design and Exclusion Restriction]\mbox{}
\label{assump:randchoice}
   \begin{enumerate}[(i)]
   \item $Z$ is independent of $(Y_{0}, Y_{1}, C)$
   \item $D = \mathbbm{1}(Z \neq 2)Z + \mathbbm{1}(Z = 2)C$
   \item $Y = \mathbbm{1}(Z = 0) Y_{0} + \mathbbm{1}(Z = 1) Y_{1} + \mathbbm{1}(Z = 2) [ (1 - C) Y_{0} + C Y_{1}]$
   \end{enumerate}
\end{assumption}

\begin{lem}
Under Assumption \ref{assump:randchoice},
\label{lem_randchoice}
   \begin{enumerate}[(i)]
       \item $\mathbbm{E}(D|Z=2) = \mathbb{P}(C=1)$
       \item $\mathbbm{E}(Y|Z=0) = \mathbbm{E}(Y_0)$
       \item $\mathbbm{E}(Y|Z=1) = \mathbbm{E}(Y_1)$
       \item $\mathbbm{E}(Y|D=0,Z=2) = \mathbbm{E}(Y_0|C=0)$
       \item $\mathbbm{E}(Y|D=1,Z=2) = \mathbbm{E}(Y_1|C=1)$.
   \end{enumerate} 
\end{lem}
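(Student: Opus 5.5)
The plan is to prove each of the five parts by substituting parts (ii) and (iii) of Assumption \ref{assump:randchoice} into the relevant conditional expectation. On the event $\{Z=z\}$ these substitutions collapse $D$ and $Y$ to simple functions of $(Y_0,Y_1,C)$. Part (i) of the assumption, the independence of $Z$ from $(Y_0,Y_1,C)$, then lets us drop the conditioning on $Z$. Throughout I assume that $\mathbb{P}(Z=z)>0$ for each $z\in\{0,1,2\}$, and that $0<\mathbb{P}(C=1)<1$, so that every conditional expectation in the statement is well defined.

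For parts (i)--(iii) the argument is direct. On $\{Z=2\}$ we have $D=C$, so $\mathbbm{E}(D|Z=2)=\mathbbm{E}(C|Z=2)=\mathbbm{E}(C)=\mathbb{P}(C=1)$ by independence. On $\{Z=0\}$ we have $Y=Y_0$, and on $\{Z=1\}$ we have $Y=Y_1$. Independence of $Z$ and $Y_d$ then gives (ii) and (iii).

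Parts (iv) and (v) require conditioning on a joint event. On $\{Z=2\}$, the event $\{D=0\}$ coincides with $\{C=0\}$, and on $\{Z=2,C=0\}$ we have $Y=Y_0$. Hence $\mathbbm{E}(Y|D=0,Z=2)=\mathbbm{E}(Y_0|C=0,Z=2)$. To drop $Z=2$, I will use the fact that joint independence of $Z$ and $(Y_0,C)$ implies $\mathbbm{E}(Y_0\mathbbm{1}(C=0)\mathbbm{1}(Z=2))=\mathbbm{E}(Y_0\mathbbm{1}(C=0))\,\mathbb{P}(Z=2)$, and likewise $\mathbb{P}(C=0,Z=2)=\mathbb{P}(C=0)\,\mathbb{P}(Z=2)$. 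Taking the ratio gives $\mathbbm{E}(Y_0|C=0)$. Part (v) follows in the same way, with $C=1$ and $Y_1$.

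The only step that needs any care is this last one. Here the assumption must supply independence of $Z$ from the vector $(Y_d,C)$ jointly, not just from each margin separately, and the relevant conditioning events must have positive probability. Both conditions are guaranteed by Assumption \ref{assump:randchoice}(i), together with the positivity conditions stated at the outset.
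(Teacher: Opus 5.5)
Your proposal is correct and follows essentially the same route as the paper. In both, $D$ and $Y$ are collapsed on each event $\{Z=z\}$, independence of $Z$ is invoked for (i)--(iii), and for (iv)--(v) $\{D=d, Z=2\}$ is identified with $\{C=d, Z=2\}$. Your explicit ratio computation, together with the positivity conditions you state, simply spells out the paper's one-line appeal to conditional independence of $Z$ and $(Y_0,Y_1)$ given $C$.
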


\begin{proof}
Part (i) follows because $Z=2$ implies $D=C$ and $Z$ is independent of $C$.  
Parts (ii) and (iii) follow similarly: given $Z=0$ we have $Y = Y_0$, given $Z=1$ we have $Y = Y_1$, and $Z$ is independent of $(Y_0,Y_1)$.
For parts (iv) and (v), first note that Assumption \ref{assump:randchoice} (i) implies that $Z$ is conditionally independent of $(Y_0,Y_1)$ given $C$.
Now, $Z=2$ implies that $D=0$ if and only if $C=0$. Hence, $\mathbbm{E}(Y|D=0, Z=2) = \mathbbm{E}(Y_0|C=0)$ establishing part (iv).
For part (v) $Z=2$ implies that $D=1$ if and only if $C=1$ from which it follows that $\mathbbm{E}(Y|D=1, Z=2) = \mathbbm{E}(Y_1|C=1)$.
\end{proof}

\begin{prop} 
\label{pro:randchoice}
Under Assumption \ref{assump:randchoice},
    \begin{enumerate}[(i)]
        \item $\text{TOT} \equiv \mathbbm{E}(Y_1 - Y_0|C=1)  = \displaystyle \frac{\mathbbm{E}(Y|Z=2) - \mathbbm{E}(Y|Z=0)}{\mathbbm{E}(D|Z=2)}$
        \item $\text{TUT} \equiv \mathbbm{E}(Y_1 - Y_0|C=0) = \displaystyle \frac{\mathbbm{E}(Y|Z=1) - \mathbbm{E}(Y|Z=2)}{1 - \mathbbm{E}(D|Z=2)}$
        \item $\text{ASB} \equiv \mathbbm{E}(Y_0|C=1) - \mathbbm{E}(Y_0|C=0) = \displaystyle \frac{\mathbbm{E}(Y|Z=0) - \mathbbm{E}(Y|Z=2,D=0)}{\mathbbm{E}(D|Z=2)}$
        \item $\text{ASL} \equiv \mathbbm{E}(Y_1|C=1) - \mathbbm{E}(Y_1|C=0) = \displaystyle \frac{\mathbbm{E}(Y|Z=2,D=1) - \mathbbm{E}(Y|Z=1)}{1 - \mathbbm{E}(D|Z=2)}$.
    \end{enumerate}
\end{prop}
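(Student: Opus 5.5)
The plan is to express every population moment appearing in Proposition \ref{pro:randchoice} in terms of the four conditional means $\mathbbm{E}(Y_0|C=c)$ and $\mathbbm{E}(Y_1|C=c)$, $c\in\{0,1\}$, together with $p \equiv \mathbb{P}(C=1)$. Each identity then becomes elementary algebra. Throughout we assume $0<p<1$, so that all denominators are nonzero and conditioning on $C$ is well defined.

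The first step is to compute the arm means. Lemma \ref{lem_randchoice} parts (i)--(iii) give $\mathbbm{E}(D|Z=2)=p$, $\mathbbm{E}(Y|Z=0)=\mathbbm{E}(Y_0)$ and $\mathbbm{E}(Y|Z=1)=\mathbbm{E}(Y_1)$. For the choice arm, use Assumption \ref{assump:randchoice}(iii) and the independence of $Z$ and $(Y_0,Y_1,C)$ in (i):
\[
\mathbbm{E}(Y|Z=2)=\mathbbm{E}[(1-C)Y_0+CY_1] = (1-p)\,\mathbbm{E}(Y_0|C=0)+p\,\mathbbm{E}(Y_1|C=1).
\]
By the law of iterated expectations, $\mathbbm{E}(Y_d)=p\,\mathbbm{E}(Y_d|C=1)+(1-p)\,\mathbbm{E}(Y_d|C=0)$ for $d=0,1$.

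For (i), subtract to get $\mathbbm{E}(Y|Z=2)-\mathbbm{E}(Y|Z=0)=p[\mathbbm{E}(Y_1|C=1)-\mathbbm{E}(Y_0|C=1)]$, since the $C=0$ terms in $Y_0$ cancel; dividing by $p=\mathbbm{E}(D|Z=2)$ gives the TOT. Part (ii) is symmetric: $\mathbbm{E}(Y|Z=1)-\mathbbm{E}(Y|Z=2)=(1-p)[\mathbbm{E}(Y_1|C=0)-\mathbbm{E}(Y_0|C=0)]$, and dividing by $1-p$ gives the TUT. For (iii) and (iv), also use Lemma \ref{lem_randchoice}(iv)--(v). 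These parts give $\mathbbm{E}(Y|Z=2,D=0)=\mathbbm{E}(Y_0|C=0)$ and $\mathbbm{E}(Y|Z=2,D=1)=\mathbbm{E}(Y_1|C=1)$. Then $\mathbbm{E}(Y_0)-\mathbbm{E}(Y_0|C=0)=p[\mathbbm{E}(Y_0|C=1)-\mathbbm{E}(Y_0|C=0)]$, which yields the ASB after dividing by $p$. Similarly, $\mathbbm{E}(Y_1|C=1)-\mathbbm{E}(Y_1)=(1-p)[\mathbbm{E}(Y_1|C=1)-\mathbbm{E}(Y_1|C=0)]$, which yields the ASL after dividing by $1-p$.

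No step is a genuine obstacle. The only point needing care is the choice-arm mean: independence must be applied jointly to $(Y_0,Y_1,C)$, not just to the marginals, so that $\mathbbm{E}(CY_1|Z=2)=p\,\mathbbm{E}(Y_1|C=1)$. This is exactly what Assumption \ref{assump:randchoice}(i) provides. The remaining work is to state the nondegeneracy condition $0<p<1$ explicitly.
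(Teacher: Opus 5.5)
Your proof is correct and is essentially the paper's argument. Both rest on Lemma \ref{lem_randchoice} and the mixture identity $\mathbbm{E}(Y_d)=p\,\mathbbm{E}(Y_d|C=1)+(1-p)\,\mathbbm{E}(Y_d|C=0)$; the only difference is that the paper first solves for the counterfactual means $\mathbbm{E}(Y_0|C=1)$ and $\mathbbm{E}(Y_1|C=0)$ (equations \eqref{eq:Y0C1}--\eqref{eq:Y1C0}, which it reuses for the conditional-on-$X$ analysis) and substitutes, whereas you write each observable moment forward in terms of potential-outcome means and take differences, and you usefully make explicit the condition $0<p<1$ that the paper leaves implicit.
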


\begin{proof}
Parts (i) and (iii) we require an expression for $\mathbbm{E}(Y_0|C=1)$ in terms of $(Y, D, Z)$.
By Lemma \ref{lem_randchoice}(ii) and iterated expectations 
\[
\mathbbm{E}(Y|Z=0) = \mathbbm{E}(Y_0) = \mathbbm{E}(Y_0|C=0) \mathbbm{P}(C=0) + \mathbbm{E}(Y_0|C=1) \mathbbm{P}(C=1).
\]
Re-arranging and substituting Lemma \ref{lem_randchoice}(i) and (iv),
\begin{align}
\mathbbm{E}(Y_0|C=1)  
&=  \frac{\mathbbm{E}(Y|Z=0) - \mathbbm{E}(Y|Z=2,D=0) \mathbbm{E}(1 - D|Z=2)}{\mathbbm{E}(D|Z=2)}.
\label{eq:Y0C1}
\end{align}
Part (i) follows by combining \eqref{eq:Y0C1} with Lemma \ref{lem_randchoice}(v) and simplifying; part (iii) follows by combining \eqref{eq:Y0C1} with Lemma \ref{lem_randchoice}(iv) and simplifying.
Similarly, for parts (ii) and (iv) we require an expression for $\mathbbm{E}(Y_1|C=0)$ in terms of observables.
By Lemma \ref{lem_randchoice}(iii) and iterated expectations,
\[
\mathbbm{E}(Y|Z=1) = \mathbbm{E}(Y_1) = \mathbbm{E}(Y_1|C=0)\mathbbm{P}(C=0) + \mathbbm{E}(Y_1|C=1) \mathbbm{P}(C=1).
\]
Re-arranging and substituting Lemma \ref{lem_randchoice}(i) and (v),
\begin{align}
\mathbbm{E}(Y_1|C=0) 
&=\frac{\mathbbm{E}(Y|Z=1) - \mathbbm{E}(Y|Z=2,D=1)\mathbbm{E}(D|Z=2)}{\mathbb{E}(1-D|Z=2)}.
\label{eq:Y1C0}
\end{align}
Part (ii) follows by combining \eqref{eq:Y1C0} with Lemma \ref{lem_randchoice}(iv) and simplifying; part (iv) follows by combining \eqref{eq:Y1C0} with Lemma \ref{lem_randchoice}(v) and simplifying.
\end{proof}

\section{Further Details on Fan \& Park (2010) Bounds}
\label{append:bounds}

In this appendix we provide details on estimation and inference for the bounds on the distribution of treatment effects from Section \ref{sec:bounds}. 
\footnote{We provide a STATA package implementing these methods, which can be installed as follows: \texttt{net install fan\_park, from(https://raw.githubusercontent.com/isaacmeza/fan\_park/main) replace}}
As shown in \cite{fan2010sharp}, for any fixed $\delta$ the sharp bounds for $F_\Delta(\delta)$ are $[\underline{F}(\delta), \,\overline{F}(\delta)]$ where 
\begin{equation*}
\underline{F}(\delta) \equiv \max \{0, \sup_y F_1(y) - F_0(y - \delta)  \}, \, \,
\overline{F}(\delta) \equiv 1 + \min \{0, \inf_y F_1(y) - F_0(y-\delta) \}.
\end{equation*}
These bounds can be improved by accounting for covariates. 
Define $\underline{F}(\delta|X=x)$ and $\overline{F}(\delta|X=x)$ by replacing $F_0$ and $F_1$ with $F_0(\cdot|X=x)$ and $F_1(\cdot|X=x)$ in the definitions from above.
Then the sharp bounds for $F_\Delta(\delta)$ become $\left[ \mathbbm{E}\left\{\underline{F}(\delta|X)\right\},\, \mathbbm{E}\left\{\overline{F}(\delta|X)\right\}\right]$. 

In Section \ref{sec:bounds} and Figure \ref{fan_park_bounds} we adjust for day of week and branch dummies to tighten the bounds. Because these covariates are discrete, inference is relatively straightforward.
Let $\left\{x_k\right\}_{k=1}^K$ be the support set of the discrete covariate vector $X$, where $K$ is finite.  Define $p_k=\mathbbm{P}\left(X=x_k\right)$.
Write $\pi_{d k}=\mathbbm{P}\left(D=d \mid X=x_k\right)$. 
As above, let $F_d\left(\cdot \mid x_k\right)$ be the conditional CDFs for the potential outcomes. 
For a fixed $\delta$, define
\begin{align*}
\underline{F}\left(\delta \mid X=x_k\right)&\equiv \sup _y\left\{F_1\left(y \mid x_k\right)-F_0\left(y-\delta \mid x_k\right)\right\}_{+}\\
\quad y_{L k}(\delta) &\equiv \arg \max _y\left\{F_1\left(y \mid x_k\right)-F_0\left(y-\delta \mid x_k\right)\right\}.
\end{align*}
The unconditional lower bound with covariates is $\underline{F}(\delta)=\mathbb{E}[\underline{F}(\delta \mid X)]=\sum_{k=1}^K p_k \underline{F}\left(\delta \mid x_k\right)$.
Under standard regularity conditions, we have $\sqrt{n}(\underline{\widehat{F}}(\delta)-\underline{F}(\delta)) \stackrel{d}{\Longrightarrow} \mathcal{N}\left(0, V_L(\delta)\right)$ for each fixed $\delta$, where the asymptotic variance is given by
\begin{align*}
V_L(\delta)= &\operatorname{Var}(\underline{F}(\delta \mid X)) 
 +\mathbbm{E}\left[\frac{F_1\left(y_L(X, \delta) \mid X\right)\left(1-F_1\left(y_L(X, \delta) \mid X\right)\right)}{\pi_1(X)}\right]\\&+\mathbbm{E}\left[\frac{F_0\left(y_L(X, \delta)-\delta \mid X\right)\left(1-F_0\left(\left.y_L\left(X, \delta\right)-\delta \right\rvert\, X\right)\right)}{\pi_0(X)}\right] 
\end{align*}
and corresponding plug-in estimator
\begin{align*}
  \widehat{V}_L(\delta)&=\underbrace{\sum_k \hat{p}_k \underline{\widehat{F}}\left(\delta \mid x_k\right)^2-\left(\sum_p \hat{p}_k \underline{\widehat{F}}\left(\delta \mid x_k\right)\right)^2}_{\widehat{\operatorname{Var}}[{\underline{F}}(\delta \mid X)]} \\
   &+\sum_k \hat{p}_k\left[\frac{\widehat{F}_1\left(\widehat{y}_{L k} \mid x_k\right)\left(1-\widehat{F}_1\left(\widehat{y}_{L k} \mid x_k\right)\right)}{\widehat{\pi}_{1 k}}+\frac{\widehat{F}_0\left(\widehat{y}_{L k}-\delta \mid x_k\right)\left(1-\widehat{F}_0\left(\widehat{y}_{L k}-\delta \mid x_k\right)\right)}{\widehat{\pi}_{0 k}}\right]  
\end{align*}
For the upper bound replace $y_L$ with $y_U(x, \delta) \equiv \arg \min \left\{F_1(y \mid x)-F_0(y-\delta \mid x)\right\}$.

\begin{figure}[H]
   \caption{Fan \& Park bounds for benefit in CR\%.}
    \begin{center}
        \centering
        \includegraphics[width=0.65\textwidth]{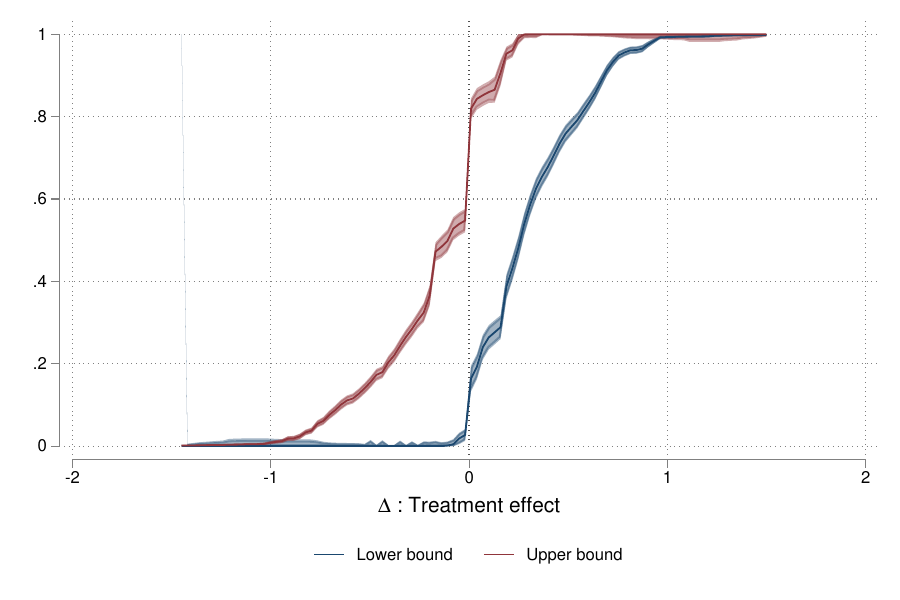}
   \end{center}
          \footnotesize{This figure depicts the \cite{fan2010sharp} bounds on the distribution $F_\Delta$ of individual treatment effects $\Delta \equiv (Y_1 - Y_0)$, described in Section \ref{sec:bounds}, for the CR outcome.
    The dark red curve and light red shaded region give the estimated upper bound function $\overline{F}$ for $F_\Delta$ and associated (pointwise) 95\% confidence interval. 
    The dark blue curve and light blue shaded region give the estimated lower bound function $\underline{F}$ for $F_\Delta$ and associated (pointwise) 95\% confidence interval.
    Confidence intervals are computed using the asymptotic distribution for the bounds.  Evaluating the bounds at $\delta = 0$, we see that between 28\% and 96\% of borrowers have a positive individual treatment effect.}
     \label{fan_park_bounds}
\end{figure}

\begin{figure}[H]
\caption{Distribution of treatment effects for CR benefit under rank invariance.}         
\begin{center}
    \centering
    \includegraphics[width=0.6\textwidth]{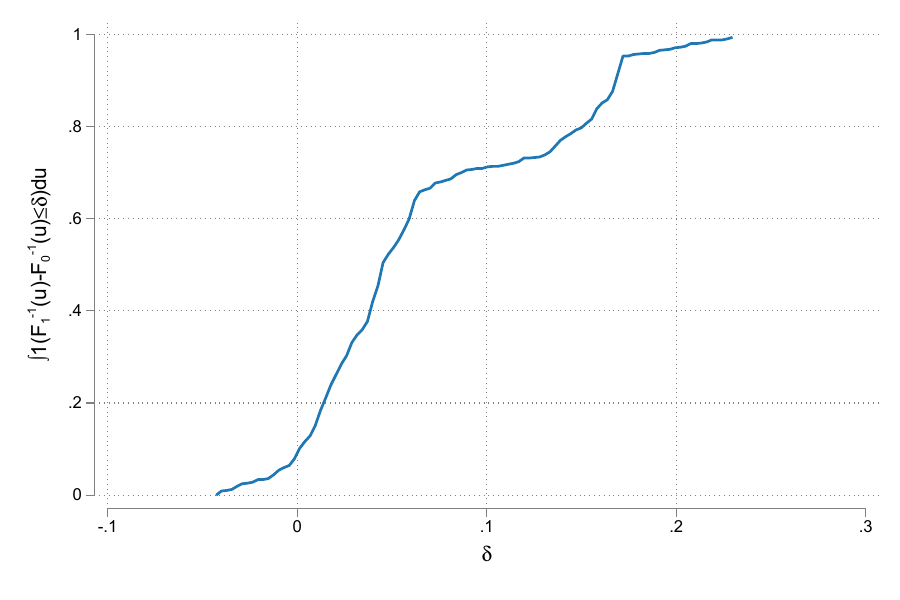}
 \end{center}
    \footnotesize{This figure shows the CDF of individual treatment effects under the assumption of rank invariance, computed from $F_\Delta(\delta) = \int_0^1 \mathbbm{1}\{ F_1^{-1}(u) - F_0^{-1}(u)\leq \delta\}\,\mathrm{d}u$ where $F_1^{-1}$ and $F_0^{-1}$ are the quantile functions of $Y_1$ and $Y_0$.}
\label{te_rankinvariance}
\end{figure}

\clearpage

\subsection{Baseline Results for Survey Respondents}
\label{append:respondents}

\begin{table}[H]
\caption{Effects on Financial Cost by selected subsample of survey respondents}
\label{tab:main_effects_respondents}
\begin{center}
\resizebox{1.0\textwidth}{!}{
\footnotesize{
\begin{tabular}{lcccccccc}
\toprule
      &       & \multicolumn{5}{c}{Components of FC}  &       &  \\
\cmidrule{3-7}      & FC    & Interest pymnt & Fee pymnt & Def$\times$Ppl pymnt & Lost pawn value & Default &       & CR \\
\cmidrule{2-2}\cmidrule{9-9}      &       & $\sum_t P^I_{it}$ & $\sum_t P^F_{it}$ & $\mathds{1}(\text{Def}_i)\times\sum_t P^C_{it}$ & $\mathds{1}(\text{Def}_i)\times \text{Value-Loan}_i$ & $\mathds{1}(\text{Def}_i)$ &       &  \\
\midrule
      & (1)   & (2)   & (3)   & (4)   & (5)   & (6)   &       & (7) \\
\midrule
\midrule
Mandatory structured & -163.3*** & -119.6** & 30.6*** & -0.38 & -74.3** & -0.078*** &       & -0.067*** \\
      & (59.4) & (46.7) & (1.74) & (3.58) & (36.6) & (0.027) &       & (0.012) \\
Choice  & -4.88 & 10.7  & 1.67*** & 0.25  & -17.2 & -0.047* &       & -0.0046 \\
      & (64.7) & (56.6) & (0.31) & (3.32) & (35.6) & (0.025) &       & (0.014) \\
      &       &       &       &       &       &       &       &  \\
\midrule
Observations & 5037  & 5037  & 5037  & 5037  & 5037  & 5037  &       & 5037 \\
R-squared & 0.004 & 0.004 & 0.134 & 0.000 & 0.002 & 0.004 &       & 0.018 \\
Control Mean & 916.1 & 526.4 & 0     & 4.75  & 389.7 & 0.44  &       & 0.42 \\
CATE subsample & \checkmark & \checkmark & \checkmark & \checkmark & \checkmark & \checkmark &       & \checkmark \\
\midrule
\midrule
      &       &       &       &       &       &       &       &  \\
\midrule
      & (8)   & (9)   & (10)  & (11)  & (12)  & (13)  &       & (14) \\
\midrule
\midrule
Mandatory structured & -147.2*** & -107.3** & 31.2*** & -0.33 & -71.1* & -0.076*** &       & -0.064*** \\
      & (56.5) & (44.4) & (1.80) & (3.71) & (37.3) & (0.027) &       & (0.013) \\
Choice  & -7.37 & 9.89  & 1.74*** & 0.14  & -19.0 & -0.051** &       & -0.0069 \\
      & (60.7) & (53.6) & (0.32) & (3.43) & (35.8) & (0.025) &       & (0.013) \\
      &       &       &       &       &       &       &       &  \\
\midrule
Observations & 4837  & 4837  & 4837  & 4837  & 4837  & 4837  &       & 4837 \\
R-squared & 0.003 & 0.004 & 0.136 & 0.000 & 0.001 & 0.004 &       & 0.016 \\
Control Mean & 911.3 & 519.8 & 0     & 4.90  & 391.6 & 0.45  &       & 0.42 \\
Survey subsample & \checkmark & \checkmark & \checkmark & \checkmark & \checkmark & \checkmark &       & \checkmark \\
\bottomrule
\bottomrule
\end{tabular}%
}
}
\end{center}
 \footnotesize{Baseline experimental results for the subset of borrowers with survey variables used in the estimation of CATEs (Top panel) and respondents of any survey question (Bottom panel).}
\end{table}

\section{Causal Forest Details} \label{app:cate}

\subsection{Estimation}

To estimate the conditional average treatment effects shown in Figure \ref{heterogeneous_effects} from our survey and administrative data, we use the \texttt{grf} R package. 
This package implements special cases of the ``generalized random forest'' approach of \cite{atheygrf}.
In broad strokes, they combine a large number of regression trees that recursively partition the covariate space to estimate conditional average effects.
The trees are ``honest'' in that observations used to determine the optimal partition are not used to estimate effects, and vice-versa.
While closely related to more familiar ``regression-tree'' random forests, the generalized random forest approach explicitly targets the parameter of interest---a conditional ATE or TUT/TOT estimand---when choosing the optimal covariate partition.

We use the \texttt{causal\_forest()} function to estimate CATE effects and the related \texttt{instrumental\_forest()} function to estimate CTUT and CTOT effects. 
In each case, we use the default parameter values from the \texttt{grf} package with one exception: we increase the number of trees from the default value of 2000 to 5000.\footnote{For more detail on implementation, see \cite{atheygrf} and the \texttt{grf} documentation: \url{https://grf-labs.github.io/grf/}.}
When estimating conditional average treatment effects based on our experimental data, we use observations for all borrowers who answered at least \emph{part} of the intake survey.
The survey measures used in this analysis are indicated in Table \ref{baseline_survey}.
We impute the median response for the missing values, while also including an indicator whether the variable originally had a missing value. Results are similar if we manually include interactions between the original/imputed variable and an indicator for missingness. This is as expected, as that tree-based methods automatically consider interactions of arbitrary orders.

\subsection{Inference}

We use a Bayesian bootstrap approach to carry out inference summary measures constructed from our heterogeneous treatment effect estimates in Figure \ref{choose_wrong}.\footnote{See \cite{hardle2025uniform} for uniform inference results in this setting.}
For simplicity, we describe our inference procedure in terms of the CATE, rather than separately for all of the conditional average effects we estimate, but the logic is identical in each case. 
Similarly, we focus on inference for the share with a negative conditional average effect, both because this is our main question of interest and because the procedure is identical for other thresholds: simply replace $0$ with $\delta$ in the expressions below.

As above, let $\tau(x) \equiv \mathbbm{E}[Y_1 - Y_0 \mid X = x]$ be the CATE when $X = x$. Since $X$ is a random variable in the population, $\tau(X)$ likewise has a population distribution. Denote the CDF of this distribution by $F_\tau$ and the density by $f_\tau$.
Our goal is to carry out inference for the fraction of individuals in the population who have a negative CATE: 
\[
  p \equiv F_\tau(0) = \mathbbm{P}\big( \tau(X) \leq 0 \big).
\]
Using our causal forest estimator $\widehat{\tau}(\cdot)$ of $\tau(\cdot)$ described above, we construct the following plug-in estimator of $p$
\[
  \widehat p \;=\; \frac{1}{n} \sum_{i=1}^n \mathbbm{1}\!\big\{\widehat\tau_{-i}(X_i) \le 0\big\},
\]
where the subscript $-i$ indicates that, when evaluating the estimated CATE function at $X_i$, we average only over those trees for which $X_i$ was not used in the estimation sample (out-of-bag). Our full bootstrap procedure is as follows:
\begin{enumerate}
\item \textbf{Draw cluster weights (Bayesian Bootstrap).} Let $g=1,\dots,G$ index clusters and $n_g$ be cluster sizes. Draw $(V_1,\ldots, V_G)\sim \operatorname{Dirichlet}(\alpha_1,\ldots, \alpha_G)$ with $\alpha_g=n_g$ for an \emph{individual-weighted} estimand. Set unit weights $w_i = \frac{V_{g(i)}}{n_{g(i)}}$ and normalize so that $\sum_i w_i= 1$.
\item \textbf{Estimation (keep honesty and clustering).} Fit the GRF on the original sample using the weights from Step 1 to obtain :
\[\hat{p}^{(b)}=\sum_{i=1}^n w_i 1\left\{\hat{\tau}_{(-g(i))}\left(X_i\right) \leq 0\right\}\]
\item \textbf{Repeat.} Repeat steps 1--2 for $b=1,\dots,B$ to obtain $\{\hat{p}^{(b)}\}_{b=1}^B$.
\item \textbf{Inference.} Form a confidence interval for $p$ from the empirical quantiles of $\{\hat{p}^{(b)}\}$ (\textbf{percentile Bayesian Bootstrap} CI).
\end{enumerate}

The Bayesian Bootstrap includes \emph{all} clusters every draw with randomly drawn weights, giving smoothing estimates of the target parameter across bootstrap samples, especially when $G$ is modest. These weights can be directly plugged into the \texttt{grf} estimation function via the option \texttt{sample.weights}. Our approach respects the clustered design while avoiding the instability that standard cluster resampling can introduce in tree splits and OOB predictions. While we use this approach to produce pointwise confidence intervals with valid Frequentist coverage, our inferences may also be given a Bayesian interpretation if desired: they are posterior probabilities under a nonparametric prior over the empirical support.

\section{Neoclassical and Behavioral Mechanisms} 

\subsection{Learning}
\label{append:learning}
Table \ref{learning_table} presents information about borrowers' \emph{future} pawning behavior as a function of treatment assignment. Column (1) considers the 228 clients who returned only a second time to pawn again at a day/branch that was randomly assigned to the choice arm. Each of the two rows in this column presents a difference of mean structure take-up rates, and associated standard error. The first row compares those who were \emph{initially} assigned to mandatory structure against those who where assigned to control; the second row compares those who were initially assigned to the choice arm to those who were assigned to the other two arms. In each case, there is no statistically discernible difference in the rates of structured payment take-up. Granted, this is a selected sample because the decision to pawn again is potentially endogenous to the initial treatment allocation. For this reason, Column (2) considers the full sample of 4441 borrowers by re-defining the outcome variable to be an indicator for returning to pawn again at a branch/day when structure was offered \emph{and} choosing structure. This composite outcome variable is not subject to the sample selection problem (although it is directly driven by the decision to borrow again). The comparison in the two rows remains the same: mandatory structure versus control in row one and choice versus the other two arms in row two. Again, there is no statistically discernible difference in structure take-up rates in either row. While these exercises cannot completely exclude the possibility that learning plays a role, they provide no indication that the lack of voluntary compliance is simply a matter of inexperience with structure.

\begin{table}[H]
        \caption{Effect of Prior Assignment on Subsequent Choice}
    \label{learning_table}
\begin{center}
\footnotesize{
\begin{tabular}{lcc}
\toprule
      & Choose structure in $t+1$ & Ever choose structure in $t+1$ \\
\cmidrule{2-3}$t$   & (1)   & (2) \\
\midrule
\midrule
Mandatory structured (ATE) & -0.0047 & 0.00014 \\
      & (0.048) & (0.0027) \\
Choice  (ITT) & 0.034 & 0.0015 \\
      & (0.057) & (0.0030) \\
      &       &  \\
\midrule
Observations & 228   & 4441 \\
R-sq  & 0.004 & 0.000 \\
DepVarMean & 0.092 & 0.0047 \\
\bottomrule
\bottomrule
\end{tabular}%
}
\end{center}
 \footnotesize{Column (1) reports results for the 228 borrowers who returned to pawn again at a day/branch that was randomly assigned to the choice arm, enabling us to observe whether they chose structure or the status quo contract.
Each row presents a difference in mean structure take-up rates and associated standard errors.
The first row (ATE) compares borrowers who were initially assigned to mandatory structure against those who were assigned to the control condition.
The second row (ITT) compares borrowers who were initially assigned to the choice  condition to those who were not.
Whereas column (1) conditions on the (endogenously) selected sample of borrowers who return to pawn again, column (2) considers the full sample by re-defining the ``outcome'' to be an indicator for whether a borrower pawned again on a day when choice was offered \emph{and} chose structure.}

\end{table}
\normalsize
\normalsize

\subsection{Discount rates}

\begin{figure}[H]
  \caption{Financial benefit TUT effect for different discount rates.}
    \begin{center}
        \centering
        \includegraphics[width=0.35\textwidth]{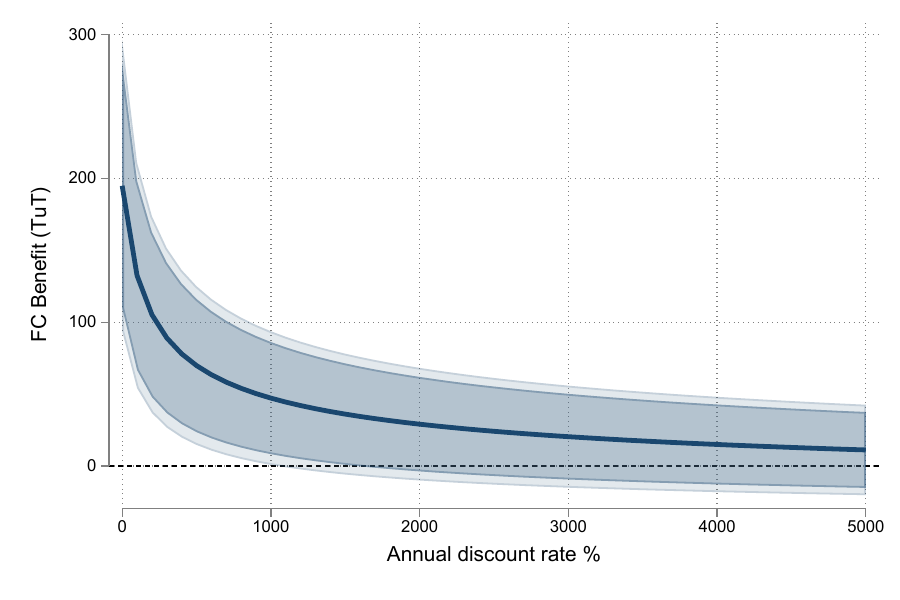}
    \end{center}
    \footnotesize{This figure re-estimates the treatment on the untreated (TUT) effect from Table \ref{tot_tut}, introducing a daily discount factor in the definition of financial benefit. At a given annual discount rate in percentage points (x-axis), the solid line gives the TUT effect with all costs and benefits converted to present value using that discount rate, and the shaded regions show 90\% and 95\% confidence bands. At a discount rate of zero, the estimate matches Table \ref{tot_tut}. As seen from the figure, borrowers would need to face implausibly high discount rates (above 1000\% annually) to reverse our headline result.} 
        \label{fc_discount_rates}
\end{figure}

\normalsize
\normalsize

\subsection{Sure Confidence} \label{App_sureconfidence}

\begin{figure}[H]
\caption{Determinants sure confidence.}
    \begin{center}
    \begin{subfigure}{0.6\textwidth}
        \centering
        \includegraphics[width=\textwidth]{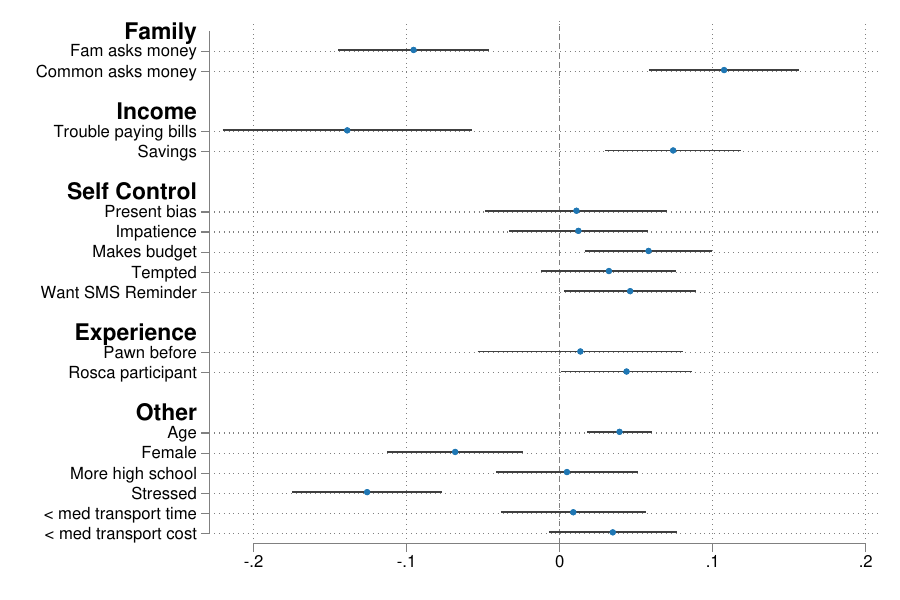}
    \end{subfigure}
    \end{center}
     \footnotesize{The above figure shows the coefficients in a multivariate OLS regression of sure confidence among borrowers in the status quo, mandatory structured, and choice arms. Sure confidence is a binary variable defined to be one when people report a 100\% probability of recovery.}
    \label{determinants_sure}
\end{figure}

\subsection{Counterfactual Outcomes for Non-Choosers}

\begin{figure}[H]
\caption{Estimating counterfactual outcomes for non-choosers, by overconfidence} 
    \begin{center}
       \begin{subfigure}{0.49\textwidth}
        \centering
        \includegraphics[width=\textwidth]{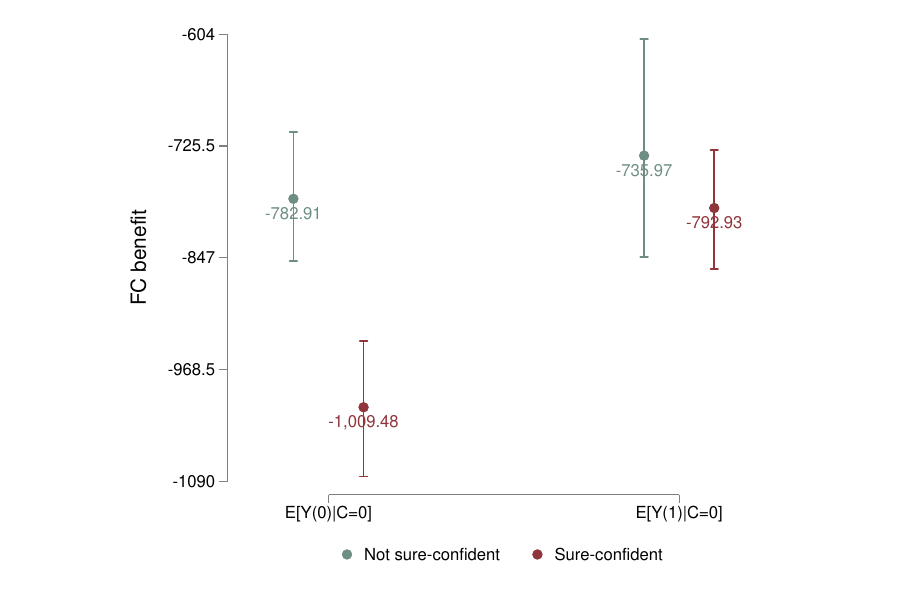}
        \caption{Mandatory structured arm}
    \end{subfigure} 
   \begin{subfigure}{0.49\textwidth}
        \centering
        \includegraphics[width=\textwidth]{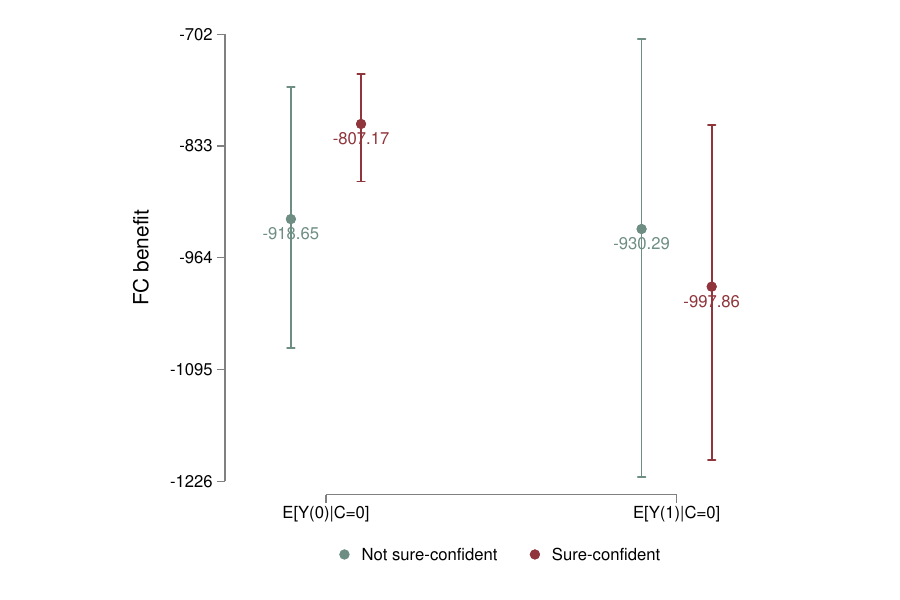}
       \caption{Promise arm}
    \end{subfigure}     
    \end{center}
      \footnotesize{This figure uses our mandates-choice design to back out counterfactual outcomes (Financial Cost) for the overconfident and the non-overconfident. Panel (a) shows the counterfactuals in the structured contract arm. Panel (b) show this counterfactuals in the promise arm.}
      \label{counterfactual_TUT}
\end{figure}

\subsection{TuT partition by behavioral variables}

\begin{figure}[H]
\caption{Differences in TUT Estimated by behavioral variables: Promise Arms} 
    \begin{center}
        \centering
\includegraphics[width=0.65\textwidth]{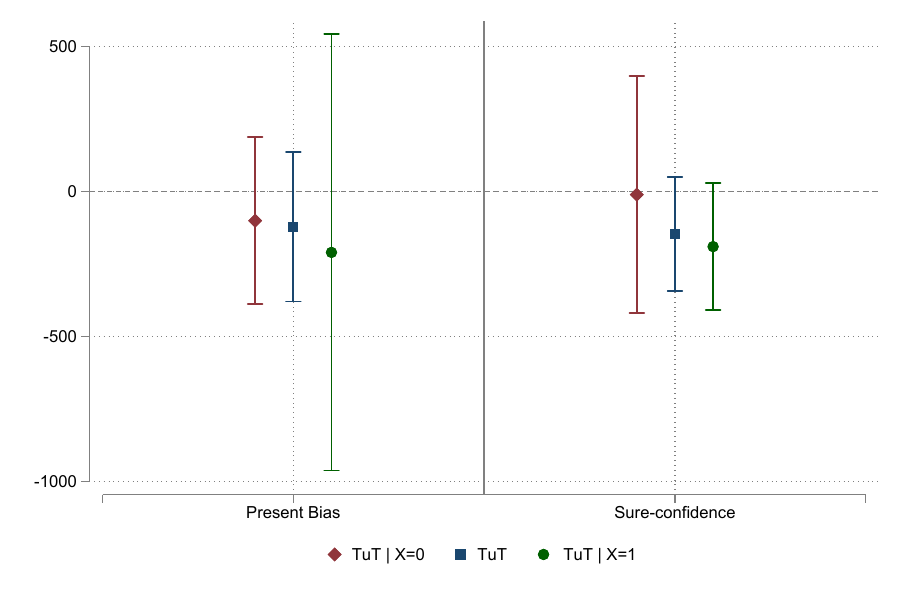} 
    \end{center}  
    \footnotesize{This figure replicates Figure \ref{tut_beh_partition} for financial cost using the Promise Arms rather than the Structured arm. Error bars show 95\% confidence intervals with standard errors clustered at the branch-day level.}
    \label{tut_beh_partition_promise}
\end{figure}

\section{The Effect of Structure on Prepayment}

\begin{figure}[H]
\caption{Comparing the Effect of Structure on Prepayment} 
    \begin{center}
       \begin{subfigure}{0.49\textwidth}
        \centering
        \includegraphics[width=\textwidth]{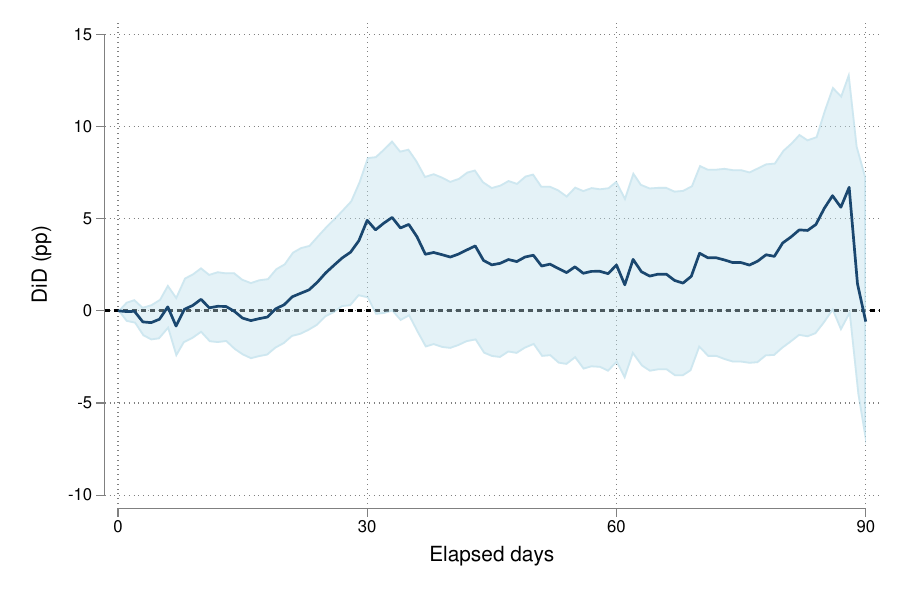}
        \caption{Cumulative prepayments}
    \end{subfigure} 
   \begin{subfigure}{0.49\textwidth}
        \centering
        \includegraphics[width=\textwidth]{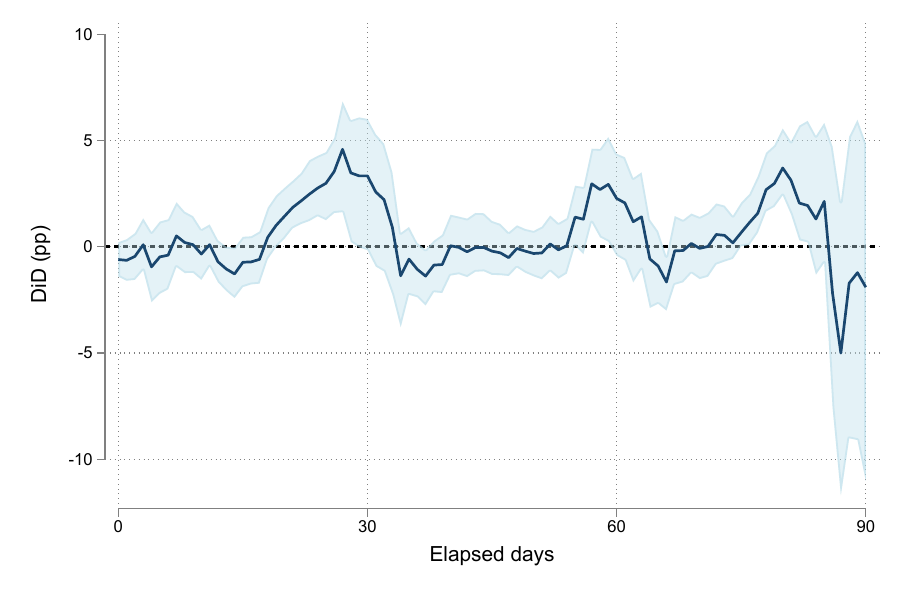}
       \caption{Prepayments within a 7-day window}
    \end{subfigure}     
    \end{center}
      \footnotesize{This figure shows that sure confident borrowers are more likely to react to mandated structure by repaying than other borrowers. Each panel presents results from a regression of a prepayment indicator on an indicator for being sure confident, an indicator for being assigned to the structured payments arm, and their interaction. Panel (a) uses ``made a prepayment by day x'' as the outcome variable, while panel (b) uses ``made a payment exceeding 33\% of the loan on at least one day within a 7-day window centered at day x''. The solid line gives the estimated coefficient for the interaction effect---the difference in causal effects of mandated structure across borrower types---while the shaded regions indicate associated pointwise 90\% confidence intervals.}
      \label{SC_prepayment}
\end{figure}

\section{A Model of Overconfidence in Pawnshop Loans} 
\label{append:model}

This appendix presents a stylized model of how borrower beliefs interact with structured repayment contracts. 
The model shows why sure confident borrowers may respond more strongly to mandated structured payments, and why they may have a correspondingly larger TUT effect.

There are three periods.
In the first period each borrower is assigned a contract: structured repayment or the status quo.
In the second period, each borrower chooses whether to make a prepayment ($P = 1$) or not ($P = 0$).
The amount of the prepayment is fixed at $k$ for all borrowers.
In the third period, the outcome is realized: the borrower either recovers her pawn ($R = 1$) or defaults ($R = 0$). 

If the borrower makes a prepayment, her objective probability of recovery is $\Pi_1$; if she does not make a prepayment, it is $\Pi_0$.
The borrower's \emph{subjective} probabilities of recovery are $\tilde{\Pi}_0$ if she does not prepay and $\tilde{\Pi}_1$ if she does.
If the borrower recovers, she experiences a net continuation utility of $\nu$. We normalize her continuation utility to zero in default. 

Prepaying $k$ in period two lowers the amount due in period three by $(1 + i)k$, where $i$ is the one period interest rate.
Prepayment is risky because all prepayments are lost in default.
Prepayment is also costly: in addition to the prepayment itself, the borrower also incurs a liquidity/hassle cost of $\mu$ for making a prepayment in period two.

The structured contract introduces a fee $f$ that the borrower must pay to recover her pawn if she \emph{does not} prepay.\footnote{As discussed in Section \ref{sec:limitations}, the ``fee'' in this model may be interpreted more broadly as a reduced form for behavioral channels that are absent from the Promise arm but present in the Structured payments arm.}
Crucially, the fee is \emph{state-contingent}: since borrowers who default cannot be made to pay, $f$ only affects those who \emph{recover} without prepaying.
Thus, a borrower who is confident that she will recover (high $\tilde{\Pi}_0$) perceives the fee as nearly certain if she forgoes prepayment, while seeing little risk of losing a prepayment in default.
This implies that borrowers most confident about recovery may be most responsive to structure.

Utility is linear with no discounting.
Let $\tilde{U}(1)$ be the borrower's subjective expected utility from prepayment ($P = 1$), which does not depend on contract type:
\[
\tilde{U}(1) = \underbrace{\tilde{\Pi}_1\left[\nu + (1 + i)k \right]}_\text{Expected Prepayment Benefit} - \underbrace{(\mu + k).}_\text{Prepayment Cost}
\]
Without prepayment ($P=0$), subjective expected utility depends on contract type: $\tilde{U}_\text{SQ}(0) = \tilde{\Pi}_0\nu$ under the status quo versus $\tilde{U}_\text{Struct}(0) = \tilde{\Pi}_0(\nu - f)$ under structure.

To analyze prepayment decisions, define the \textbf{subjective net benefit} $\tilde{B}$ as the perceived gain from prepaying under status quo, excluding the liquidity cost $\mu$:
\[
\tilde{B} \equiv \tilde{\Pi}_1[\nu + (1 + i)k] - k - \tilde{\Pi}_0 \nu.
\]
Define the \textbf{subjective penalty} $\tilde{\Delta} \equiv \tilde{U}_\text{SQ}(0) - \tilde{U}_\text{Struct}(0) = \tilde{\Pi}_0 f$ as the utility loss from not prepaying under structure.
We assume that $\tilde{\Pi}_0>0$, implying $\tilde{\Delta} > 0$.
In words: borrowers believe that recovery is possible even without prepayment. 

A borrower prepays whenever her perceived net benefit exceeds the liquidity cost $\mu$.
Under the status quo contract, this net benefit is $\tilde{B}$, so she prepays when $\tilde{B} > \mu$.
The structured contract makes prepayment more attractive: by prepaying, a borrower gains the baseline benefit $\tilde{B}$ and averts the penalty $\tilde{\Delta}$.
Thus, under the structured contract she will prepay if $\tilde{B} + \tilde{\Delta} > \mu$.
Now define the prepayment potential outcomes:
\[
P_\text{SQ} = \mathbbm{1}\{\tilde{B} > \mu\} \quad \text{and} \quad
P_\text{Struct} = \mathbbm{1}\{\tilde{B} + \tilde{\Delta} > \mu\}.
\]
Since $\tilde{\Delta}>0$, we have $P_\text{Struct} \geq P_\text{SQ}$.
In words: there are no borrowers who prepay under the status quo but not under structure.

This monotonicity result leaves us with three borrower types: \emph{never-payers} have $P_\text{SQ} = P_\text{Struct} = 0$, \emph{always-payers} $P_\text{SQ} = P_\text{Struct} = 1$ have, and \emph{switchers} have $P_\text{Struct} > P_\text{SQ}$. 
Only switchers respond to mandated structure.
By definition, switchers have liquidity costs in the interval $\mu \in [\tilde{B}, \tilde{B} + \tilde{\Delta})$.
Call this the \textbf{switcher interval}.
Since $\tilde{\Delta} = \tilde{\Pi}_0 f > 0$, the width of the switcher interval is positive; even borrowers who believe prepayment has no causal effect on recovery ($\tilde{\Pi}_1 = \tilde{\Pi}_0$) can be induced to prepay, depending on their liquidity cost.

Note that our model implies no borrower would choose structure in period one: since $\tilde{U}_\text{SQ}(0) > \tilde{U}_\text{Struct}(0)$, the status quo weakly dominates from the perspective of subjective expected utility. 
This is by design: our model applies to the 89\% of borrowers who did not choose structure in our experiment, as its purpose is to elucidate TUT effects.\footnote{The 11\% who chose structure may have considerations outside our framework, such as sophisticated time-inconsistency or a taste for novelty.}
Although these borrowers do not choose structure, this does not imply that mandating it would reduce their welfare.
Choices are based on subjective beliefs $(\tilde{\Pi}_0, \tilde{\Pi}_1)$ while welfare depends on objective probabilities $(\Pi_0, \Pi_1)$. 
When beliefs are incorrect, mandated structure can improve outcomes even for those who would never select it.

To match our comparison of TUT effects using the baseline survey, call a borrower ``sure confident'' if she is subjectively certain she will recover her pawn regardless of prepayment: $\tilde{\Pi}_1 = \tilde{\Pi}_0 = 1$.
Despite believing prepayment has no effect on recovery, sure confident borrowers may still respond to structure.
For the sure confident we have $\tilde{B}_\text{Sure} = ik, \quad \tilde{\Delta}_\text{Sure} = f$.
Thus, a sure confident borrower is a switcher if $\mu \in [ik, ik + f)$.
The switcher interval is in fact \emph{widest} for the sure confident, since $\tilde{\Delta} = \tilde{\Pi}_0 f$ is maximized when $\tilde{\Pi}_0 = 1$.
In words: sure confident borrowers expect the fee to hurt them \emph{more} because they are subjectively certain that they will recover without prepayment.
Moreover, sure confident borrowers do not perceive prepayment to be risky: since they are sure they will recover their pawn, they do not fear losing a prepayment in default, simplifying their perceived net benefit of prepayment to $ik$.
Under conditions given in Appendix \ref{append:sure-confident}, this is lower than the perceived net benefit for other borrowers.
Under plausible assumptions on the distribution of $\mu$, these two effects imply that the mass of switchers is \emph{highest} among sure confident borrowers. 
Not only do they respond to structure, they may in fact respond \emph{more} than other borrowers.

Prepayment choice depends on subjective beliefs $(\tilde{\Pi}_1, \tilde{\Pi}_0)$, but treatment effects depend on objective probabilities $(\Pi_1, \Pi_0)$. 
We assume that structure affects neither probabilities nor beliefs---it merely introduces the fee $f$. 
Thus, prepayment fully mediates the causal effect of structure on recovery and the conditional average treatment effect for a borrower with beliefs $(\tilde{\Pi}_0, \tilde{\Pi}_1)$ is
\begin{equation}
\text{CATE}(\tilde{\Pi}_0, \tilde{\Pi}_1) = \mathbbm{P}(\text{Switcher}| \tilde{\Pi}_0, \tilde{\Pi}_1) \times \mathbbm{E}(\Pi_1 - \Pi_0 | \tilde{\Pi}_0, \tilde{\Pi}_1, \text{Switcher}),
\label{eq:ate-beliefs}
\end{equation}
this is the product of the share of switchers among borrowers with these beliefs, and the average causal effect of prepayment on recovery for those switchers.\footnote{Since no one chooses structure in our model, the ATE coincides with the TUT. Only switchers contribute to treatment effects; always-payers and never-payers have zero effect.}

For sure confident borrowers to experience positive treatment effects, two conditions must hold.
First, there must be a positive mass of switchers among the sure confident.
This was established above.
Second, prepayment must improve recovery ($\Pi_1 > \Pi_0$) among the sure confident despite their belief to the contrary. 
Given that 42\% of sure confident borrowers in our experiment defaulted under the status quo, many were clearly overconfident. 
Combined with their higher share of switchers, this provides a potential explanation for why sure confident borrowers appear to have a larger TUT effect. 

Under our model, mandated structure has heterogeneous welfare effects.
Always-payers are unaffected because they prepay regardless of contract type and hence never incur the fee $f$. 
Never-payers who recover are harmed: they incur the fee $f$ because structure cannot change their prepayment decision.
For switchers, welfare effects depend on belief accuracy.
Under correct beliefs ($\tilde{\Pi}_0 = \Pi_0, \, \tilde{\Pi}_1 = \Pi_1$) mandated structure is unambiguously harmful, inducing suboptimal prepayment.
Let $B$ denote the \textbf{objective net benefit}, namely
\[
B \equiv \Pi_1[\nu + (1 + i)k] - k - \Pi_0 \nu.
\]
Under incorrect beliefs, structure is welfare-improving when $\tilde{B} < \mu < B$, that is when prepayment is optimal, but the borrower would not choose to prepay under the status quo, given her beliefs. 
The gap between subjective and objective net benefits ($\tilde{B} < B$) that is required for mandated structure to yield welfare gains can arise from overconfidence about baseline recovery ($\tilde{\Pi}_0 > \Pi_0$) or underestimation of prepayment efficacy ($\Pi_1 > \tilde{\Pi}_1$).

\section{When do the sure confident respond most?}
\label{append:sure-confident}
In this addendum to our structural model from Appendix \ref{sec:model} above, we present sufficient conditions under which the mass of ``switchers'' is \emph{highest} among sure confident borrowers. 
Suppose that the interest rate $i$, the prepayment amount $k$ and the continuation utility $\nu$ are fixed.
Let liquidity costs $\mu$ be independent of beliefs and have density $g$ on $\mathbb{R}_+$. 
For a belief type $(\tilde\Pi_0,\tilde\Pi_1)$ define the subjective net benefit $\tilde{B}$ and penalty $\tilde{\Delta}$ as above:
\[
\tilde B(\tilde\Pi_0,\tilde\Pi_1) \equiv \tilde{\Pi}_1[\nu+(1+i)k]-k-\tilde{\Pi}_0\nu,
\qquad
\tilde\Delta(\tilde\Pi_0) \equiv \tilde{\Pi}_0 f.
\]
The mass of switchers among borrowers with beliefs $(\tilde\Pi_0,\tilde\Pi_1)$ is given by
\[
s(\tilde\Pi_0,\tilde\Pi_1)
\equiv \mathbbm{P}\big(\tilde B(\tilde\Pi_0,\tilde\Pi_1)\le \mu <
\tilde B(\tilde\Pi_0,\tilde\Pi_1)+\tilde\Delta(\tilde\Pi_0)\big)
= \int_{b}^{b+\delta} g(\mu)\,d\mu,
\]
adopting the shorthand $b \equiv \tilde B(\tilde\Pi_0,\tilde\Pi_1)$ and $\delta \equiv \tilde\Delta(\tilde\Pi_0)$. 
For sure-confident borrowers $(\tilde\Pi_0,\tilde\Pi_1)=(1,1)$, so we have $b = ik$ and $\delta = f$.

\paragraph{Uniform Liquidity Costs.}
Suppose that $\mu \sim \text{Uniform}(0, M)$ where $M$ is sufficiently large that $b + \delta \leq M$ for all belief types. 
Then $s(\tilde{\Pi}_0, \tilde{\Pi}_1) = \delta / M$.
Since $\delta = \tilde{\Pi}_0 f$ is maximized when $\tilde{\Pi}_0 = 1$, it follows that $s(1, 1) \geq s(\tilde{\Pi}_0, \tilde{\Pi}_1)$ for all other belief types, with strict inequality for $\tilde{\Pi}_0 \neq 1$.

\paragraph{Decreasing Density and Belief Restriction.}
Alternatively, suppose that $g$ is strictly decreasing on $\mathbb{R}_+$ and that beliefs satisfy 
\[
(1 - \tilde{\Pi}_1)k \leq (\tilde{\Pi}_1 - \tilde{\Pi}_0) \left( \frac{1}{1 + i}\right) \nu 
\]
for all borrowers.
The left-hand side is the borrower's expected period-two loss from prepayment; the right-hand side is her expected period-two gain from prepayment.
Setting liquidity concerns aside, the harm a borrower expects from possibly losing a prepayment is no greater than the benefit she expects from making it.
Under this condition, we have $\tilde{B}(\tilde{\Pi}_0, \tilde{\Pi}_1) \geq ik$ implying there is no belief type with a lower subjective net benefit of prepayment than the sure confident.
Thus, when evaluating $\int_b^{b + \delta} g(\mu) \, d\mu$ for different belief types, $b$ is smallest and $\delta$ is largest for sure confident borrowers.
(Recall that $\delta = \tilde{\Pi}_0 f$.) 
Since $g$ is strictly decreasing, it follows that $s(1, 1) \geq s(\tilde{\Pi}_0, \tilde{\Pi}_1)$ for all other belief types. 

\section{Sure Confidence and Welfare Effects} 
\label{append:welfare}
In this section, we build on the analysis from Online Appendix \ref{append:sure-confident} to show that the aggregate welfare effect of mandated structure is more positive (or less negative) the greater the share of sure-confident borrowers in the population. Let $p$ be the share of sure-confident borrowers and $1 - p$ be the share of borrowers with correct beliefs.
Let $\Delta W_\text{Sure}$ be the expected change in welfare that a sure confident borrower experiences when structure is mandated. Define $\Delta W_\text{Correct}$ analogously for a borrower with correct beliefs. Then,  the aggregate change in expected welfare is given by $\Delta W(p) = p\Delta W_\text{Sure} + (1 - p) \Delta W_\text{Correct}$. 
Therefore $\Delta W(p)$ is strictly increasing in $p$ whenever $\Delta W_\text{Sure} > \Delta W_\text{Correct}$.

Mandating structure changes the expected utility of switchers by $B - \mu$ and that of never-payers by $-\Pi_0 f$.
Thus, under the uniform liquidity costs assumption from Online Appendix \ref{append:sure-confident}, 
\[
\Delta W_{\text{Sure}} - \Delta W_{\text{Correct}} = \frac{(1-\Pi_0)f}{M}\left[(B - ik) - \frac{(1-\Pi_0)f}{2}\right].
\]
It follows that, under uniform liquidity costs, $\Delta W(p)$ is increasing in $p$ if and only if $(B - ik) > (1 - \Pi_0) f/2$.
Recall that $B - ik$ is the wedge between the objective net benefit of prepayment, $B$, and a sure-confident borrower's \emph{perceived} benefit of prepayment, $ik$. 
Thus, the inequality requires that the fee is not too large relative to sure confident borrowers' belief distortion. 
This is plausible in our setting, given the small size of the fee and the decrease in default from mandated structure.
It can be shown that the same inequality is sufficient, although no longer necessary, if the density of liquidity costs is weakly decreasing rather than uniform.

\begingroup
  \renewcommand{\refname}{Online Appendix References}
  \putbib
\endgroup
\end{bibunit}

\clearpage

\setcounter{table}{0}
\setcounter{figure}{0}
\setcounter{section}{0}

\pagenumbering{arabic}
\renewcommand\thefigure{SA-\arabic{figure}}
\renewcommand\thetable{SA-\arabic{table}}
\renewcommand*{\thepage}{SA-\arabic{page}}
\renewcommand\thesection{Appendix \Alph{section}.}
\renewcommand\theHsection{SA.\Alph{section}}
\renewcommand\thesubsection{\Alph{section}.\arabic{subsection}}

\begin{center}
	\Large SUPPLEMENTARY APPENDIX: Structured Payment in Pawnshop Borrowing: Mandates vs.\ Choice \\[0.5em]
	\large Francis J.\ DiTraglia, Craig McIntosh, Isaac Meza, Joyce Sadka and Enrique Seira
\end{center}

\begin{bibunit}

\section{Testable Implications of the Exclusion Restriction} \label{sec:testing_exclusion}

Let $Y_0 \equiv Y(0,0)$ and $Y_1 \equiv Y(1,1)$ denote the potential outcomes under \emph{mandatory treatment}: $Y_0$ is the potential outcome when assigned to the status quo contract and $Y_1$ when mandated to the structured contract. 
Further let $Y_{0,2} \equiv Y(0,2)$ and $Y_{1,2} \equiv Y(1,2)$ denote the potential outcomes under \emph{free choice of treatment}: $Y_{0,2}$ is the potential outcome when choosing the status quo contract and $Y_{1,2}$ when choosing the structured contract. 
Using this notation, the exclusion restrictions upon which our TUT and TOT results rely are $Y_0 = Y_{0,2}$ and $Y_1 = Y_{1,2}$.\footnote{While we assume $Y_i(d,z) = Y_i(d)$ in the body of the paper for simplicity, our derivations only rely on these two equalities.}
Without imposing these, Assumption \ref{assump:randchoice}(iii) becomes 
\[
Y = \mathbbm{1}(Z =0) Y_{0} + \mathbbm{1}(Z = 1)  Y_{1}  + \mathbbm{1}(Z = 2) \left[(1 - C) Y_{0,2} + C Y_{1,2} \right]
\]
but parts (i) and (ii) continue to hold.
Accordingly, parts (i)--(iii) of Lemma \ref{lem_randchoice} are unchanged, while parts (iv) and (v) become
\[
\mathbbm{E}(Y|D=0,Z=2) = \mathbbm{E}(Y_{0,2}|C=0), \quad
\mathbbm{E}(Y|D=1,Z=2) = \mathbbm{E}(Y_{1,2}|C=1).
\]
Using these expressions, the testable restrictions we consider here are as follows:
\begin{align}
\label{eq:exclusion_append0}
\mathbbm{E}(Y_0|C=0) &= \mathbbm{E}(Y_{0,2}|C=0) \\
\label{eq:exclusion_append1}
\mathbbm{E}(Y_1|C=1) &= \mathbbm{E}(Y_{1,2}|C=1).
\end{align}
Because they refer to different groups of people--choosers versus non-choosers--either of \eqref{eq:exclusion_append0}  and \eqref{eq:exclusion_append1} could hold when the other is violated.
For this reason we consider each in turn.
Our approach is closely related to arguments from \cite{huber_mellace} and \cite{BinaryRegressor}, among others.

Consider first \eqref{eq:exclusion_append0}.
Let $p \equiv \mathbbm{P}(C=1) = \mathbbm{P}(D=1|Z=2)$ denote the share of choosers in the population.
This value is point identified regardless of whether the exclusion restriction holds.
Because $Z$ was randomly assigned, a fraction $p$ of borrowers with $Z=0$ are choosers while the remaining $(1-p)$ are non-choosers.
It follows that, regardless of whether the exclusion restriction holds, the observed distribution of $Y|Z=0$ is a mixture of $Y_0|C=0$ and $Y_0|C=1$ with mixing weights $(1-p)$ and $p$.
This allows us to construct a pair of bounds for $\mathbb{E}(Y_0|C=0)$ as follows.
The non-choosers must lie \emph{somewhere} in the distribution of $Y|Z=0$.
Consider the two most extreme possibilities: they could occupy the bottom $(1-p)\times 100\%$ of the distribution or the top $(1-p)\times 100\%$ of the distribution.
For this reason, computing the average of the \emph{truncated} distribution of $Y|Z=0$, cutting out the top $p\times 100\%$, provides a lower bound for the average of $Y_0$ among non-choosers.
Similarly, cutting out the bottom $p \times 100\%$ provides an upper bound.
Let $y^0_{1-p}$ denote the $(1-p)$ quantile of $Y|Z=0$ and $y^0_{p}$ denote the $p$ quantile of the same distribution.
Using this notation, the bounds are given by
\[
\mathbb{E}\left(Y|Z=0, Y\leq y^0_{1-p}\right)\leq \mathbb{E}(Y_0|C=0) \leq \mathbb{E}\left(Y|Z=0, Y \geq y^0_p\right) 
\]
These bounds do not rely on the exclusion restriction.
Under Equation \ref{eq:exclusion_append0}, however, we know that $\mathbb{E}(Y_0|C=0)=\mathbb{E}(Y|D=0,Z=2)$.
Therefore, if the exclusion restriction for non-choosers holds, we must have
\begin{equation}
\mathbb{E}\left(Y|Z=0, Y\leq y^0_{1-p}\right)\leq \mathbb{E}(Y|D=0,Z=2) \leq \mathbb{E}\left(Y|Z=0, Y \geq y^0_p\right).
\label{eq:testable0}
\end{equation}
Equation \ref{eq:testable0} provides a pair of testable implications of \eqref{eq:exclusion_append0}.
If either inequality is violated, then the exclusion restriction for non-choosers fails.
In our experiment, $\widehat{p} = \widehat{\mathbbm{P}}(D=1|Z=2) = 0.11$. For the CR outcome we estimate
\[
\widehat{\mathbbm{E}}(Y_\text{CR}|Z=0,Y_\text{CR}\leq y^0_{0.89}) = 0.38, \quad
\widehat{\mathbbm{E}}(Y_\text{CR}|Z=0, Y_\text{CR}\geq y^0_{0.11}) = 0.47.
\]
Since $\widehat{\mathbbm{E}}(Y_\text{CR}|D=0,Z=2) = 0.43$ falls between these bounds, we find no evidence against the exclusion restriction for non-choosers. 

We can use an analogous approach to construct testable implications for \ref{eq:exclusion_append1}, yielding 
\begin{equation}
\mathbbm{E}\left(Y|Z=1,Y\leq y^1_{p}\right) \leq \mathbbm{E}(Y|D=1,Z=2) \leq \mathbbm{E}\left(Y|Z=1,Y \geq y^1_{1-p}\right).
\label{eq:testable1}
\end{equation}
where $y^1_p$ and $y^1_{1-p}$ are the $p$ and $1 - p$ quantiles of the distribution of $Y|Z=1$.
If either inequality is violated, then the exclusion restriction from Equation \ref{eq:exclusion_append1} fails.
Again, in our experiment $\widehat{p}=0.11$. For the CR outcome we estimate
\[
\widehat{\mathbbm{E}}(Y|Z=1,Y\leq y^1_{0.11}) = 0.06, \quad
\widehat{\mathbbm{E}}(Y|Z=1,Y\geq y^1_{0.89}) = 0.82
\]
Since $\widehat{\mathbbm{E}}(Y_\text{CR}|D=1,Z=2) = 0.34$ falls between these bounds, we find no evidence against the exclusion restriction for the choosers. 
The same results hold for the financial cost outcome: results available upon request.

\section{Estimation and Inference} 
\label{sec:estimation_inference}

\subsection{Regression-based Estimation} 

In this appendix we derive simple, regression-based estimators of the TOT, TUT, ASG, ASL, and ASB effects.
Let $Z_0 \equiv \mathbbm{1}\{Z=0\}$, $Z_1 \equiv \mathbbm{1}\{Z=1\}$, and $Z_2 \equiv \mathbbm{1}\{Z=2\}$.
Under standard regularity conditions, the following proposition shows that an IV regression of $Y$ on an intercept, $Z_1$ and $Z_2 D$ with instruments $(1, Z_0, Z_1)$ provides consistent estimates the ATE and TOT, while an IV regression of $Y$ on an intercept, $-Z_0$ and $-Z_2(1-D)$ with the same instrument set consistently estimates the ATE and TUT effects.

\begin{prop} 
\label{prop:TOTandTUTregs}
Under Assumption \ref{assump:randchoice}, 
\begin{enumerate}[(i)] 
\item $Y = \mathbbm{E}(Y_0) + \text{ATE}\times Z_1 + \text{TOT} \times Z_2 D + U$
\item $Y = \mathbbm{E}(Y_1) + \text{ATE}\times -Z_0 + \text{TUT} \times -Z_2 (1 - D) + V$
\end{enumerate}
where $\mathbbm{E}(U|Z) = \mathbbm{E}(V|Z) = 0$.
\end{prop}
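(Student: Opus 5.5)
The plan is to define $U$ and $V$ as residuals and verify directly that their conditional means given $Z$ are zero. Since $Z$ takes only the three values $\{0,1,2\}$, $\mathbbm{E}(U|Z)=0$ is equivalent to $\mathbbm{E}(U|Z=z)=0$ for each $z\in\{0,1,2\}$. So for part (i) we set
\[
U \equiv Y - \mathbbm{E}(Y_0) - \text{ATE}\times Z_1 - \text{TOT}\times Z_2 D
\]
and check the three cases one at a time, using Lemma \ref{lem_randchoice} together with Assumption \ref{assump:randchoice}(ii)--(iii) to express each conditional mean of $Y$ and $D$ in terms of potential outcomes.

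For part (i), the cases run as follows. When $Z=0$ we have $Z_1=Z_2=0$, so $\mathbbm{E}(U|Z=0)=\mathbbm{E}(Y|Z=0)-\mathbbm{E}(Y_0)=0$ by Lemma \ref{lem_randchoice}(ii). When $Z=1$ we have $Z_1=1$ and $Z_2=0$, so $\mathbbm{E}(U|Z=1)=\mathbbm{E}(Y_1)-\mathbbm{E}(Y_0)-\text{ATE}=0$ by Lemma \ref{lem_randchoice}(iii). When $Z=2$ we have $D=C$, so $\mathbbm{E}(Z_2D|Z=2)=\mathbbm{P}(C=1)$ by Lemma \ref{lem_randchoice}(i). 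Moreover $Y=Y_0+C(Y_1-Y_0)$, and independence of $Z$ from $(Y_0,Y_1,C)$ gives $\mathbbm{E}(Y|Z=2)=\mathbbm{E}(Y_0)+\mathbbm{P}(C=1)\,\text{TOT}$. Subtracting $\mathbbm{E}(Y_0)+\text{TOT}\,\mathbbm{P}(C=1)$ then yields zero. Equivalently, this case is just Proposition \ref{pro:randchoice}(i) rearranged.

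Part (ii) is symmetric, with $V\equiv Y-\mathbbm{E}(Y_1)+\text{ATE}\times Z_0+\text{TUT}\times Z_2(1-D)$. At $Z=1$ the residual reduces to $\mathbbm{E}(Y_1)-\mathbbm{E}(Y_1)=0$. At $Z=0$ it is $\mathbbm{E}(Y_0)-\mathbbm{E}(Y_1)+\text{ATE}=0$. At $Z=2$ we write $Y=Y_1-(1-C)(Y_1-Y_0)$, so $\mathbbm{E}(Y|Z=2)=\mathbbm{E}(Y_1)-\mathbbm{P}(C=0)\,\text{TUT}$, while $\mathbbm{E}(Z_2(1-D)|Z=2)=\mathbbm{P}(C=0)$. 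These terms cancel, matching Proposition \ref{pro:randchoice}(ii).

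There is no real obstacle here. The only step needing care is the $Z=2$ case, where one must handle the product $C(Y_1-Y_0)$ correctly: independence gives $\mathbbm{E}[C(Y_1-Y_0)|Z=2]=\mathbbm{E}[C(Y_1-Y_0)]=\mathbbm{P}(C=1)\,\mathbbm{E}(Y_1-Y_0|C=1)$, which is the TOT by definition. The analogous computation with $1-C$ gives the TUT. This step requires $0<\mathbbm{P}(C=1)<1$ so that the TOT and TUT are well defined, which is implicit in the statement.
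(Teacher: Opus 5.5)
Your proof is correct and is essentially the paper's argument. The paper defines $U$ (and $V$) through the decomposition $Y = Y_0 + Z_1(Y_1-Y_0) + Z_2D(Y_1-Y_0)$ and shows that $\mathbbm{E}(U\mid Z)$ reduces to $Z_2\,\mathbbm{E}[C\{(Y_1-Y_0)-\text{TOT}\}\mid Z]=0$; you instead evaluate the same residual at each point of the support of $Z$, and your $Z=2$ case rests on the same independence and iterated-expectations step, where your direct use of joint independence of $Z$ and $(Y_0,Y_1,C)$ is, if anything, slightly cleaner than the paper's appeal to conditional independence given $C$.
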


\begin{proof}[Proof of Proposition \ref{prop:TOTandTUTregs}]
For part (i), since $Z_2 D = Z_2 C$ and $(Z_0 + Z_1 + Z_2) = 1$, Assumption \ref{assump:randchoice} (iii) implies $Y= Y_0 + Z_1 (Y_1 - Y_0) + Z_2D(Y_1 - Y_0)$.
Now define 
\[
U \equiv [Y_0 - \mathbbm{E}(Y_0)] + Z_1[(Y_1 - Y_0) - \text{ATE}] + Z_2 D[(Y_1 - Y_0) - \text{TOT}].
\]
Since $Z_2 D = Z_2 C$ and $Z$ is independent of $(Y_1, Y_0)$ by Assumption \ref{assump:randchoice} (i), it follows that $\mathbbm{E}(U|Z) = Z_2 \mathbbm{E}\left[C\left\{(Y_1 - Y_0) - \text{TOT}\right\}|Z\right]$.
Thus, by iterated expectations,
\begin{align*}
\mathbbm{E}\left[C\left\{(Y_1 - Y_0) - \text{TOT}\right\}|Z\right] 
&= \mathbbm{P}(C=1)\left[\mathbbm{E}(Y_1 - Y_0|C=1) - \text{TOT} \right]=0
\end{align*}
since $Z$ is independent of $(Y_0, Y_1)$ given $C$, an implication of Assumption \ref{assump:randchoice} (i).

For part (ii), since $Z_2(1 - C)= Z_2(1 - D)$ and $(Z_1 + Z_2) = 1 - Z_0$,  Assumption \ref{assump:randchoice} (iii) implies $Y= Y_1 - Z_0 (Y_1 - Y_0) - Z_2(1 - D) (Y_1 - Y_0)$.
Define 
\[
V \equiv [Y_1 - \mathbb{E}(Y_1)] - Z_0[(Y_1 - Y_0) - \text{ATE}] - Z_2(1 - D)[(Y_1 - Y_0) - \text{TUT}].
\]
Since $Z_2(1 - D) = Z_2 (1 - C)$ and $Z$ is independent of $(Y_0, Y_1)$ by Assumption \ref{assump:randchoice} (i),
$\mathbb{E}(V|Z) = -Z_2\mathbb{E}[(1 - C)\left\{(Y_1 - Y_0) - \text{TUT}\right\}|Z]$.
Thus, by iterated expectations,
\begin{align*}
\mathbb{E}[(1 - C)\left\{(Y_1 - Y_0) - \text{TUT}\right\}|Z]
&= \mathbb{P}(C=0|Z) \left[ \mathbb{E}(Y_1 - Y_0|C=0) - \text{TUT}\right] = 0
\end{align*}
since $Z$ is independent of $(Y_0, Y_1)$ given $C$, an implication of Assumption \ref{assump:randchoice} (i).
\end{proof}

Since $\text{ASG} = \text{TOT} - \text{TUT}$, the preceding proposition provides a consistent estimate of the $\text{ASG}$ effect.
The $\text{ASB}$ effect, $\mathbbm{E}(Y_0|C=1) - \mathbb{E}(Y_0|C=0)$, can likewise be estimated by taking the difference of coefficients across two linear IV regressions with \emph{no intercept} and instrument sets $(Z_0, Z_2)$, as shown in the following proposition.

\begin{prop}
\label{prop:ASBreg}
Under Assumption \ref{assump:randchoice}
\begin{enumerate}[(i)]
    \item $(1 - D) Y = \mathbbm{E}(Y_{0}) \times Z_0 + \mathbbm{E}(Y_{0}|C=0) \times (1 - D)Z_2 + U_{0}$
    \item  $(1 - D) Y = \mathbbm{E}(Y_{0}) \times(Z_0 + Z_2) + \mathbbm{E}(Y_{0}|C=1)\times -DZ_2 + U_{1}$
\end{enumerate}
where $\mathbbm{E}(U_0|Z) = \mathbbm{E}(U_1|Z) = 0$.
\end{prop}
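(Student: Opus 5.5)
The approach mirrors the proof of Proposition \ref{prop:TOTandTUTregs}: first write $(1-D)Y$ exactly in terms of potential outcomes and arm indicators, then define the error terms and verify conditional mean zero given $Z$ by iterated expectations over $C$.

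By Assumption \ref{assump:randchoice}(ii), $D = Z_1 + Z_2 C$. Hence $1-D = Z_0 + Z_2(1-C)$, using $Z_0+Z_1+Z_2=1$ and $Z_jZ_k=0$ for $j\neq k$. Combining this with part (iii), we get $(1-D)Y = Z_0 Y_0 + Z_2(1-C)Y_0$, since on $Z_1$ the factor $1-D$ vanishes and on $Z_2$ it equals $1-C$. Also $(1-D)Z_2 = (1-C)Z_2$.

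For part (i), define
\[
U_0 \equiv Z_0[Y_0 - \mathbbm{E}(Y_0)] + Z_2(1-C)[Y_0 - \mathbbm{E}(Y_0|C=0)].
\]
The identity above shows that this is exactly the residual in (i). Taking $\mathbbm{E}(\cdot|Z)$ and using the independence of $Z$ and $(Y_0,Y_1,C)$, the first term has conditional mean $Z_0\cdot 0$. The second term has conditional mean $Z_2\,\mathbbm{P}(C=0)[\mathbbm{E}(Y_0|C=0)-\mathbbm{E}(Y_0|C=0)] = 0$. Here we use that $Z$ is independent of $Y_0$ given $C$, the same implication of Assumption \ref{assump:randchoice}(i) invoked before.

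For part (ii), rewrite $Z_2(1-C)Y_0 = Z_2 Y_0 - Z_2 C Y_0$ and note that $DZ_2 = CZ_2$. Then $(1-D)Y = (Z_0+Z_2)Y_0 - DZ_2 Y_0$, and we define
\[
U_1 \equiv (Z_0+Z_2)[Y_0-\mathbbm{E}(Y_0)] - Z_2 C[Y_0 - \mathbbm{E}(Y_0|C=1)].
\]
Conditional mean zero follows in the same way. The first term has mean zero by independence. For the second, $\mathbbm{E}[C\{Y_0-\mathbbm{E}(Y_0|C=1)\}|Z] = \mathbbm{P}(C=1)[\mathbbm{E}(Y_0|C=1)-\mathbbm{E}(Y_0|C=1)] = 0$.

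The only delicate step is the bookkeeping in the algebraic identity. We need the products of arm indicators to vanish, and we need $D$ replaced by $C$ only where it is multiplied by $Z_2$. Beyond that the argument is routine.

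Subtracting the two $C$-specific coefficients then yields $\mathbbm{E}(Y_0|C=1)-\mathbbm{E}(Y_0|C=0)=\text{ASB}$. This uses the fact that the regressors in each equation are exactly identified by the instruments $(Z_0,Z_2)$: in (i), $(1-D)Z_2$ has nonzero covariance with $Z_2$ whenever $\mathbbm{P}(C=0)>0$, and in (ii), $DZ_2$ does whenever $\mathbbm{P}(C=1)>0$.
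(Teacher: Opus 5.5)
Your proof is correct and follows the paper's argument essentially step for step: the same identity $(1-D)Y = Z_0Y_0 + Z_2(1-C)Y_0$, and the same residuals $U_0, U_1$. The paper writes $Z_2(1-D)$ and $-Z_2D$ where you write $Z_2(1-C)$ and $-Z_2C$, but these are equal. Mean zero is then verified by iterated expectations over $C$, using independence of $Z$ from $(Y_0,C)$. Your closing remark on identification goes beyond the proposition and is fine, though strictly, with no intercept, what matters is nonsingularity of the raw cross-moment matrix between $(Z_0,Z_2)$ and the regressors, not a covariance. That holds when $\mathbbm{P}(Z=0)$, $\mathbbm{P}(Z=2)$ and the relevant $\mathbbm{P}(C=c)$ are all positive.
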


\begin{proof}
Assumption \ref{assump:randchoice} (ii) implies $(1 - D) = Z_0 + Z_2(1 - C)$. Hence, 
\begin{align*}
(1 - D)Y 
&= Z_0 Y_0 + Z_2 (1 - C) [(1 - C) Y_0 + C Y_1] = Z_0 Y_0 + Z_2 (1 - C) Y_0
\end{align*}
by Assumption \ref{assump:randchoice} (iii), since $Z_j^2 = Z_j$ for any $j$ and $Z_j Z_k = 0$ for any $j \neq k$ and, similarly, $(1 - C)^2 = (1 - C)$ and $C (1 - C) = 0$.
Therefore, since $Z_2 (1 - C) = Z_2 (1 - D)$,
\[
(1 - D)Y = Z_0 Y_0 + Z_2 (1 - D) Y_0, \quad
(1 - D)Y = (Z_0 + Z_2) Y_0 + (-DZ_2) Y_0. 
\]
Now, define 
\begin{align*}
U_0 &\equiv Z_0 [Y_0 - \mathbbm{E}(Y_0)] + Z_2(1 - D)[Y_0 - \mathbbm{E}(Y_0|C=0)]\\
U_1 &\equiv (Z_0 + Z_2)[Y_0 - \mathbbm{E}(Y_0)] + (-Z_2 D)[Y_0 - \mathbbm{E}(Y_0|C=1)].
\end{align*}
Since $Z_2(1-D) = Z_2(1 - C)$, and $Z$ is independent of $Y_0$,  
\begin{align*}
\mathbbm{E}(U_0|Z) 
&= Z_2 \, \mathbbm{P}(C = 0 | Z)\, \mathbbm{E}[Y_0 - \mathbbm{E}(Y_0|C=0) | C = 0, Z] = 0
\end{align*}
by iterated expectations and the fact that $Z$ is conditionally independent of $Y_0$ given $C$. 
Since $Z_2 D = Z_2 C$, a nearly identical argument gives
\begin{align*}
\mathbbm{E}(U_1|Z) 
&= -Z_2 \, \mathbbm{P}(C = 1 | Z)\, \mathbbm{E}[Y_0 - \mathbbm{E}(Y_0|C=1) | C = 1, Z] = 0. \qedhere
\end{align*}
\end{proof}

The final result in this section implies that the $\text{ASL}$ effect, $\mathbbm{E}(Y_1|C=1) - \mathbb{E}(Y_1|C=0)$, can be estimated as the difference of coefficients across two linear IV regressions with \emph{no intercept} and instrument set $(Z_1, Z_2)$.

\begin{prop}
\label{prop:ASLreg}
Under Assumption \ref{assump:randchoice},
\begin{enumerate}[(i)]
\item $D Y = \mathbbm{E}(Y_{1}) \times(Z_1 + Z_2) + \mathbbm{E}(Y_{1}|C=0)\times (D - 1)Z_2 + V_{0}$
\item  $D Y = \mathbbm{E}(Y_{1}) \times Z_1 + \mathbbm{E}(Y_{1}|C=1)\times D Z_2+ V_{1}$
\end{enumerate}
where $\mathbbm{E}(V_0|Z) = \mathbbm{E}(V_1|Z) = 0$.
\end{prop}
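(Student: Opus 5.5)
The plan mirrors the proof of Proposition \ref{prop:ASBreg}, with the roles of $D$ and $1-D$ swapped, $Y_1$ in place of $Y_0$, and $Z_1$ in place of $Z_0$. First I will write $DY$ in terms of the potential outcomes. Assumption \ref{assump:randchoice}(ii) gives $D = Z_1 + Z_2 C$. Multiplying by Assumption \ref{assump:randchoice}(iii) and using $Z_j^2 = Z_j$, $Z_jZ_k = 0$ for $j\neq k$, $C^2 = C$ and $C(1-C)=0$ yields
\[
DY = Z_1 Y_1 + Z_2 C Y_1 = Z_1 Y_1 + Z_2 D Y_1,
\]
where the last step uses $Z_2 C = Z_2 D$.

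From this I will derive two algebraic rearrangements.

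For part (ii) the representation is already in the required form: $DY = Z_1 Y_1 + (DZ_2)Y_1$.

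For part (i) I write $Z_2 D = Z_2 - Z_2(1-D)$. This gives
\[
DY = (Z_1 + Z_2)Y_1 + (D-1)Z_2 Y_1.
\]

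Next I define the error terms by centering each coefficient:
\[
V_1 \equiv Z_1[Y_1 - \mathbbm{E}(Y_1)] + DZ_2[Y_1 - \mathbbm{E}(Y_1|C=1)],
\]
\[
V_0 \equiv (Z_1+Z_2)[Y_1 - \mathbbm{E}(Y_1)] + (D-1)Z_2[Y_1 - \mathbbm{E}(Y_1|C=0)].
\]
With these definitions, the displayed equations in (i) and (ii) hold identically.

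It remains to show that $\mathbbm{E}(V_0|Z) = \mathbbm{E}(V_1|Z) = 0$.

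The terms multiplying $Z_1$ or $(Z_1+Z_2)$ have conditional mean zero because $Z$ is independent of $Y_1$ by Assumption \ref{assump:randchoice}(i).

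The remaining terms are the delicate step, since $D$ is not independent of $Y_1$. Here I replace $DZ_2$ by $CZ_2$ and $(D-1)Z_2$ by $-(1-C)Z_2$. This gives
\[
\mathbbm{E}(V_1|Z) = Z_2\,\mathbbm{P}(C=1|Z)\,\mathbbm{E}[Y_1 - \mathbbm{E}(Y_1|C=1)\,|\,C=1,Z]
\]
and
\[
\mathbbm{E}(V_0|Z) = -Z_2\,\mathbbm{P}(C=0|Z)\,\mathbbm{E}[Y_1 - \mathbbm{E}(Y_1|C=0)\,|\,C=0,Z],
\]
both by iterated expectations over $C$. Both vanish because Assumption \ref{assump:randchoice}(i) implies that $Z$ is conditionally independent of $Y_1$ given $C$, exactly as in the proof of Lemma \ref{lem_randchoice}. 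No step is a real obstacle; the only care needed is this substitution of $C$ for $D$ inside $Z_2$-weighted terms before conditioning.
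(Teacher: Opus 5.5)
Your proposal is correct and follows the paper's proof essentially line for line. It uses the same reduction $DY = Z_1 Y_1 + Z_2 C Y_1$ via $D = Z_1 + Z_2 C$, the same two rearrangements, identical definitions of $V_0$ and $V_1$, and the same iterated-expectations argument after replacing $D$ by $C$ inside the $Z_2$-weighted terms.
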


\begin{proof}
By Assumption \ref{assump:randchoice}, $D = Z_1 + Z_2 C$. Hence, by Assumption \ref{assump:randchoice} (iii),
\begin{align*}
DY 
&= Z_1 Y_1 + Z_2 C[(1 -C) Y_0 + C Y_1] = Z_1 Y_1 + Z_2 C Y_1
\end{align*}
because $Z_j^2 = Z_j$ for any $j$ and $Z_j Z_k = 0$ for any $j \neq k$ and, similarly, $(1 - C)^2 = (1 - C)$ and $C (1 - C) = 0$. Therefore, since $Z_2 (1 - C) = Z_2 (1 - D)$,
\[
 DY = (Z_1 + Z_2)Y_1 + Z_2 (D - 1) Y_1, \quad
 DY = Z_1 Y_1 + Z_2 D Y_1.
\]
Now, define
\begin{align*}
V_0 &= (Z_1 + Z_2)[Y_1 - \mathbbm{E}(Y_1)] + Z_2(D - 1)[Y_1 - \mathbbm{E}(Y_1|C=0)]\\
V_1 &= Z_1[Y_1 - \mathbbm{E}(Y_1)] + Z_2D[Y_1 - \mathbbm{E}(Y_1|C=1)].
\end{align*}
Since $Z_2(1 - D) = Z_2(1 - C)$ and $Z$ is independent of $Y_1$,
\begin{align*}
\mathbbm{E}(V_0|Z) 
&= -Z_2\, \mathbbm{P}(C = 0 | Z)\, \mathbbm{E}[Y_1 - \mathbbm{E}(Y_1|C=0)|C=0, Z] = 0
\end{align*}
by iterated expectations and the fact that  $Z$ is conditionally independent of $Y_1$ given $C$. 
Since $Z_2 D = Z_2 C$, a similar argument gives
\begin{align*}
\mathbbm{E}(V_1|Z) 
&= Z_2\, \mathbbm{P}(C = 1 | Z)\, \mathbbm{E}[Y_1 - \mathbbm{E}(Y_1|C=1)|C=1, Z] = 0. \qedhere
\end{align*}
\end{proof}

\normalsize
\normalsize

\subsection{Inference for ASG, ASB, and ASL}
\label{subsec:inference}

We now explain how to carry out cluster-robust inference for the ASG, ASB, and ASL effects, as implemented in our companion STATA package.
Each of these effects can be expressed as a difference of coefficients from two just-identified linear IV regressions.
The ASG effect is the difference of the \text{TOT} and \text{TUT} effects from Proposition \ref{prop:TOTandTUTregs}. 
Similarly, the ASB effect is the difference of $\mathbb{E}(Y_0|C=1)$ and $\mathbbm{E}(Y_0|C=0)$ from Proposition \ref{prop:ASBreg} while the ASL effect is the difference of $\mathbbm{E}(Y_1|C=1)$ and $\mathbbm{E}(Y_1|C=0)$ from Proposition \ref{prop:ASLreg}.
Within each pair of IV regressions the outcome variable and instrument set is identical; only the regressors differ. 
Since our estimators of all three effects share the same structure, our discussion abstracts from the specific regressors and instruments used in each case.

Let $g = 1, ..., G$ index clusters and $i = 1, ..., N_g$ index individuals within a particular cluster $g$. 
In our experiment, a cluster is a branch-day combination and the experimentally-assigned treatment (control, mandatory, or choice arm) is assigned at the cluster level.
We assume that observations are iid across clusters but potentially correlated within cluster.
Now consider a pair of just-identified linear IV regressions given by
$Y_{ig} = \mathbf{X}_{1,ig}' \boldsymbol{\theta}_0 + U_{ig}$ and $Y_{ig} = \mathbf{X}_{0,ig}' \boldsymbol{\theta}_1 + V_{ig}$ with common instrument vector $\mathbf{W}_{ig}$. 
Stacking observations in the usual manner, e.g.\
$\mathbf{W}_g' \equiv \begin{bmatrix}
\mathbf{W}_{1g} & \cdots & 
\mathbf{W}_{N_gg} 
\end{bmatrix}$ and 
$\mathbf{W}' = \begin{bmatrix}
\mathbf{W}_1' & \cdots & \mathbf{W}_G'
\end{bmatrix}$
we can write the preceding equations in matrix form as $\mathbf{Y} = \mathbf{X}_1\boldsymbol{\theta_1} + \mathbf{U}$ and $\mathbf{Y} = \mathbf{X}_0\boldsymbol{\theta_0} + \mathbf{V}$ with instrument matrix $\mathbf{W}$.
Now, the IV estimators for $\boldsymbol{\theta}_1$ and $\boldsymbol{\theta}_0$ can be expressed as 
\begin{align*}
\widehat{\boldsymbol{\theta}}_1 
&= \left(\mathbf{W}'\mathbf{X}_1\right)^{-1}\mathbf{W}'\mathbf{Y}  = \boldsymbol{\theta}_1 + \left(\mathbf{W}'\mathbf{X}_1\right)^{-1}\mathbf{W}'\mathbf{U}\\ 
\widehat{\boldsymbol{\theta}}_0 
&= \left(\mathbf{W}'\mathbf{X}_0\right)^{-1}\mathbf{W}'\mathbf{Y} = \boldsymbol{\theta}_0 + \left(\mathbf{W}'\mathbf{X}_0\right)^{-1}\mathbf{W}'\mathbf{V}.
\end{align*}

By our experimental design and exclusion restriction, $\mathbf{W}_{ig}$ is independent of $U_{ig}$ both unconditionally and conditional on cluster size. 
Hence, by a standard argument and under mild regularity conditions, the following expression provides a consistent, cluster robust estimator of $\widehat{\text{Avar}}(\widehat{\boldsymbol{\theta}}_1 - \widehat{\boldsymbol{\theta}}_0)$ 
\[
\widehat{\text{Avar}}(\widehat{\boldsymbol{\theta}}_1 - \widehat{\boldsymbol{\theta}}_0) 
= \begin{bmatrix}
\left(\mathbf{W}'\mathbf{X}_1\right)^{-1} &
-\left(\mathbf{W}'\mathbf{X}_0\right)^{-1} 
\end{bmatrix}
\begin{bmatrix}
\mathbf{S}_{UU}& \mathbf{S}_{UV}\\
\mathbf{S}_{UV}'& \mathbf{S}_{VV}
\end{bmatrix} 
\begin{bmatrix}
\left(\mathbf{X}_1'\mathbf{W}\right)^{-1} \\
-\left(\mathbf{X}_0'\mathbf{W}\right)^{-1} 
\end{bmatrix}
\]
where we define the IV residuals
$\widehat{\mathbf{U}}_g \equiv \mathbf{Y}_g - \mathbf{X}_{1,g}\widehat{\boldsymbol{\theta}}_1$ and
$\widehat{\mathbf{V}}_g \equiv \mathbf{Y}_g - \mathbf{X}_{0,g}\widehat{\boldsymbol{\theta}}_0$ along with the matrices
$\mathbf{S}_{UU} \equiv \sum_{g=1}^G \mathbf{W}_g' \widehat{\mathbf{U}}_g \widehat{\mathbf{U}}_g' \mathbf{W}_g$,
$\mathbf{S}_{UV} \equiv \sum_{g=1}^G \mathbf{W}_g' \widehat{\mathbf{U}}_g \widehat{\mathbf{V}}_g' \mathbf{W}_g$, and finally
$\mathbf{S}_{VV} \equiv \sum_{g=1}^G \mathbf{W}_g' \widehat{\mathbf{V}}_g \widehat{\mathbf{V}}_g' \mathbf{W}_g$.
In our application the number of clusters, $G$, is large.
If desired, an ad hoc degrees of freedom correction can be applied by multiplying the associated standard errors by $\sqrt{G/(G-1)}$.

\section{Testing for Heterogeneity and Calibration}
\label{append:het}

Here we provide details for the tests for heterogeneous treatment effects and calibration described in Section \ref{sec:RF} of the paper. 
If the treatment effects $Y_{i1} - Y_{i0}$ are constant across $i$, then we must have
\[
\text{ATE}(X_i) = \mathbbm{E}[Y_{i1} - Y_{i0}|X_i] = \mathbbm{E}[Y_{i1} - Y_{i0}] \equiv \text{ATE}
\]
for any covariates $X_i$ that vary across $i$. If, on the other hand, $\text{ATE}(X_i)$ can be predicted using some scalar function $\tau(\cdot)$ of $X_i$, then the average treatment effect function is not constant so there must be treatment effect heterogeneity.

We operationalize this idea using a two-step approach proposed by \cite{chernozhukov2018generic}. We begin by randomly dividing the participants in the forced arms of the experiment ($Z_i \neq 2$) into two groups: a training set and a test set. These sets are constructed to ensure that all observations from a given branch-day cluster are allocated to the same set. This avoids inferential problems that could arise from correlated unobservables within clusters.

In the first step, we apply the generalized random forest approach of \cite{atheygrf} to the training set to estimate two proxy predictors: $\psi(\cdot\mid\text{Training})$ approximates the untreated potential outcome function, $\mathbbm{E}[Y_{i0}\mid X_i] = \mathbbm{E}[Y_i\mid Z_i=0, X_i]$, while $\tau(\cdot\mid\text{Training})$ approximates the ATE function
\[
\text{ATE}(X_i) = \mathbbm{E}[Y_i\mid Z_i=1, X_i] - \mathbbm{E}[Y_i\mid Z_i=0, X_i].
\]
The proxy predictors need not be unbiased or even consistent estimators of the functions they aim to approximate: the goal is merely to find a scalar function of $X_i$ that \emph{accurately predicts} $\text{ATE}(X_i)$.

In the second step, we fit a linear regression model \emph{on the test set} using regressors constructed from the proxy functions $\psi(\cdot\mid\text{Training})$ and $\tau(\cdot\mid\text{Training})$ obtained in the first step. In particular, we estimate
\begin{equation}
Y_i = \alpha_0 + \alpha_1 \psi_i
+ \beta_1 \bigl(Z_i - \mathbbm{E}[Z_i]\bigr)
+ \beta_2 \bigl(Z_i - \mathbbm{E}[Z_i]\bigr)\bigl(\tau_i - \mathbbm{E}[\tau_i]\bigr)
+ \varepsilon_i,
\label{eq:chernreg}
\end{equation}
where $\psi_i \equiv \psi(X_i\mid\text{Training})$ and $\tau_i \equiv \tau(X_i\mid\text{Training})$ (in practice we implement centering with test-sample means).%
\footnote{This is a slightly simpler regression than the one proposed in equation (3.1) of \cite{chernozhukov2018generic}, which involves propensity score weights. Because the random assignment of $Z$ in our experiment does \emph{not} condition on $X$, the propensity score weights in our case are constant over $X$ and drop out.}

As shown by \cite{chernozhukov2018generic}, the coefficients $\beta_1$ and $\beta_2$ from \eqref{eq:chernreg} identify the \emph{best linear predictor} of the conditional ATE based on $\tau(\cdot\mid\text{Training})$.
If treatment effects are homogeneous we must have $\beta_2 = 0$. Rejecting this hypothesis establishes that $\tau_i$ predicts $\text{ATE}(X_i)$ and hence that $\Delta_i$ varies. Because $\tau_i$ and $\psi_i$ are learned on the training set and do not depend on the \emph{test-set outcomes}, inference for the regression in \eqref{eq:chernreg} is straightforward conditional on the Training/Test split.

Our estimate for $\beta_2$ is 1.81 with a heteroskedasticity-robust (HC3) standard error of 0.33. Thus we easily reject the null hypothesis of homogeneous treatment effects.
The same regression allows us to test the \emph{calibration} of our causal forest model. If the model is correctly calibrated, $\beta_1$ should equal the unconditional ATE. We are unable to reject the null hypothesis that the two are equal.\footnote{Using the \texttt{grf} package, we implement a version of the calibration test that is parameterized slightly differently. Under correct calibration, the parameter $\theta_1 \equiv \beta_1 / \text{ATE}$ should equal one. Our point estimate is $1.01$ with a standard error of $0.12$.}

\section{Can we target paternalism accurately?} \label{targeting}

In this appendix, we evaluate the performance of a feasible targeting rule that assigns contracts to borrowers based on their observed covariates, using the causal forest approach from Section \ref{sec:RF}.
We ask whether it is possible to achieve better aggregate outcomes by assigning structured repayment to \emph{only} those borrowers predicted to benefit from it.

The answer depends on how well our covariates predict treatment effects. But if the goal is to develop a targeting protocol that could be applied in the real world settings, we must limit ourselves to covariates that are either verifiable or hard to manipulate. This leaves us with only age, gender, high school education or above, desired loan size, and whether that individual has ever pawned before. 
We call these the ``narrow'' covariate set, to contrast them with the full set of survey variables, which we call the ``wide'' covariate set. 
Figure \ref{wide_narrow_forests} compares CATEs estimated using the wide covariate set---the same as in Figure \ref{heterogeneous_effects} from Section \ref{sec:RF}---against those estimated using only the narrow set. 

\begin{figure}[h!]
   \caption{Conditional ATEs from ``wide'' and ``narrow'' covariate sets.}
    \begin{center}
    \begin{subfigure}{0.75\textwidth}
        \centering
        \includegraphics[width=\textwidth]{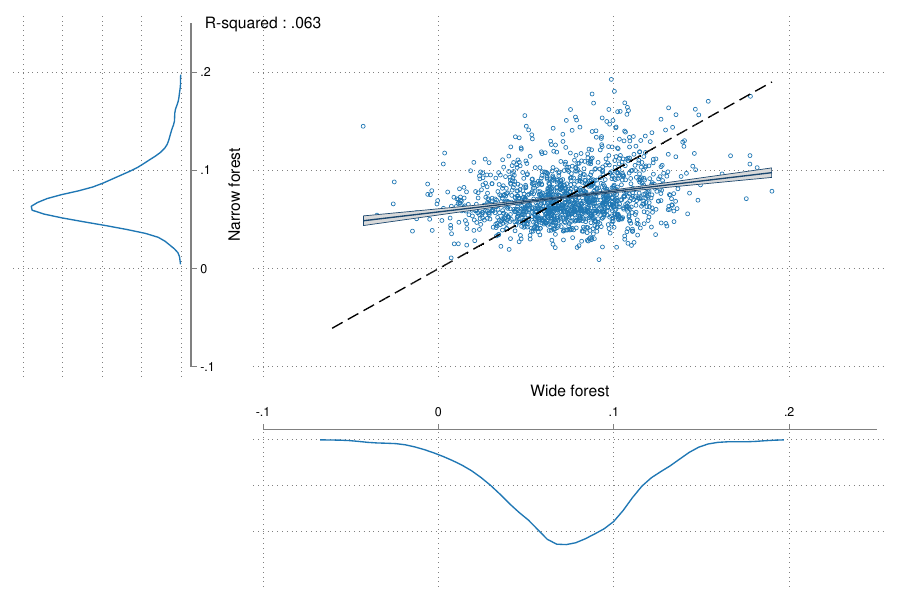}
    \end{subfigure}
    \end{center}
        \footnotesize{This figure plots the relationship between the causal forest conditional ATE estimates from Section \ref{sec:RF} that use the ``wide'' set of covariates (all intake survey responses) and those based on a restricted ``narrow'' set of covariates (age, gender, HS education, desired loan size, and previous borrowing). The outcome variable is CR benefit as a proportion of loan value, so 0.10 equals 10 percentage points, i.e.\ CR multiplied by -1.}
    \label{wide_narrow_forests}

\end{figure}

To explore the performance of targeting, we treat CATE estimates based on the wide covariate set as ground truth. We then construct a dummy variable that equals one if a given borrower has a positive CATE estimate based on the wide covariate set. We then attempt to predict this dummy variable using the narrow covariate set. We consider two approaches: a simple logistic regression model and a random forest classification model. We then consider the in-sample performance of targeting rules based on these two models against a policy of universal mandated structure.

\begin{table}[H]
\caption{Type I \& II errors using targeting narrow rules}
\label{hit_miss_rule}
\begin{center}
\footnotesize{
\begin{tabular}{lccc}
\toprule
\multicolumn{1}{c}{Rule} & \% incorrectly  & \% incorrectly  & Overall Error Rate \\
      & assigned to control  &  assigned to treatment &  \\
\midrule
\midrule
All to Status-quo & 97.16 & 0     & 97.16 \\
All to Structure & 0     & 2.84  & 2.84 \\
Optimal & 0     & 0     & 0 \\
Narrow rule (RF) & 0.27  & 0.69  & 0.96 \\
Narrow rule (Logit) & 0.49  & 0.88  & 1.38 \\
Allow choice & 99.24 & 24.88 & 90.83 \\
\bottomrule
\bottomrule
\end{tabular}%
}
\end{center}
\scriptsize{This table reports error rates for six rules that allocate borrowers to structured repayment. 
}
\end{table} 

Table \ref{hit_miss_rule} reports error rates for six rules that allocate borrowers to structured repayment. Row 1 assigns all borrowers to the status quo (control); Row 2 assigns all borrowers to structure; Row 3 implements optimal targeting based on the CATE from the wide covariate set (ground truth for this exercise); Rows 4 and 5 implement optimal targeting best on the random forest and logit classifiers, respectively, trained on the narrow covariate set; and Row 6 implements borrowers’ own choices. 
Columns (2) and (3) report mis-assignment rates: the shares incorrectly assigned to control and to treatment, respectively. For rules that fix assignments (Rows 1–5), both columns use the denominator of the sample on which that rule is estimated, so the overall error rate equals their sum. For ``Allow choice'' (Row 6), entries in Columns (2)–(3) are computed within the self-selected groups---non-choosers and choosers, respectively---so the overall error rate is the sample-share–weighted average across groups.

While the narrow RF makes relatively accurate treatment assignments, it only improves the overall correct targeting rate by about 1.9 percentage points relative to universal mandated structure. The Logit assignment rule is slightly less accurate, improving on universal mandated structure by 1.5 percentage points. Self-targeting through choice fails to deliver meaningful improvement over assigning everyone to the Status quo contract, given the low take-up rate and the presence of both Type I and Type II errors in the choice arm. Given the low take-up of structure, the high fraction of borrowers with positive estimated CATEs, and the weak predictive power of the narrow covariate set, universal mandated structure appears to be an attractive alternative to targeting in our context.

\begingroup
  \renewcommand{\refname}{Supplementary Appendix References}
  \putbib
  \clearpage
\endgroup
\end{bibunit}

\end{appendix}

\end{document}